\newtheorem{definition}{Definition}
\newtheorem{theorem}{Theorem}
\newtheorem{lemma}{Lemma}
\newtheorem{example}{Example}
\newtheorem{corollary}{Corollary}
\newtheorem{claim}{\em Claim}
\newenvironment{proof}{\noindent{\sf Proof.}}{\hfill $\boxtimes$\linebreak}
\newcommand{\qed}{\hfill $\boxtimes$\linebreak}
\renewcommand{\phi}{\varphi}
\renewcommand{\Sigma}{Sig}
\newcommand{\miniminus}{\mbox{-}}
\title{Epistemic Logic of Networks}
\title{Epistemic Logic for Networks}
\title{Epistemic Logic for Communication Networks}
\title{Knowledge over Communication Networks}
\title{Knowledge in Communication Networks}
\author{Pavel G. Naumov$^\star$ \and  Jia Tao$^\ast$}
\begin{document}

\maketitle

{\let\thefootnote\relax\footnotetext{$^\star$ Department of Computer  Science, Illinois Wesleyan University, Bloomington, Illinois, the United States, {\sf pavel@pavelnaumov.com}}

\let\thefootnote\relax\footnotetext{ $^\ast$ Department of Computer Science, The College of New Jersey, Ewing, New Jersey, the United States, {\sf taoj@tcnj.edu}}}

\begin{abstract}
The article investigates epistemic properties of information flow under communication protocols with a given topological structure of the communication network. The main result is a sound and complete logical system that describes all such properties. The system consists of a variation of the multi-agent epistemic logic S5 extended by a new network-specific Gateway axiom. 
\end{abstract}

\section{Introduction}\label{introduction}


In this article we study epistemic properties of communication protocols. Consider, for example, a protocol $\mathcal{P}_1$ between agents $p$, $q$, $u$, and $v$.
Under this protocol, agent $p$ communicates to agent $q$ a message over a secure communication channel $m$. Next, agent $q$ must communicate the same message over insecure channels to agent $u$. To achieve this, agent $q$ chooses a random one-time encryption pad (``key") and computes a ciphertext as a bit-wise sum of the message and the key modulo 2. Agent $q$ then sends the key and the ciphertext to agent $u$ over insecure channels $k$ and $c$ accordingly. Finally, agent $u$, upon receiving the key and the ciphertext, computes a bit-wise sum of these two strings modulo 2 and communicates the result over a secure channel $m'$ to agent $v$. 

\begin{figure}[ht]
\begin{center}
\vspace{3mm}
\scalebox{.6}{\includegraphics{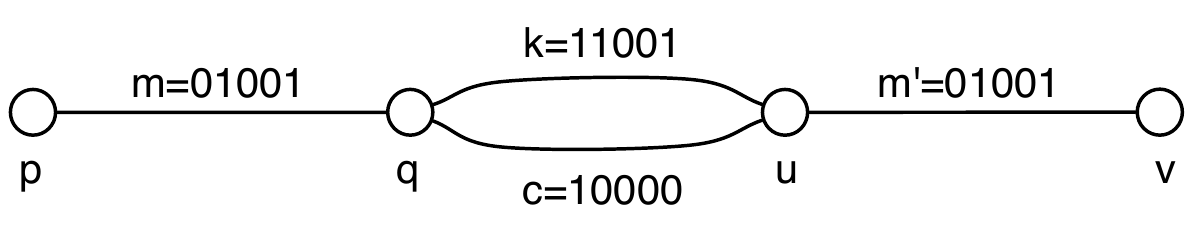}}
\vspace{0mm}
\caption{Run $r_1$ of protocol ${\cal P}_1$.}\label{example1 figure}
\vspace{0cm}
\end{center}
\vspace{0cm}
\end{figure}

A {\em run} of a protocol is an assignment of values to all communication channels that satisfy the restrictions imposed by the protocol. An example of a run $r_1$ of protocol ${\cal P}_1$ is depicted in Figure~\ref{example1 figure}. Note that for any run satisfying the restrictions of ${\cal P}_1$, the value of channel $m$ is the same as the value of channel $m'$. Thus, any outside observer who can eavesdrop on channel $m$ under run $r_1$ would be able to learn that channel $m'$ has a value of $01001$ on this run. Using epistemic modal logic notations\footnote{Similarly to Kane and Naumov~\cite{kn13tark}, we interpret modality $\Box_m$ as ``any outside observer who can eavesdrop on channel $m$ knows that \dots", instead of more traditional ``agent $m$ knows that \dots"~\cite{fhmv95}.}, we write this  as
$$r_1\Vdash \Box_{m}(m'=01001).$$

At the same time, since there is no connection between the values of the ciphertext $c$ and the original message $m$, an external observer eavesdropping on channel $c$ would not be able to deduce the value of channel $m'$:
\begin{equation}\label{c not knows}
r_1\Vdash \neg\Box_{c}(m'=01001).
\end{equation}
Similarly,
\begin{equation}\label{k not knows}
r_1\Vdash \neg\Box_{k}(m'=01001).
\end{equation}

We now consider a variation of protocol ${\cal P}_1$ that we call $\mathcal{P}_2$. Under the second protocol agents $q$ and $u$ are allowed to make a mistake in at most one bit during the encryption and the decryption stages respectively. In other words, the Hamming distance between the value of channel $c$ and the bit-wise sum of values of channels $m$ and $k$ is no more than one. Similarly, the Hamming distance between the value of channel $m'$ and the bit-wise sum of values of channels $c$ and $k$ is no more than one. An example of a run $r_2$ of protocol $\mathcal{P}_2$ is depicted on Figure~\ref{example2 figure}. 

\begin{figure}[ht]
\begin{center}
\vspace{3mm}
\scalebox{.6}{\includegraphics{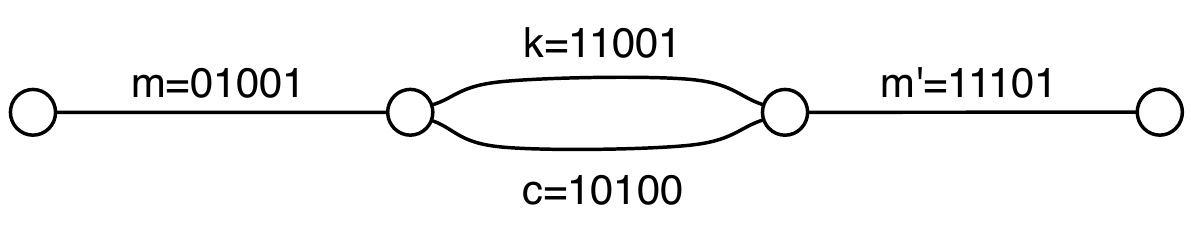}}
\vspace{0mm}
\caption{Run $r_2$ of protocol ${\cal P}_2$.}\label{example2 figure}
\vspace{0cm}
\end{center}
\vspace{0cm}
\end{figure}

Note that run $r_1$ is also a valid run of protocol $\mathcal{P}_2$. Thus, an external observer eavesdropping on channel $m$ on run $r_2$ is not able to distinguish run $r_2$ from run $r_1$. Hence, such an observer would not be able to conclude that the value of $m'$ is $11101$. Therefore, under protocol $\mathcal{P}_2$,
$$
r_2\Vdash\neg\Box_m(m'=11101).
$$
At the same time, an external observer eavesdropping on channel $m$ on run $r_2$ of protocol $\mathcal{P}_2$ should be able to conclude that the value of channel $m'$ is {\em not} $01110$ because the Hamming distance between $01001$ and $01110$ is three and, according to the restrictions of protocol $\mathcal{P}_2$, errors could be introduced in at most two bits during the encryption and the decryption stages combined:
$$
r_2\Vdash\Box_m(m'\neq 01110).
$$

We now consider another variation of protocol $\mathcal{P}_1$ that we call ${\cal P}_3$, see Figure~\ref{example3 figure}. The original message $m$ in this protocol is first encrypted into a cyphertext $c$ using a key $k$, then it is recovered as $m'$, then again encrypted and recovered as $m''$. A single bit-error could be introduced by each encryption and decryption stage. Thus, the Hamming distance between strings $m$ and $m''$ could be at most four. Figure~\ref{example3 figure} shows a possible run $r_3$ of this protocol.

\begin{figure}[ht]
\begin{center}
\vspace{3mm}
\scalebox{.6}{\includegraphics{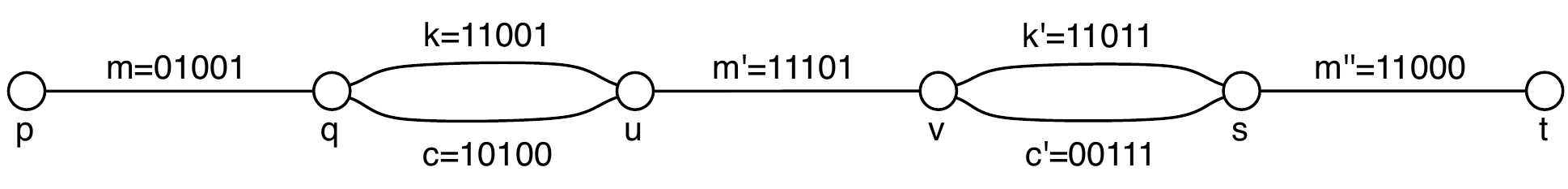}}
\vspace{0mm}
\caption{Run $r_3$ of protocol ${\cal P}_3$.}\label{example3 figure}
\vspace{0cm}
\end{center}
\vspace{0cm}
\end{figure}

An external observer eavesdropping on channel $m$ on run $r_3$ under $\mathcal{P}_3$ would not be able to know the exact value of $m'$. However, it would know that the value of channel $m'$ is at a Hamming distance no more than two from the value of $m$. Note that the Hamming distance between the value of $m$ and the string $10110$ is five. Thus, due to the triangle inequality, the observer would be able to conclude that the Hamming distance between the value of $m'$ and the string $10110$ is at least three. Based on this, the observer would be able to conclude that any other observer eavesdropping on channel $m'$ should know that the value of $m''$ is {\em not} equal to $10110$:
$$
r_3\Vdash\Box_m\Box_{m'}(m''\neq 10110).
$$

So far we have discussed epistemic properties of individual runs. A property which is true on one run does not have to be true on another. For example, the above formula $\Box_m\Box_{m'}(m''\neq 10110)$ is not true on any run in which the value of channel $m$ is $10110$. However, a similar property is true on all runs of protocol ${\cal P}_3$:
\begin{equation}\label{eq m' m''}
(m=01001)\to\Box_m\Box_{m'}(m''\neq 10110).
\end{equation}
Another property true for all runs of protocol $\mathcal{P}_3$ is
\begin{equation}\label{eq sym}
\Box_{m'}(m\neq 00000) \to \Box_{m'}(m''\neq 00000).
\end{equation}
Indeed, the assumption $\Box_{m'}(m\neq 00000)$ tells us that an observer of channel $m'$ on the current run can conclude that $m\neq 00000$. Since at most two mistakes can be introduced between channels $m$ and $m'$, we can conclude that the message that the observer sees on channel $m'$ contains at least three digits of 1. Therefore, for a similar reason, this observer will conclude that $m''\neq 00000$. 

A property true for all runs of one protocol does not have to be true for all runs of some other protocols. For example, property~(\ref{eq m' m''}) is false under protocol where up to two bits could be corrupted during each encryption and each decryption stage. Property~(\ref{eq sym}) is not true under a protocol where agents $q$ and $u$, unlike agents $s$ and $t$, are not allowed to make mistakes. 

In this article we study epistemic properties common to all protocols that have the same topological structure\footnote{As we formally define in the next section, the topological structure of a communication network is an undirected graph with multiple edges.} of communication networks. Consider, for example, property
\begin{equation}\label{eq gateway}
\Box_{m}(m''\neq 00000)\to \Box_{m'}(m''\neq 00000).
\end{equation}
We will see later in this article that this property is true for each protocol where, as in Figure~\ref{example3 figure}, communication between channels $m$ and $m''$ happens only through channel $m'$. 


The above formula~(\ref{eq gateway}) involves inequality. Neither inequality nor equality is a part of the language of our system. We only allow propositional symbols as atomic statements. An example of an epistemic property common to all protocols with the network topology depicted in Figure~\ref{example3 figure}  expressible in our language is:
\begin{equation}\label{box gateway}
\Box_{m}\Box_{m''}\phi\to\Box_{m'}\Box_{m''}\phi.
\end{equation}
Informally, this property states that if any observer eavesdropping on channel $m$ is able to deduce that any other observer eavesdropping on channel $m''$ can conclude that some property $\phi$ is true, then the same deduction can be made by any observer eavesdropping on channel $m'$ on the same run. This property, as shown in Example~\ref{example 3}, is a special case of our Gateway axiom. We prove the soundness of Gateway axiom with respect to a formally defined semantics in Section~\ref{soundness section}.

Another, perhaps surprising, example of a property common to all protocols with the network topology depicted in Figure~\ref{example3 figure} is:
\begin{equation}\label{box disjunction}
\Box_{m'}(\Box_{m}\phi\vee \Box_{m''}\psi)\to\Box_{m'}\Box_m\phi\vee\Box_{m'}\Box_{m''}\psi.
\end{equation}
Generally speaking, the knowledge of a disjunction of two formulas does not imply the knowledge of either of the two disjuncts. The above formula, however, states that this is true when the disjunct talks about the knowledge of observers located on different sides of channel $m'$. In Section~\ref{examples section}, we prove a more general form of property~(\ref{box disjunction}).

An epistemic logic for reasoning about communication graphs was proposed by Pacuit and Parikh~\cite{pp07ilgss}. Their language consists of two different modalities: an epistemic modality $K_a$ labeled by an agent $a$ and a modality $\Box$ interpreted as ``after any sequence of communications under the given protocol it is true that". They discussed logical principles specific to a given network topology and even gave, in the introduction, a principle similar to our Gateway axiom. However, they did not provide a complete axiomatization of their logical system for a specific topology, even though they proved its decidability.

Kane and Naumov~\cite{kn13tark} proposed a similar logical system whose language contains only epistemic modality. They eliminated modality ``after any sequence of communications" by assuming that all statements refer to the final knowledge after the communication. In this simplified setting they have been able to prove completeness theorem, but only for the case of linear communication networks.

This article extends Kane and Naumov's work from linear communication chains to arbitrary connected graphs. The logical system introduced in~\cite{kn13tark} contained two principles capturing topology of linear communication chains: Gateway axiom and Disjunction axiom, similar to properties~(\ref{box gateway}) and (\ref{box disjunction}) above. The more general version of Gateway axiom described in the current article no longer requires Disjunction property as a separate axiom. Instead, we prove this property from the more general version of Gateway axiom in Lemma~\ref{second vee lemma}. More importantly, the proof of the completeness theorem for non-linear graphs is completely different from the proof of completeness for linear communication chains. In the case of the proof of completeness for linear communication chains, if an observer of channel $m$ knows certain information about channel $m'$, then it is enough to simply pass this information along the interval between channels $m$ and $m'$. However, the same technique does not apply to non-linear graphs. As we have demonstrated with protocol $\mathcal{P}_1$ and properties~(\ref{c not knows}) and (\ref{k not knows}), in non-linear graphs an observer of channel $m$ might know certain information about channel $m'$ without anyone between them knowing this information. To be able to prove completeness for non-linear graphs we introduce a new {\em network flow} construction described in Section~\ref{completeness section}.

An applied value of the result in this article is in providing a uniform protocol design procedure for communication networks. Namely, suppose that one needs to design a protocol for a network that satisfies security conditions $\phi_1$,\dots, $\phi_n$ expressed in our modal language. Assume additionally that the physical layout (topological structure) of the network is given and can not be changed. In such a setting, the protocol designer should be able to either (i) derive formula $\bigwedge_{i\le n}\phi_i\to\bot$ in our logical system  and, thus, prove that the specification of the protocol can not be met, or (ii) use the construction from our proof of completeness to produce a protocol that satisfies each of the desired conditions $\phi_1$,\dots, $\phi_n$.

Tao, Slutzki, and Honavar~\cite{tsh14tocl} introduced a conceptual logical framework for answering queries without revealing secrecy to multiple querying agents where there is a set of secrets that need to be protected against each of these agents. The communication between agents is modeled using a graph. The focus of their work is on a privacy-preserving algorithm, not on an axiomatic system. 
 
This article is also related to the works on information flow on graphs~\cite{dmn11wollic,hn12jelia,mn11amai,mn11clima,mn11apal,kn14synthese}, that study properties of nondeducibility, functional dependency, common knowledge, and fault tolerance predicates. Unlike those works, this article is using a modal language.

The article is organized as follows. Section~\ref{graph section} introduces relevant terminology from graph theory. Section~\ref{semantics section} defines the formal syntax and the semantics for our logical system, which is introduced in Section~\ref{axioms section}. Section~\ref{examples section} illustrates our logical system by giving several examples of formal proofs in this system. Some of these examples are used later in the proof of completeness. The soundness of the system is established in Section~\ref{soundness section}.  The rest of the article is dedicated to the proof of completeness in Section~\ref{completeness section}. The proof starts with an informal discussion of a network flow protocol. It continues to formalize the network flow protocol as a canonical communication protocol over the graph. Finally, multiple instances of the canonical protocol are aggregated together to show the completeness of the logical system. Section~\ref{conclusion section} concludes the article.

\section{Graph Theory Preliminaries}\label{graph section}

We study epistemic properties common to all protocols with the same topology of a channel network. Under such a protocol, multiple messages can be sent over the same channel. A {\em value of a channel} is the set of all messages communicated through the channel, possibly in both directions. We specify the network topology as an undirected graph in which vertices represent agents and edges represent communication channels between agents. In this section we introduce graph terminology used throughout the rest of the article.

Graph $(V,E)$ contains a set of vertices $V$ and a set of edges $E$ with an incidence relation between them. We allow loops and  multiple edges between the same pair of vertices. We write $e\in Edge(v_0,v_1)$ to state that edge $e\in E$ is one of (possibly multiple) edges between vertices $v_0\in V$ and $v_1\in V$. By $Inc(v)$ we denote the set of all edges incident to vertex $v\in V$. By $Inc(e)$ we denote the set consisting of the two ends of edge $e\in E$. For example, $Inc(q)=\{m,k,c\}$ and $Inc(k)=\{q,u\}$ in Figure~\ref{example3 figure}.

Let $e\in E$ be an edge of a graph $(V,E)$ incident to a vertex $v\in V$. If edge $e$ is removed from the graph, remaining graph $(V,E\setminus\{e\})$ might have up to two connected components. By $C^v_{\miniminus e}$ we denote the {\em connected component} of the graph $(V,E\setminus\{e\})$ that contains vertex $v$. Note that in some cases component $C^v_{\miniminus e}$ might be equal to the entire graph $(V,E\setminus\{e\})$. For the graph in Figure~\ref{example3 figure}, component $C^u_{\miniminus m'}$ consists of vertices $p$, $q$, and $u$ as well as edges $m$, $k$, and $c$. For the same graph, component $C^u_{\miniminus k}$ contains all vertices of the original graph and all edges of that graph except for edge $k$.

A {\em path}\label{simple path} is a sequence $e_0,v_1,e_1,\dots,v_k,e_k$ such that $k\ge 0$, $e_0,\dots,e_k$ are distinct edges, and $v_1,\dots,v_k$ are distinct vertices of the graph such that $e_i,e_{i+1}\in Inc(v_{i+1})$ for each $0\le i< k$. In Figure~\ref{example3 figure}, sequence $k,u,m',v,c'$ and one-element sequence $c$ are both examples of paths. A {\em circular path} is defined similarly except for edges $e_0$ and $e_k$ being the same. 

\begin{definition}
Edge $g$ is a gateway between sets of edges $A$ and $B$ of a graph if each path that starts with an edge in set $A$ and ends with an edge in set $B$ contains the edge $g$.
\end{definition}
For example, edge $m'$ is a gateway between sets of edges $\{m,k\}$ and $\{k',c'\}$ in Figure~\ref{example3 figure}. Note that in the above definition edge $g$ can belong to either or both of the sets $A$ and $B$. In Figure~\ref{example3 figure}, edge $k$ is a gateway between singleton set $\{k\}$ and set $\{m,m''\}$. 

\section{Syntax and Semantics}\label{semantics section}

In this section we define the language and the formal semantics of our logical system. These definitions presuppose a fixed {\em signature} of the communication network.

\begin{definition}\label{signature}
A signature $\Sigma$ is an arbitrary triple $\Sigma=(V,E,\{P_e\}_{e\in E})$, such that $(V,E)$ is a connected graph and $\{P_e\}_{e\in E}$ is a family of disjoint sets of propositions.
\end{definition}
Informally, propositions in set $P_e$ are atomic statements about values of the communication channel $e$. 

Different connected components of a disconnected graph can not exchange any information between them, so, for the sake of simplicity, we have chosen to restrict our system to connected graphs. 

We next define the language of our logical system. 

\begin{definition}
For every signature $\Sigma$, let $\Phi(\Sigma)$ be the minimal set of formulas such that
\begin{enumerate}
\item $\bot\in \Phi(\Sigma)$,
\item $P_e\subseteq \Phi(\Sigma)$ for every $e\in E$,
\item if $\phi,\psi\in \Phi(\Sigma)$, then $\phi\to\psi\in\Phi(\Sigma)$,
\item if $e\in E$ and $\phi\in\Phi(\Sigma)$, then $\Box_e\phi\in\Phi(\Sigma)$.
\end{enumerate}
\end{definition}
We assume that connectives $\neg$, $\wedge$, and $\vee$ are defined through $\rightarrow$ and $\bot$ in the usual way.

 Informally, a protocol is specified by giving a range of values\footnote{Each value represents the collection of all messages sent through the channel on a given run.} for each edge (``communication channel") and establishing dependencies between the values of the edges. These dependencies are ``enforced" by vertices (``agents"), and, thus, each such condition only involves edges incident to a vertex. For this reason we refer to these conditions as ``local". For example, for protocol $\mathcal{P}_1$ in the introduction, the local condition enforced by vertex $q$ is $c=m\oplus k$, where $m\oplus k$ is a {\em bit-wise exclusive or} of binary strings transmitted over channels $m$ and $k$. For protocols $\mathcal{P}_2$ and $\mathcal{P}_3$, the local condition at vertex $q$ is $h(c,m\oplus k)\le 1$, where $h(\cdot,\cdot)$ denotes the Hamming distance between any two binary strings of the same length. The local condition for vertex $p$ under all three of the above protocols is the constant {\em true}. In the formal definition below, a local condition is treated not as a Boolean function but rather as a set of tuples on which this function is true.
 
Recall that each atomic proposition $p$ in set $P_e$ is viewed as proposition ``about" the value of channel $e$. In what follows, by $\pi(p)$ we informally mean the set of all values of channel $e$ for which proposition $p$ is true.
 

\begin{definition}\label{protocol}
A protocol over a signature $(V,E,\{P_e\}_{e\in E})$ is a tuple $(\{W_e\}_{e\in E}$, $\{L_v\}_{v\in V}$, $\pi)$ such that
\begin{enumerate}
\item for every edge $e\in E$, set $W_e$ is an arbitrary set of values,
\item for every $v\in V$, set $L_v\subseteq \prod_{e\in Inc(v)}W_e$ specifies local conditions at vertex $v$,
\item for every $p\in P_e$, function $\pi$ is such that $\pi(p) \subseteq W_e$. We denote $\pi(p)$ by $p^\pi$.
\end{enumerate}
\end{definition}

\begin{definition}
A run of a protocol $(\{W_e\}_{e\in E},\{L_v\}_{v\in V}, \pi)$ is an arbitrary tuple $\langle w_e\rangle_{e\in E} \in \prod_{e\in E}W_e$ such that $\langle w_e\rangle_{e\in Inc(v)}\in L_v$ for every $v\in V$.
\end{definition}

\begin{definition}\label{tworun}
For any two tuples $r=\langle w_e\rangle_{e\in E}$ and $r'=\langle w'_e\rangle_{e\in E}$ and any $f\in E$, we write $r=_f r'$ if $w_f=w'_f$. 
\end{definition}

\begin{corollary}
Relation $r=_e r'$ is an equivalence relation. \qed
\end{corollary}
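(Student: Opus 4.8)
The plan is to unwind Definition~\ref{tworun} and observe that $r =_e r'$ holds exactly when the two tuples agree on their $e$-th coordinate, i.e.\ when $w_e = w'_e$. Since this is nothing more than the ordinary equality relation on the set $W_e$ pulled back along the projection map $\langle w_f\rangle_{f\in E}\mapsto w_e$, the claimed properties are inherited directly from the corresponding properties of equality.

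Concretely, I would verify the three conditions in turn. For reflexivity, for any run $r=\langle w_f\rangle_{f\in E}$ we have $w_e = w_e$, hence $r =_e r$. For symmetry, if $r =_e r'$ then $w_e = w'_e$, so $w'_e = w_e$, hence $r' =_e r$. For transitivity, if $r =_e r'$ and $r' =_e r''$ then $w_e = w'_e$ and $w'_e = w''_e$, so $w_e = w''_e$ by transitivity of equality, hence $r =_e r''$.

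There is no real obstacle here; the only thing worth stating explicitly is that the relation is genuinely well defined on runs (or on arbitrary tuples in $\prod_{f\in E}W_f$), which is immediate since each such tuple has a uniquely determined $e$-component. The statement is a corollary precisely because it follows with no work once the definition is spelled out, so the ``proof'' is really just this three-line bookkeeping.
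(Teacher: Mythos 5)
Your verification is correct and is exactly the intended argument: the paper leaves this corollary unproved because it is immediate from Definition~\ref{tworun}, namely that $r=_e r'$ is just equality of the $e$-th components, so reflexivity, symmetry, and transitivity are inherited from ordinary equality as you spell out.
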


The formal semantics of our logical system is defined in terms of runs of a protocol, rather than in more common terms of epistemic worlds of a Kripke model. Note, however, that any protocol can be viewed as a Kripke model in which runs of the protocol are epistemic worlds and equality of runs on a given channel $c$ is the indistinguishability relation $\sim_c$ on epistemic worlds.

\begin{definition}\label{sat}
For every signature $\Sigma=(V,E,\{P_e\}_{e\in E})$, every $\phi\in\Phi(\Sigma)$, every protocol $\mathcal{P}=(\{W_e\}_{e\in E},\{L_v\}_{v\in V},\pi)$ over graph $(V,E)$, and every run $r=\langle w_e\rangle_{e\in E}$ of $\mathcal{P}$, relation $r\Vdash \phi$ is defined recursively as:
\begin{enumerate}
\item $r\nVdash\bot$,
\item $r\Vdash p$ if $w_e\in p^\pi$, where $p\in P_e$,
\item $r\Vdash \psi\to\chi$ if $r\nVdash \psi$ or $r\Vdash \chi$,
\item $r\Vdash \Box_e\psi$ if $r'\Vdash\psi$ for every run $r'$ of $\mathcal{P}$ such that $r'=_e r$. 
\end{enumerate}
\end{definition}

For any signature $\Sigma$ and any set of edges $T$, by $\Phi(\Sigma,T)$ we mean the set of all formulas in $\Phi(\Sigma)$ in which all outermost modalities are labeled only by edges in $T$ and all atomic propositions outside of scopes of all modalities belong to $\bigcup_{t\in T}P_t$. For example, $\Box_a\Box_b\phi\to\Box_c\psi\in \Phi(\Sigma,\{a,c\})$. Also, if $p\in P_a$ and $q\in P_b$, then $\Box_b p \to q\in \Phi(\Sigma,\{b\})$. We use this notation to state our Gateway axiom in the next section. Below is the formal definition of this notation.

\begin{definition}\label{phi(S,A)}
For every signature $\Sigma=(V,E,\{P_e\}_{e\in E})$ and every $T\subseteq E$, let $\Phi(\Sigma,T)$ be the minimal set of formulas such that
\begin{enumerate}
\item $\bot\in\Phi(\Sigma,T)$,
\item $P_t\subseteq \Phi(\Sigma,T)$ for every $t\in T$,
\item if $\phi,\psi\in \Phi(\Sigma,T)$, then $\phi\to\psi\in\Phi(\Sigma,T)$,
\item if $t\in T$ and $\phi\in\Phi(\Sigma)$, then $\Box_t\phi\in\Phi(\Sigma,T)$.
\end{enumerate}
\end{definition}
Note that in item 4 above, formula $\phi$ is an element of set $\Phi(\Sigma)$ rather than set $\Phi(\Sigma,T)$.

\section{Logical System}\label{axioms section}

In this section we specify the axioms and the inference rules of our logical system for a given signature $\Sigma=(V,E,\{P_e\}_{e\in E})$. Our logical system, in addition to propositional tautologies in language $\Phi(\Sigma)$, contains the following axioms:
\begin{enumerate}
\item Truth: $\Box_e\phi\to\phi$, where $\phi\in\Phi(\Sigma)$,
\item Positive Introspection: $\Box_e\phi\to\Box_e\Box_e\phi$,  where $\phi\in\Phi(\Sigma)$,
\item Negative Introspection: $\neg\Box_e\phi\to\Box_e\neg\Box_e\phi$,  where $\phi\in\Phi(\Sigma)$,
\item Distributivity: $\Box_e(\phi\to\psi)\to(\Box_e\phi\to\Box_e\psi)$,  where $\phi, \psi\in\Phi(\Sigma)$,
\item Gateway: $\Box_e(\phi\to\psi)\to(\phi\to\Box_g\psi)$, where $e\in A$, $\phi\in \Phi(\Sigma,A)$, $\psi\in \Phi(\Sigma,B)$, and edge $g$ is a gateway between sets of edges $A\subseteq E$ and $B\subseteq E$. 
\end{enumerate}
 Note that axioms of Truth, Positive Introspection, Negative Introspection, and Distributivity are identical to the corresponding axioms of multi-agent epistemic logic S5. Thus, our logical system can be viewed as an extension of S5 by Gateway axiom. 
 
 \begin{figure}[ht]
\begin{center}
\vspace{3mm}
\scalebox{.6}{\includegraphics{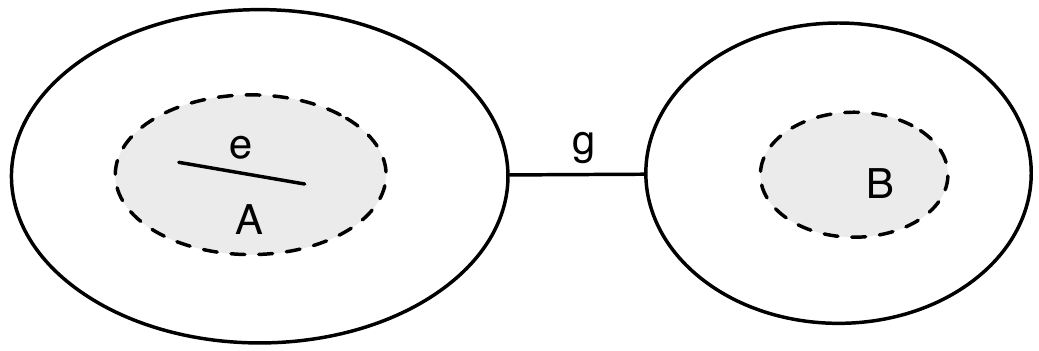}}
\vspace{0mm}
\caption{Edge $g$ is a gateway between sets of edges $A$ and $B$.}\label{gateway figure}
\vspace{0cm}
\end{center}
\vspace{0cm}
\end{figure}
 
Figure~\ref{gateway figure} illustrates the setting for Gateway axiom. To explain the intuition behind Gateway axiom, let us first consider the special case of this axiom when formula $\phi$ is a propositional tautology. In this case, Gateway axiom can be reduced to $\Box_e\psi\to\Box_g\psi$, which means that if an agent eavesdropping on channel $e$ knows something about the channels in set $B$, then an agent eavesdropping on gateway channel $g$ must also know this. Intuitively, this claim is true because the information about channels in set $B$ can only reach the observer of channel $e$ by flowing through the gateway channel $g$. However, to the best of our knowledge, Gateway axiom in this reduced form $\Box_e\psi\to\Box_g\psi$ does not yield a complete logical system. To achieve the completeness, we need a slightly more general principle that takes into account the ``local" information about channels on the same side of the gateway as channel $e$. In Gateway axiom $\Box_e(\phi\to\psi)\to(\phi\to\Box_g\psi)$ the local information is captured by formula $\phi$.

We write $\vdash_{Sig}\phi$ if formula $\phi$ is provable in our logical system for signature $Sig$ using Modus Ponens and Necessitation inference rules:
$$
\dfrac{\phi, \hspace{1cm} \phi\to\psi}{\psi}\hspace{2cm}\dfrac{\phi}{\Box_e\phi}
$$
where $\phi,\psi\in\Phi(\Sigma)$ and $e\in E$. We write $X\vdash_{Sig}\phi$ if formula $\phi$ is provable in our logical system from the set of assumptions $X$ using only Modus Ponens rule. We omit subscript $Sig$ when its value is clear from the context.

\section{Examples}\label{examples section}

The soundness and the completeness of our logical system will be established in the next two sections. In this section we give several examples of formal proofs in this system. Among these examples there are several lemmas that will be used later in the proof of completeness. 

\begin{figure}[ht]
\begin{center}
\vspace{3mm}
\scalebox{.6}{\includegraphics{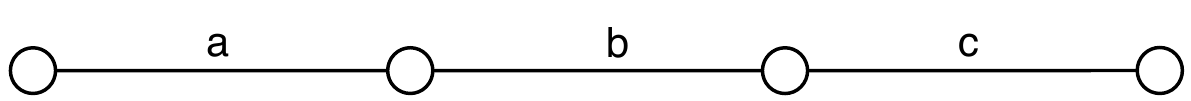}}
\vspace{0mm}
\caption{Three-Channel Linear Communication Network.}\label{example4 figure}
\vspace{0cm}
\end{center}
\vspace{0cm}
\end{figure}

\begin{example}
For any signature $\Sigma=(V,E,\{P_e\}_{e\in E})$ and any $\phi\in \Phi(\Sigma)$ where $(V,E)$ is the graph depicted in Figure~\ref{example4 figure}, 
$$\vdash_{Sig}\Box_a(\Box_b\phi\vee \Box_c\phi)\to\Box_b\phi.$$ 
In other words, if an observer eavesdropping on channel $a$ knows that an observer eavesdropping on channel $b$ knows $\phi$ or an observer eavesdropping on channel $c$ knows $\phi$, then the observer eavesdropping on channel $b$ must know $\phi$. 
\end{example}
\begin{proof}
Formula $\Box_c\phi\to\phi$ is an instance of Truth axiom. Thus, by Necessitation inference rule, $\vdash\Box_b(\Box_c\phi\to\phi)$. Hence, by Distributivity axiom and Modus Ponens inference rule, 
\begin{equation}\label{example2 eq}
\vdash\Box_b\Box_c\phi\to\Box_b\phi.
\end{equation}
At the same time note that edge $b$ is a gateway between sets $\{a,b\}$ and $\{c\}$. Additionally, $\neg\Box_b\phi\in\Phi(\Sigma,\{a,b\})$ and $\Box_c\phi\in\Phi(\Sigma,\{c\})$. Thus, by Gateway axiom,
$
\vdash \Box_a(\neg\Box_b\phi\to\Box_c\phi)\to(\neg\Box_b\phi\to\Box_b\Box_c\phi).
$
Hence, using statement~(\ref{example2 eq}) and the laws of propositional logic,
$
\vdash \Box_a(\neg\Box_b\phi\to\Box_c\phi)\to(\neg\Box_b\phi\to\Box_b\phi).
$
Note that formula $(\neg\Box_b\phi\to\Box_b\phi)\to\Box_b\phi$ is a propositional tautology. Thus,
$
\vdash \Box_a(\neg\Box_b\phi\to\Box_c\phi)\to\Box_b\phi
$. Finally, recall that disjunction $\Box_b\phi\vee\Box_b\phi$ is an abbreviation for $\neg\Box_b\phi\to\Box_b\phi$. Therefore, $\vdash \Box_a(\Box_b\phi\vee\Box_c\phi)\to\Box_b\phi$.
\end{proof}

\begin{figure}[ht]
\begin{center}
\vspace{3mm}
\scalebox{.6}{\includegraphics{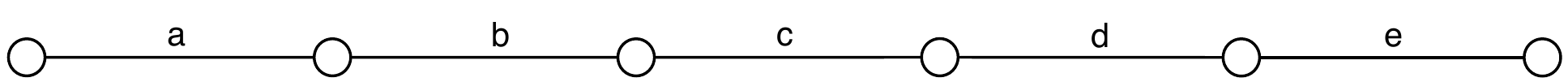}}
\vspace{0mm}
\caption{Five-Channel Linear Communication Network.}\label{example5 figure}
\vspace{0cm}
\end{center}
\vspace{0cm}
\end{figure}

In what follows, we denote by $\top$ the propositional tautology $\bot\to\bot$. 

\begin{example}
For any signature $\Sigma=(V,E,\{P_e\}_{e\in E})$ and any $\phi\in \Phi(\Sigma)$ where $(V,E)$ is the graph depicted in Figure~\ref{example5 figure}, 
$$\vdash_{\Sigma}\Box_a\Box_e\Box_c\phi\to\Box_b\Box_d\phi.$$ 
\end{example}
\begin{proof}
By Truth axiom, $\vdash\Box_c\phi\to\phi$. Thus, $\vdash\Box_d(\Box_c\phi\to\phi)$ by Necessitation inference rule. Hence, by Distributivity axiom and Modus Ponens rule,
\begin{equation}\label{eq dc}
\vdash\Box_d\Box_c\phi\to\Box_d\phi.
\end{equation}

At the same time, formula $\Box_c\phi\to(\top\to\Box_c\phi)$ is a propositional tautology. Thus, by Necessitation rule, $\vdash\Box_e(\Box_c\phi\to(\top\to\Box_c\phi))$. By Distributivity axiom and Modus Ponens inference rule,
\begin{equation}\label{eq ec}
\vdash\Box_e\Box_c\phi\to\Box_e(\top\to\Box_c\phi).
\end{equation}
Similarly, one can show that
\begin{equation}\label{eq ad}
\vdash \Box_a\Box_d\phi\to\Box_a(\top\to\Box_d\phi).
\end{equation}
Since edge $d$ is a gateway between the sets of edges $\{e\}$ and $\{c\}$, $\top\in\Phi(\Sigma,\{e\})$, and $\Box_c\phi\in\Phi(\Sigma,\{c\})$, by Gateway axiom,
$
\vdash\Box_e(\top\to\Box_c\phi)\to(\top\to\Box_d\Box_c\phi).
$
Hence, using statement~(\ref{eq dc}), statement~(\ref{eq ec}) and the propositional reasoning, 
$
\vdash\Box_e\Box_c\phi\to\Box_d\phi.
$
Thus, by Necessitation inference rule,
$
\vdash\Box_a(\Box_e\Box_c\phi\to\Box_d\phi).
$
Then, by Distributivity axiom and Modus Ponens inference rule,
\begin{equation}\label{eq ec2}
\vdash\Box_a\Box_e\Box_c\phi\to\Box_a\Box_d\phi.
\end{equation}
Since edge $b$ is a gateway between sets of edges $\{a\}$ and $\{d\}$, $\top\in\Phi(\Sigma,\{a\})$, and $\Box_d\phi\in\Phi(\Sigma,\{d\})$, by Gateway axiom,
$
\vdash\Box_a(\top\to\Box_d\phi)\to(\top\to\Box_b\Box_d\phi).
$
Therefore, using statement~(\ref{eq ad}), statement (\ref{eq ec2}), and the propositional reasoning,
$\vdash\Box_a\Box_e\Box_c\phi\to\Box_b\Box_d\phi$.
\end{proof}

We next prove formula~(\ref{box gateway}) stated in Section~\ref{introduction}. 

\begin{example}\label{example 3}
For any signature $\Sigma=(V,E,\{P_e\}_{e\in E})$ and any $\phi\in \Phi(\Sigma)$, where $G=(V,E)$ is the graph depicted in Figure~\ref{example3 figure}, 
$$\vdash_{Sig}\Box_m\Box_{m''}\phi\to\Box_{m'}\Box_{m''}\phi.$$ 
\end{example}
\begin{proof} 
Formula $\Box_{m''}\phi \to (\top \to \Box_{m''}\phi)$  
is a propositional tautology in language $\Phi(\Sigma)$. 
Thus, by Necessitation inference rule, we have 
$\vdash \Box_{m}(\Box_{m''}\phi \to (\top \to \Box_{m''}\phi))$. 
By Distributivity axiom and Modus Ponens inference rule,
\begin{equation}\label{example1 eq}
\vdash \Box_{m}(\Box_{m''}\phi) \to \Box_{m}(\top \to \Box_{m''}\phi)).
\end{equation}

Note now that edge $m'$ is a gateway between sets of edges $\{m\}$ and $\{m''\}$. Also, $\top\in \Phi(\Sigma,\{m\})$ and $\Box_{m''}\phi\in \Phi(\Sigma,\{m''\})$. Thus, by Gateway axiom,
$
\vdash_G \Box_{m}(\top \to \Box_{m''}\phi)\to(\top \to \Box_{m'}\Box_{m''}\phi).
$
Hence, using statement~(\ref{example1 eq}), by the laws of propositional logic,
$\vdash_G \Box_{m}\Box_{m''}\phi\to(\top \to \Box_{m'}\Box_{m''}\phi).$
Therefore, again using propositional logic,
$
\vdash_G \Box_{m}\Box_{m''}\phi\to \Box_{m'}\Box_{m''}\phi.
$
\end{proof}

Instead of proving property~(\ref{box disjunction}) from the introduction, in Lemma~\ref{second vee lemma} we prove a slightly more general statement that later will be used in the proof of completeness. The proof of Lemma~\ref{second vee lemma} relies on the  following auxiliary lemma. Figure~\ref{gateway figure} illustrates the settings of both of these lemmas. 

\begin{lemma}\label{vee lemma}
$\vdash \Box_e(\phi\vee\psi)\to(\phi\vee\Box_g\psi)$, where
 edge $g$ is a gateway between sets of edges $A$ and $B$, $e\in A$, $\phi\in \Phi(\Sigma,A)$, and $\psi\in \Phi(\Sigma,B)$.
\end{lemma}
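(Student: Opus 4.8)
The plan is to recognize that this lemma is essentially Gateway axiom in disguise once disjunction is unfolded into implication. Recall that in our language $\phi\vee\psi$ is an abbreviation for $\neg\phi\to\psi$, i.e. for $(\phi\to\bot)\to\psi$. So the statement $\Box_e(\phi\vee\psi)\to(\phi\vee\Box_g\psi)$ is literally the formula $\Box_e(\neg\phi\to\psi)\to(\neg\phi\to\Box_g\psi)$.

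First I would check that $\neg\phi\in\Phi(\Sigma,A)$. This is immediate from Definition~\ref{phi(S,A)}: by item~1, $\bot\in\Phi(\Sigma,A)$, and since $\phi\in\Phi(\Sigma,A)$ by hypothesis, item~3 gives $\phi\to\bot\in\Phi(\Sigma,A)$, that is, $\neg\phi\in\Phi(\Sigma,A)$. With this in hand, the hypotheses of Gateway axiom are satisfied with the role of its ``$\phi$'' played by $\neg\phi$: we have $e\in A$, $\neg\phi\in\Phi(\Sigma,A)$, $\psi\in\Phi(\Sigma,B)$, and $g$ a gateway between $A$ and $B$. Hence Gateway axiom yields
$$
\vdash \Box_e(\neg\phi\to\psi)\to(\neg\phi\to\Box_g\psi).
$$
Rewriting $\neg\phi\to\psi$ as $\phi\vee\psi$ and $\neg\phi\to\Box_g\psi$ as $\phi\vee\Box_g\psi$ using the abbreviations, this is exactly $\vdash \Box_e(\phi\vee\psi)\to(\phi\vee\Box_g\psi)$.

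I do not expect any real obstacle here: the whole content is the observation that $\Phi(\Sigma,A)$ is closed under negation (so that applying Gateway with $\neg\phi$ is legitimate), after which the lemma is a one-line instance of Gateway axiom together with the standard definitional unfolding of $\vee$. If one prefers to avoid relying on the abbreviation directly inside the axiom schema, one can instead invoke Gateway to get $\vdash\Box_e(\neg\phi\to\psi)\to(\neg\phi\to\Box_g\psi)$ and then finish by pure propositional reasoning, using the Necessitation and Modus Ponens rules only where needed to pass between $\Box_e(\phi\vee\psi)$ and $\Box_e(\neg\phi\to\psi)$; but since these are the same formula by definition, no extra step is actually required.
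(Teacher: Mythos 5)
Your proof is correct and matches the paper's own argument: unfold $\phi\vee\psi$ as $\neg\phi\to\psi$ and observe the result is an instance of Gateway axiom. Your explicit check that $\neg\phi\in\Phi(\Sigma,A)$ is a small extra verification the paper leaves implicit, but the route is the same.
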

\begin{proof}
Recall that $\phi\vee\psi$ is an abbreviation for $\neg\phi\to\psi$. Thus, we need to show that $\vdash \Box_e(\neg\phi\to\psi)\to(\neg\phi\to\Box_g\psi)$, which is an instance of Gateway axiom. 
\end{proof}

\begin{lemma}\label{second vee lemma}
$\vdash \Box_g(\phi\vee\psi\vee\chi)\to(\phi\vee\Box_g\psi\vee\Box_g\chi)$, where edge $g$ is a gateway between sets $A$ and $B$, $\phi\in\Phi(\Sigma,\{g\})$,  $\psi\in\Phi(\Sigma,A)$, and $\chi\in\Phi(\Sigma,B)$.
\end{lemma}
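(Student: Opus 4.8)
The plan is to peel off one disjunct at a time using Lemma~\ref{vee lemma}, which is the two-disjunct version of exactly this statement. We are given $\vdash \Box_g(\phi\vee\psi\vee\chi)$ as the antecedent, and we want to conclude $\phi\vee\Box_g\psi\vee\Box_g\chi$. Read $\phi\vee\psi\vee\chi$ as $(\phi\vee\psi)\vee\chi$. The key observation is that $g$ is a gateway between $A$ and $B$, hence (trivially) a gateway between $A\cup\{g\}$ and $B$, and also between $\{g\}$ and $B$; moreover $\phi\in\Phi(\Sigma,\{g\})\subseteq\Phi(\Sigma,A\cup\{g\})$ and $\psi\in\Phi(\Sigma,A)\subseteq\Phi(\Sigma,A\cup\{g\})$, so $\phi\vee\psi\in\Phi(\Sigma,A\cup\{g\})$, while $\chi\in\Phi(\Sigma,B)$. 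Since $g\in A\cup\{g\}$, Lemma~\ref{vee lemma} applies with $e:=g$, the ``left'' edge-set $A\cup\{g\}$, and the ``right'' edge-set $B$, giving
$$\vdash \Box_g\big((\phi\vee\psi)\vee\chi\big)\to\big((\phi\vee\psi)\vee\Box_g\chi\big).$$
Combined with the antecedent and Modus Ponens, this yields $\vdash\Box_g(\phi\vee\psi\vee\chi)\to(\phi\vee\psi\vee\Box_g\chi)$.

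Next I would handle the remaining disjunction $\phi\vee\psi$. From $\Box_g(\phi\vee\psi\vee\chi)$, by Truth axiom we still have the same formula, but I want $\Box_g(\phi\vee\psi)$ under an extra hypothesis; instead it is cleaner to apply Lemma~\ref{vee lemma} a second time directly to a boxed disjunction we can obtain. Apply Necessitation-free reasoning: from $\vdash\Box_g(\phi\vee\psi\vee\chi)\to(\phi\vee\psi\vee\Box_g\chi)$ we cannot immediately re-box. So instead I re-run Lemma~\ref{vee lemma} on $\Box_g(\phi\vee\psi\vee\chi)$ itself but grouping the other way: write $\phi\vee\psi\vee\chi$ as $\psi\vee(\phi\vee\chi)$ — no, the disjuncts must lie in $\Phi(\Sigma,A)$ and $\Phi(\Sigma,B)$ respectively, and $\phi\vee\chi$ straddles the gateway. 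The correct second move is: since $\phi\in\Phi(\Sigma,\{g\})$, inside the scope of $\Box_g$ we may use Positive Introspection to get $\vdash\Box_g(\phi\vee\psi\vee\chi)\to\Box_g\Box_g(\phi\vee\psi\vee\chi)$, then push the inner formula through the already-proved implication via Distributivity and Necessitation:
$$\vdash\Box_g\Box_g(\phi\vee\psi\vee\chi)\to\Box_g(\phi\vee\psi\vee\Box_g\chi).$$
Now apply Lemma~\ref{vee lemma} to $\Box_g(\phi\vee\psi\vee\Box_g\chi)=\Box_g\big((\phi\vee\Box_g\chi)\vee\psi\big)$, with left set $\{g\}$ containing $\phi$ and $\Box_g\chi$ (note $\Box_g\chi\in\Phi(\Sigma,\{g\})$ by Definition~\ref{phi(S,A)}(4)), right set $A$ containing $\psi$, and $e:=g$; since $g$ is a gateway between $\{g\}$ and $A$ this gives $\vdash\Box_g\big((\phi\vee\Box_g\chi)\vee\psi\big)\to\big((\phi\vee\Box_g\chi)\vee\Box_g\psi\big)$. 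Chaining all implications by propositional reasoning yields $\vdash\Box_g(\phi\vee\psi\vee\chi)\to(\phi\vee\Box_g\chi\vee\Box_g\psi)$, which is the claim up to commutativity of $\vee$.

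The step I expect to be the main obstacle — really the only subtle point — is getting the gateway membership bookkeeping exactly right at each application of Lemma~\ref{vee lemma}: one must check that the ``mixed'' disjunct lies in $\Phi(\Sigma,\cdot)$ for the side-set containing $e$, which forces $\phi$ (and later $\Box_g\chi$) to sit in $\Phi(\Sigma,\{g\})$, and that $g$ remains a gateway between the enlarged or shrunk edge-sets. The fact that a gateway between $A$ and $B$ is also a gateway between any $A'\subseteq A\cup\{g\}$ containing $g$ and any $B'\subseteq B\cup\{g\}$ is immediate from the definition (every $A'$–$B'$ path is an $A$–$B$ path once we note $g$ itself is a one-edge path), so no real difficulty arises; it is purely a matter of stating these inclusions carefully. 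Everything else is routine propositional logic plus Necessitation, Distributivity, and Positive Introspection.
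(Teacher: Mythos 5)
Your proof is correct and follows essentially the same route as the paper's: one application of Lemma~\ref{vee lemma} with the gateway between $A\cup\{g\}$ and $B$ to peel off $\Box_g\chi$, re-boxing the resulting implication via Necessitation, Distributivity, and Positive Introspection, and then a second application of Lemma~\ref{vee lemma} with the gateway between $\{g\}$ and $A$, using $\phi\vee\Box_g\chi\in\Phi(\Sigma,\{g\})$, to peel off $\Box_g\psi$. The only cosmetic point is your equality $\Box_g(\phi\vee\psi\vee\Box_g\chi)=\Box_g\bigl((\phi\vee\Box_g\chi)\vee\psi\bigr)$: with $\vee$ as an abbreviation these are not syntactically identical, and the paper handles this by commuting the disjuncts with propositional reasoning \emph{before} applying Necessitation and Distributivity, which is how that step should be discharged.
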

\begin{proof}
Note first that $g$ is a gateway between sets $A\cup\{g\}$ and $B$. Thus, by Lemma~\ref{vee lemma}, 
$$
\vdash \Box_g(\phi\vee\psi\vee\chi)\to\phi\vee\psi\vee\Box_g\chi.
$$
Hence, by the laws of propositional logic,
$$
\vdash \Box_g(\phi\vee\psi\vee\chi)\to\phi\vee\Box_g\chi\vee\psi.
$$
By Necessitation inference rule,
$$
\vdash \Box_g(\Box_g(\phi\vee\psi\vee\chi)\to\phi\vee\Box_g\chi\vee\psi).
$$
By Distributivity axiom and Modus Ponens rule,
$$
\vdash \Box_g\Box_g(\phi\vee\psi\vee\chi)\to\Box_g(\phi\vee\Box_g\chi\vee\psi).
$$
By Positive Introspection axiom,
\begin{equation}\label{tuesday2}
\vdash \Box_g(\phi\vee\psi\vee\chi)\to\Box_g(\phi\vee\Box_g\chi\vee\psi).
\end{equation}
Second, note that edge $g$ is also a gateway between sets $\{g\}$ and $A$. Thus, again by Lemma~\ref{vee lemma},
$$
\vdash\Box_g(\phi\vee\Box_g\chi\vee\psi)\to \phi\vee\Box_g\chi\vee\Box_g\psi.
$$
Hence, taking into account statement~(\ref{tuesday2}),
$$
\vdash \Box_g(\phi\vee\psi\vee\chi)\to\phi\vee\Box_g\chi\vee\Box_g\psi,
$$
which by the laws of propositional logic is equivalent to
$$
\vdash \Box_g(\phi\vee\psi\vee\chi)\to\phi\vee\Box_g\psi\vee\Box_g\chi.
$$
\end{proof}

Next, we continue with two more auxiliary lemmas.  Lemma~\ref{XYZ} is also  used in the proof of completeness. Lemma~\ref{pre-XYZ} is referred to in the proof of Lemma~\ref{XYZ}.

\begin{lemma}\label{pre-XYZ}
$\vdash\phi\to\Box_e\varphi$ for each $\phi\in \Phi(\Sigma,\{e\})$. 
\end{lemma}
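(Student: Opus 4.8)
The plan is to obtain the statement directly from a single instance of the Gateway axiom, with no induction on the structure of $\phi$. The crucial observation is that edge $e$ is a gateway between the singleton sets $A=\{e\}$ and $B=\{e\}$: a path that starts with an edge in $\{e\}$ starts with $e$ and a path that ends with an edge in $\{e\}$ ends with $e$, but the edges along any path are pairwise distinct, so a path whose first and last edges are both $e$ must be the one-edge path consisting of $e$ alone, which of course contains $e$. Hence every path from $\{e\}$ to $\{e\}$ contains $e$, as required by the definition of gateway.

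Given this, since $\phi\in\Phi(\Sigma,\{e\})$ by hypothesis, I would instantiate the Gateway axiom with $A=B=\{e\}$, gateway edge $g=e$, and with $\phi$ itself playing the role of both the antecedent ``local'' formula (which must lie in $\Phi(\Sigma,A)=\Phi(\Sigma,\{e\})$) and the formula $\psi$ (which must lie in $\Phi(\Sigma,B)=\Phi(\Sigma,\{e\})$); note also that $e\in\{e\}=A$, so all side conditions are met. The axiom then gives
$$\vdash\Box_e(\phi\to\phi)\to(\phi\to\Box_e\phi).$$

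It remains only to discharge the antecedent. Since $\phi\to\phi$ is a propositional tautology in $\Phi(\Sigma)$, the Necessitation rule yields $\vdash\Box_e(\phi\to\phi)$, and one application of Modus Ponens produces $\vdash\phi\to\Box_e\phi$, which is exactly the claim (the symbols $\phi$ and $\varphi$ being the same after the renaming in the preamble). The only step requiring a moment's thought is the verification that $e$ is a gateway between $\{e\}$ and $\{e\}$; this is the ``main obstacle'', though a very mild one, and everything else is a routine use of the axioms and rules already available. An alternative proof by induction on the structure of $\phi$ — using Truth, Distributivity, Negative Introspection, and the Gateway instance above applied to atomic propositions $p\in P_e$ — also works, but it is unnecessary.
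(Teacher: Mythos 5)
Your proposal is correct and is essentially the paper's own proof: note that $e$ is a gateway between $\{e\}$ and $\{e\}$, apply the Gateway axiom to the instance $\Box_e(\phi\to\phi)\to(\phi\to\Box_e\phi)$, and discharge the antecedent via Necessitation on the tautology $\phi\to\phi$ followed by Modus Ponens. The only addition is your explicit verification of the gateway condition, which the paper states without comment.
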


\begin{proof}
Formula $\phi\to\phi$ is a tautology. Thus, by Necessitation inference rule, $\vdash\Box_e(\phi\to\phi)$. Note that $e$ is a gateway between sets $\{e\}$ and $\{e\}$. By Gateway axiom, $\vdash\Box_e(\phi\to\phi)\to(\phi\to\Box_e\phi)$. Therefore, $\vdash \phi\to\Box_e\phi$.
\end{proof}

\begin{lemma}\label{XYZ}
If $X\subseteq \Phi(\Sigma,\{e\})$ and $\phi\in\Phi(\Sigma)$, then $X\vdash\phi$ implies $X\vdash\Box_e\phi$.
\end{lemma}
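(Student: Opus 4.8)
The plan is to argue by induction on the length of a derivation of $\phi$ from $X$ that uses only the Modus Ponens rule. Concretely, I will show that \emph{every} formula $\psi$ occurring in such a derivation satisfies $X\vdash\Box_e\psi$; applying this to the last formula $\phi$ of the derivation then yields the lemma. The point of phrasing the induction this way is that it avoids having to invoke a deduction theorem: the Necessitation rule is only ever applied to theorems of the pure system, never "from assumptions".

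There are three cases, according to how $\psi$ enters the derivation. First, if $\vdash\psi$ (in particular if $\psi$ is an instance of one of our axioms or a propositional tautology), then by the Necessitation inference rule $\vdash\Box_e\psi$, and hence trivially $X\vdash\Box_e\psi$. Second, if $\psi\in X$, then, since $X\subseteq\Phi(\Sigma,\{e\})$, we have $\psi\in\Phi(\Sigma,\{e\})$, so Lemma~\ref{pre-XYZ} gives $\vdash\psi\to\Box_e\psi$; combining this with $X\vdash\psi$ via Modus Ponens yields $X\vdash\Box_e\psi$. Third, if $\psi$ is obtained by Modus Ponens from earlier formulas $\chi$ and $\chi\to\psi$, then by the induction hypothesis $X\vdash\Box_e\chi$ and $X\vdash\Box_e(\chi\to\psi)$; the Distributivity axiom $\Box_e(\chi\to\psi)\to(\Box_e\chi\to\Box_e\psi)$ together with two applications of Modus Ponens then gives $X\vdash\Box_e\psi$.

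The argument is essentially routine, and there is no real obstacle; the only place where the hypotheses are used is the second case, where membership $\psi\in\Phi(\Sigma,\{e\})$—guaranteed precisely by $X\subseteq\Phi(\Sigma,\{e\})$—is exactly what makes Lemma~\ref{pre-XYZ} applicable. (Alternatively, one could first invoke a deduction theorem to get $\vdash(\psi_1\wedge\dots\wedge\psi_n)\to\phi$ for some $\psi_1,\dots,\psi_n\in X$, apply Necessitation and Distributivity, and finish with Lemma~\ref{pre-XYZ} applied to the conjunction $\psi_1\wedge\dots\wedge\psi_n\in\Phi(\Sigma,\{e\})$; the direct induction above is cleaner since it does not presuppose the deduction theorem.)
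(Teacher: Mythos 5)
Your proof is correct, and it reaches the same three ingredients the paper uses (Necessitation applied only to theorems, the Distributivity axiom, and Lemma~\ref{pre-XYZ} to convert assumptions $\psi\in X\subseteq\Phi(\Sigma,\{e\})$ into $X\vdash\Box_e\psi$), but it organizes them differently. The paper extracts a finite subset $\psi_1,\dots,\psi_n$ of $X$, invokes the deduction theorem for propositional logic to get $\vdash\psi_1\to(\psi_2\to\dots(\psi_n\to\phi)\dots)$, applies Necessitation once to this single theorem, and then peels off the antecedents with $n$ applications of Distributivity and Modus Ponens before finishing with Lemma~\ref{pre-XYZ}; your parenthetical remark is essentially this argument. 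Your main argument instead runs an induction along the derivation of $\phi$ from $X$, handling theorem lines by Necessitation, assumption lines by Lemma~\ref{pre-XYZ}, and Modus Ponens steps by Distributivity. The case split matches the paper's definition of $X\vdash\phi$ (only Modus Ponens is applied from assumptions), so the induction is sound. What your route buys is self-containment: it does not presuppose the deduction theorem, whose validity here itself rests on the fact that Necessitation is never applied to formulas derived from assumptions. What the paper's route buys is brevity: a single appeal to a standard propositional fact replaces the bookkeeping of an induction over derivations.
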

\begin{proof}
Suppose that $X\subseteq \Phi(\Sigma,\{e\})$ and $X\vdash\phi$ where $\phi\in\Phi(\Sigma)$, then there is a finite subset $\{\psi_1, \psi_2, \dots, \psi_n\}$ of $X$ such that $\psi_1, \psi_2, \dots, \psi_n\vdash \phi$. Hence, by Deduction theorem for propositional logic, we have $\vdash \psi_1\to (\psi_2\to\dots(\psi_n\to\phi)\dots)$. By Necessitation rule, $\vdash \Box_e(\psi_1\to (\psi_2\to\dots(\psi_n\to\phi)\dots))$. Applying Distributivity axiom and Modus Ponens $n$ times, we have 
$\Box_e\psi_1,\Box_e\psi_2,\dots,\Box_e\psi_n\vdash\Box_e\phi$. Hence, by Lemma~\ref{pre-XYZ}, $\psi_1,\psi_2,\dots,\psi_n\vdash\Box_e\phi$. Therefore, $X\vdash\Box_e\phi$.
\end{proof}

\section{Soundness}\label{soundness section}

In this section we prove the soundness of our logical system with respect to runs of a protocol $\cal P$ over a signature $\Sigma=(V,E,\{P_e\}_{e\in E})$. The soundness of propositional tautologies and Modus Ponens inference rule is straightforward. Below we prove the soundness of Necessitation inference rule and of each axiom as a separate lemma. 

\begin{lemma}[Necessitation]
If $e\in E$ and $r\Vdash\phi$ for each run $r$ of protocol $\cal P$, then $r\Vdash\Box_e\phi$ for each run $r$ of protocol $\cal P$.
\end{lemma}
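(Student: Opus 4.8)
The plan is to unfold the semantic clause for the box modality from Definition~\ref{sat} and observe that the hypothesis is strictly stronger than what is required. First I would fix an arbitrary run $r$ of $\mathcal{P}$ and recall that, by item~4 of Definition~\ref{sat}, establishing $r\Vdash\Box_e\phi$ amounts to showing $r'\Vdash\phi$ for every run $r'$ of $\mathcal{P}$ such that $r'=_e r$. The hypothesis of the lemma already yields $r'\Vdash\phi$ for \emph{every} run $r'$ of $\mathcal{P}$; in particular this holds for all runs $r'$ satisfying the additional constraint $r'=_e r$. Hence $r\Vdash\Box_e\phi$, and since $r$ was an arbitrary run of $\mathcal{P}$, the conclusion follows.

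I do not anticipate any real obstacle: the argument invokes only the recursive clause for $\Box_e$, and does not even use reflexivity of $=_e$ (which is available from the corollary following Definition~\ref{tworun}). The single point worth stating carefully is that the universal quantifier in the hypothesis ranges over all runs of the \emph{same} protocol $\mathcal{P}$, so it lines up exactly with the quantifier appearing inside the semantic clause for $\Box_e$; no change of protocol or signature is involved.
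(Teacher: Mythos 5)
Your argument is correct and coincides with the paper's own proof: both simply unfold the semantic clause for $\Box_e$ from Definition~\ref{sat} and note that the hypothesis, quantifying over all runs of $\mathcal{P}$, in particular covers every run $r'$ with $r'=_e r$. Nothing further is needed.
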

\begin{proof}
Let $r$ be a run of protocol $\cal P$. To show that $r\Vdash\Box_e\phi$, consider any run $r'$ of protocol $\cal P$ such that $r'=_e r$. It is sufficient to prove that $r'\Vdash\phi$, which is true due to the assumption of the lemma.  
\end{proof}

\begin{lemma}[Truth]\label{truth}
For every $e\in E$, every formula $\phi\in\Phi(\Sigma)$, and every run $r$  of protocol $\cal P$,
if $r\Vdash\Box_e\phi$, then $r\Vdash\phi$.
\end{lemma}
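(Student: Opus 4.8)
The plan is to unfold Definition~\ref{sat} and use reflexivity of the relation $=_e$. The statement $r\Vdash\Box_e\phi$ means, by item~4 of Definition~\ref{sat}, that $r'\Vdash\phi$ for every run $r'$ of $\cal P$ with $r'=_e r$. So it suffices to observe that $r$ itself is such a run.

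First I would note that $r$ is a run of $\cal P$ by hypothesis. Next, by Definition~\ref{tworun}, $r=_e r$ amounts to $w_e=w_e$ where $r=\langle w_e\rangle_{e\in E}$, which holds trivially; alternatively one may cite the Corollary stating that $=_e$ is an equivalence relation, hence reflexive. Therefore $r$ qualifies as a witness $r'$ in the clause above, and we conclude $r\Vdash\phi$.

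There is no real obstacle here: the argument is a one-line appeal to reflexivity, and the only thing to be careful about is quoting the correct clause of the semantics (item~4) and the correct earlier fact (reflexivity of $=_e$). This lemma corresponds exactly to the familiar reflexivity-of-the-accessibility-relation justification of the $\mathbf{T}$ axiom in modal logic.

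\begin{proof}
Let $r=\langle w_e\rangle_{e\in E}$ be a run of $\cal P$ with $r\Vdash\Box_e\phi$. By item~4 of Definition~\ref{sat}, $r'\Vdash\phi$ for every run $r'$ of $\cal P$ such that $r'=_e r$. Since $=_e$ is an equivalence relation, it is reflexive, so $r=_e r$. Taking $r'$ to be $r$, we obtain $r\Vdash\phi$.
\end{proof}
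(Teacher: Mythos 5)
Your proof is correct and matches the paper's argument exactly: both unfold item~4 of Definition~\ref{sat} and instantiate $r'$ with $r$ itself, using reflexivity of $=_e$. Nothing is missing.
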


\begin{proof}
Assume that $r\Vdash\Box_e\phi$. Thus, by Definition~\ref{sat}, $r'\Vdash\phi$ for every run $r'$ of protocol $\cal P$ such that $r'=_e r$. In particular, $r\Vdash\phi$. 
\end{proof}

\begin{lemma}[Positive Introspection]
For every $e\in E$, every formula $\phi\in\Phi(\Sigma)$, and every run $r$ of protocol $\cal P$,
if $r\Vdash\Box_e\phi$, then $r\Vdash\Box_e\Box_e\phi$.
\end{lemma}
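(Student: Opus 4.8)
The plan is to unfold Definition~\ref{sat} twice and appeal to the fact, recorded in the Corollary immediately after Definition~\ref{tworun}, that $=_e$ is an equivalence relation — in particular that it is transitive. This is the standard argument showing that an S5-style indistinguishability relation validates Positive Introspection, and nothing about the network structure plays any role here.

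Concretely, I would fix a run $r$ of $\cal P$ and assume $r\Vdash\Box_e\phi$. To establish $r\Vdash\Box_e\Box_e\phi$, by item~4 of Definition~\ref{sat} it suffices to take an arbitrary run $r'$ of $\cal P$ with $r'=_e r$ and show $r'\Vdash\Box_e\phi$. Applying item~4 of Definition~\ref{sat} once more, this reduces to taking an arbitrary run $r''$ of $\cal P$ with $r''=_e r'$ and showing $r''\Vdash\phi$. Now by transitivity of $=_e$, from $r''=_e r'$ and $r'=_e r$ we obtain $r''=_e r$. Hence the assumption $r\Vdash\Box_e\phi$, together with item~4 of Definition~\ref{sat}, gives $r''\Vdash\phi$, as required.

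There is no real obstacle: the only thing being used beyond the definition of satisfaction is transitivity of the relation $=_e$, which is already available. If one wanted to be maximally careful, the one point worth stating explicitly is that $r'$ and $r''$ range over runs of the \emph{same} protocol $\cal P$, so that the quantifier in item~4 of Definition~\ref{sat} applied at $r$ indeed covers $r''$; this is immediate since $r''$ was chosen to be a run of $\cal P$.
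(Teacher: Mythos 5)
Your proposal is correct and follows essentially the same argument as the paper's own proof: unfold item~4 of Definition~\ref{sat} twice and use transitivity of $=_e$ (the chain $r''=_e r'=_e r$) to conclude $r''\Vdash\phi$ from $r\Vdash\Box_e\phi$. No differences worth noting.
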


\begin{proof}
Assume that $r\Vdash\Box_e\phi$. Let $r'$ be any run of protocol $\mathcal{P}$ such that $r'=_e r$. We need to show that $r'\Vdash\Box_e\phi$. Consider any run $r''$ of protocol $\mathcal{P}$ such that $r''=_e r'$. We need to show that $r''\Vdash\phi$. Indeed, $r''=_e r'=_e r$ due to the choice of $r'$ and $r''$. Hence, $r''\Vdash\phi$ by the assumption $r\Vdash\Box_e\phi$.
\end{proof}

\begin{lemma}[Negative Introspection]
For every $e\in E$, every formula $\phi\in\Phi(\Sigma)$, and every run $r$ of protocol $\cal P$,
if $r\Vdash\neg\Box_e\phi$, then $r\Vdash\Box_e\neg\Box_e\phi$.
\end{lemma}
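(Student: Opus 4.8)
The plan is to mimic the standard soundness argument for the S5 Negative Introspection axiom, relying on the fact (recorded in the Corollary after Definition~\ref{tworun}) that $=_e$ is an equivalence relation, in particular that it is symmetric and transitive.

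First I would unpack the hypothesis. Assume $r\Vdash\neg\Box_e\phi$, so by Definition~\ref{sat} we have $r\nVdash\Box_e\phi$; hence there is a run $r_0$ of $\mathcal{P}$ with $r_0=_e r$ and $r_0\nVdash\phi$. This witness $r_0$ is the single object that will do all the work.

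Next I would prove $r\Vdash\Box_e\neg\Box_e\phi$ by taking an arbitrary run $r'$ of $\mathcal{P}$ with $r'=_e r$ and showing $r'\Vdash\neg\Box_e\phi$, i.e. $r'\nVdash\Box_e\phi$. For this it suffices to exhibit a run $r''$ with $r''=_e r'$ and $r''\nVdash\phi$, and the claim is that $r''=r_0$ works: from $r'=_e r$ and symmetry we get $r=_e r'$, and combining with $r_0=_e r$ by transitivity yields $r_0=_e r'$, while $r_0\nVdash\phi$ by choice of $r_0$. Therefore $r'\nVdash\Box_e\phi$, so $r'\Vdash\neg\Box_e\phi$; since $r'$ was an arbitrary run with $r'=_e r$, Definition~\ref{sat} gives $r\Vdash\Box_e\neg\Box_e\phi$.

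There is no real obstacle here — the only point to be careful about is invoking symmetry of $=_e$ (not just transitivity, as in the Positive Introspection proof), which is exactly what lets the negative witness $r_0$ for $r$ be reused as a negative witness for every $=_e$-related run $r'$.
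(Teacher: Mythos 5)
Your proof is correct and is essentially the paper's own argument: fix the witness run falsifying $\phi$, then use symmetry and transitivity of $=_e$ to show it also witnesses $\neg\Box_e\phi$ at every run $=_e$-related to $r$. No differences worth noting beyond variable naming.
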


\begin{proof}
Assume that $r\Vdash\neg\Box_e\phi$. Then there is a run $r'$ of protocol $\cal P$ such that $r'=_e r$ and $r'\nVdash\phi$. Consider now any run $r''$ of protocol $\mathcal{P}$ such that $r''=_e r$. It is sufficient to show that $r''\Vdash \neg\Box_e\phi$, which is true because $r'=_e r =_e r''$ and $r'\nVdash\phi$.
\end{proof}

The proof of the soundness of Gateway axiom relies on the following technical lemma.

\begin{lemma}\label{two runs}
For every set $F\subseteq E$, 
every formula $\phi\in\Phi(\Sigma,F)$, and every two runs $r$ and 
$r'$ of protocol $\mathcal{P}$, if 
$r=_e r'$ for all $e\in F$, then
$r\Vdash\phi$ if and only if $r'\Vdash\phi$.
\end{lemma}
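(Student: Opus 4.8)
The plan is to prove the statement by induction on the structure of $\phi$, following the four clauses of Definition~\ref{phi(S,A)}. Fix $F\subseteq E$ and two runs $r,r'$ with $r=_e r'$ for every $e\in F$.

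\medskip

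\noindent\textbf{Base cases.} If $\phi=\bot$, then $r\nVdash\bot$ and $r'\nVdash\bot$, so the equivalence holds vacuously. If $\phi=p$ for some $p\in P_t$ with $t\in F$, then by Definition~\ref{sat} we have $r\Vdash p$ iff $w_t\in p^\pi$ and $r'\Vdash p$ iff $w'_t\in p^\pi$; since $t\in F$ implies $r=_t r'$, i.e.\ $w_t=w'_t$, the two conditions coincide.

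\medskip

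\noindent\textbf{Implication.} If $\phi=\psi\to\chi$ with $\psi,\chi\in\Phi(\Sigma,F)$, the induction hypothesis gives $r\Vdash\psi\iff r'\Vdash\psi$ and $r\Vdash\chi\iff r'\Vdash\chi$. By clause~3 of Definition~\ref{sat}, $r\Vdash\psi\to\chi$ iff ($r\nVdash\psi$ or $r\Vdash\chi$), and the same for $r'$, so the equivalence follows.

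\medskip

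\noindent\textbf{Modal case.} Suppose $\phi=\Box_t\psi$ with $t\in F$ and $\psi\in\Phi(\Sigma)$. Here one cannot appeal to the induction hypothesis, because $\psi$ need not belong to $\Phi(\Sigma,F)$; instead we argue directly. Since $t\in F$ we have $r=_t r'$, and $=_t$ is an equivalence relation by the Corollary following Definition~\ref{tworun}; hence for any run $r''$ of $\mathcal{P}$ we have $r''=_t r$ if and only if $r''=_t r'$. Therefore the quantification ``for every run $r''$ with $r''=_t r$'' in clause~4 of Definition~\ref{sat} ranges over exactly the same set of runs as ``for every run $r''$ with $r''=_t r'$'', so $r\Vdash\Box_t\psi$ if and only if $r'\Vdash\Box_t\psi$. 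This completes the induction.

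\medskip

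The only point worth flagging is the modal case: the deliberate asymmetry in clause~4 of Definition~\ref{phi(S,A)} (where $\psi$ is taken from $\Phi(\Sigma)$ rather than $\Phi(\Sigma,F)$) means the induction hypothesis is unavailable there, so one must exploit instead that agreement of $r$ and $r'$ on the label $t\in F$ makes their $=_t$-indistinguishability classes literally identical. Everything else is the routine structural bookkeeping above.
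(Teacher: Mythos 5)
Your proof is correct and follows essentially the same route as the paper's: structural induction with the atomic and implication cases handled via the hypothesis, and the modal case $\Box_t\psi$ argued directly from the fact that $r=_t r'$ makes the two $=_t$-equivalence classes coincide, exactly as the paper does (the paper likewise never applies the induction hypothesis under the box). Your explicit remark that $\psi\in\Phi(\Sigma)$ rather than $\Phi(\Sigma,F)$ blocks the induction hypothesis there is a correct and slightly more careful articulation of the point the paper leaves implicit.
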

\begin{proof}
We prove this by induction on the structural complexity of formula $\phi$. 
The base case is when $\phi$ is a propositional variable $p\in P_e$ for some $e\in E$. 
By Definition~\ref{sat}, $r\Vdash p$ is equivalent to $w_e\in p^\pi$, 
which, due to $w_e=w'_e$, in turn is equivalent to $w'_e\in p^\pi$. 
The latter is equivalent to $r'\Vdash p$, again by Definition~\ref{sat}.

The induction step involves the following cases:
\begin{enumerate}
\item  Suppose that $\phi$ is of the form $\neg \psi$. By Definition~\ref{sat}, $r\Vdash \phi$ is equivalent to $r\nVdash \psi$. By the induction hypothesis, $r\nVdash \psi$ is equivalent to $r'\nVdash \psi$, which, by Definition~\ref{sat}, is equivalent to $r'\Vdash \neg \psi$. 

\item Suppose that $\phi$ is of the form $\psi\to\chi$. By Definition~\ref{sat}, 
$r\Vdash\psi\to\chi$ is equivalent to the disjunction of 
$r\nVdash\psi$ and $r\Vdash\chi$, which is equivalent to the disjunction of 
$r'\nVdash\psi$ and $r'\Vdash\chi$ by the induction hypothesis. 
The latter is equivalent to $r'\Vdash\psi\to\chi$ by Definition~\ref{sat}.

\item Suppose that $\phi$ is of the form $\Box_e\psi$. By Definition~\ref{sat}, 
$r\Vdash\Box_e\psi$ if and only if $r''\Vdash\psi$ for every $r''$ such that $r''=_e r'$. Since $r'=_e r$, the latter statement is equivalent to $r''\Vdash\psi$ for every $r''$ such that $r''=_e r'$. By Definition~\ref{sat}, the latter is equivalent to  $r'\Vdash\Box_e\psi$.
\end{enumerate}
\end{proof}

\begin{lemma}[Gateway]
For every run $r=\langle w_e \rangle_{e\in E}$ of protocol $\cal P$, every gateway $g$ between sets of edges $A$ and $B$, every $a\in A$, and every $\phi\in\Phi(\Sigma,A)$, $\psi\in\Phi(\Sigma,B)$, if $r\Vdash\Box_a(\phi\to\psi)$ and $r\Vdash\phi$, then $r\Vdash\Box_g\psi$.
\end{lemma}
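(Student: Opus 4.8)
The plan is to unfold the definition of $\Box_g$. Assume $r\Vdash\Box_a(\phi\to\psi)$ and $r\Vdash\phi$; fix an arbitrary run $r'=\langle w'_e\rangle_{e\in E}$ of $\mathcal{P}$ with $r'=_g r$, and aim to show $r'\Vdash\psi$. The core of the argument is to splice $r$ and $r'$ into a third run $r''=\langle w''_e\rangle_{e\in E}$ that agrees with $r$ on every edge of $A$ and with $r'$ on every edge of $B$. Given such an $r''$, Lemma~\ref{two runs} (with $F=A$ and $\phi\in\Phi(\Sigma,A)$) turns $r\Vdash\phi$ into $r''\Vdash\phi$; since then $r''=_a r$, the hypothesis $r\Vdash\Box_a(\phi\to\psi)$ gives $r''\Vdash\phi\to\psi$ and hence $r''\Vdash\psi$; and finally Lemma~\ref{two runs} (with $F=B$ and $\psi\in\Phi(\Sigma,B)$) transfers $r''\Vdash\psi$ to $r'\Vdash\psi$.

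To construct $r''$, I would delete the edge $g$ and classify each connected component of $(V,E\setminus\{g\})$: call a component \emph{$B$-sided} if it is incident to some edge of $B\setminus\{g\}$, and \emph{$A$-sided} otherwise. The one place the gateway hypothesis enters is the observation that no component can be incident to an edge of $A\setminus\{g\}$ and to an edge of $B\setminus\{g\}$ at the same time: within such a component there would be a path joining the two edges and avoiding $g$, contradicting that $g$ is a gateway between $A$ and $B$. Consequently every edge of $A\setminus\{g\}$ lies in an $A$-sided component and every edge of $B\setminus\{g\}$ lies in a $B$-sided component. Now set $w''_g:=w_g$ (recall $w_g=w'_g$ because $r=_g r'$), $w''_e:=w_e$ whenever $e\ne g$ lies in an $A$-sided component, and $w''_e:=w'_e$ whenever $e\ne g$ lies in a $B$-sided component.

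The next step is to verify that $r''$ is actually a run, i.e. $\langle w''_e\rangle_{e\in Inc(v)}\in L_v$ for every $v\in V$. If $v$ is not an endpoint of $g$, then all of $Inc(v)$ lies in a single component, so the tuple $\langle w''_e\rangle_{e\in Inc(v)}$ equals either $\langle w_e\rangle_{e\in Inc(v)}$ or $\langle w'_e\rangle_{e\in Inc(v)}$, both of which are in $L_v$ since $r$ and $r'$ are runs. If $v$ is an endpoint of $g$, then $Inc(v)$ is the union of $\{g\}$ with the edges of one component $C$ at $v$; if $C$ is $A$-sided the tuple equals $\langle w_e\rangle_{e\in Inc(v)}$, and if $C$ is $B$-sided it equals $\langle w'_e\rangle_{e\in Inc(v)}$, this time using $w''_g=w_g=w'_g$. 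In both cases the tuple lies in $L_v$.

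It then remains only to collect the pieces. By construction $w''_e=w_e$ for every $e\in A$ (by fiat when $e=g$, and because $e$ lies in an $A$-sided component otherwise), so Lemma~\ref{two runs} gives $r''\Vdash\phi$ and in particular $r''=_a r$; combining $r\Vdash\Box_a(\phi\to\psi)$ with $r''=_a r$ via Definition~\ref{sat} yields $r''\Vdash\phi\to\psi$, hence $r''\Vdash\psi$. Likewise $w''_e=w'_e$ for every $e\in B$ (because $w_g=w'_g$ when $e=g$, and because $e$ lies in a $B$-sided component otherwise), so Lemma~\ref{two runs} transfers $r''\Vdash\psi$ to $r'\Vdash\psi$. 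Since $r'$ was an arbitrary run with $r'=_g r$, Definition~\ref{sat} gives $r\Vdash\Box_g\psi$. The genuinely load-bearing step is the combinatorial claim in the second paragraph; the rest is bookkeeping, though one should double-check the degenerate configurations (a loop $g$, or $A\subseteq\{g\}$, or $B\subseteq\{g\}$), in which the construction simply collapses $r''$ to $r$ or to $r'$.
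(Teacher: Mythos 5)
Your proposal is correct and follows essentially the same route as the paper's proof: delete the gateway edge $g$, splice $r$ and $r'$ into a hybrid run that agrees with $r$ on the $A$-side and with $r'$ on the $B$-side (well defined since $r=_g r'$), check the local conditions vertex by vertex, and transfer $\phi$ and $\psi$ across runs via Lemma~\ref{two runs}. The only difference is cosmetic: you classify components as $A$-sided or $B$-sided, which handles the degenerate cases slightly more explicitly than the paper's direct assertion of two components $C_A$ and $C_B$.
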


\begin{proof}
Consider any run $r' = \langle  w'_e \rangle_{e\in E}$ of protocol ${\cal P}$ such that $r'=_g r$. It suffices to show that $r'\Vdash\psi$. Consider a graph $G'=(V,E\setminus\{g\})$. Due to the assumption that $g$ is a gateway $A$ and $B$, graph $G'$ consists of two connected components $C_A$ and $C_B$ such that all edges in set $A$ belong to the component $C_A$ and all edges in set $B$ belong to the component $C_B$. Let $r^+$ be a tuple $\langle w^+_e \rangle_{e\in E}$ such that
$$
w^+_e = 
\begin{cases}
w_e & \mbox{if $e\in C_A\cup \{g\}$,}\\
w'_e & \mbox{if $e\in C_B\cup \{g\}$}.
\end{cases}
$$
Note that tuple $r^+$ is well defined due to the assumption that $r'=_g r$.
\begin{claim}
Tuple $r^+$ is a run of protocol ${\cal P}$.
\end{claim}
\begin{proof}
We need to show that $r^+$ satisfies local conditions of protocol $\cal P$ at any vertex $v\in V$.
If $v\in C_A$, then $w^+_e=w_e$ for each $e\in Inc(v)$ by the choice of $\langle w^+_e \rangle_{e\in E}$. Hence, $\langle w^+_e\rangle_{e\in Inc(v)}=\langle w_e\rangle_{e\in Inc(v)}\in L_v$. The case $v\in C_B$ is similar.
\end{proof}

We are ready to finish the proof of the lemma. Note that $r^+=_a r$ by the choice of $\langle w^+_e \rangle_{e\in E}$ and the assumption $a\in A$. Thus, $r^+\Vdash \phi\to\psi$ by the assumption $r\Vdash\Box_a(\phi\to\psi)$. At the same time, $r^+\Vdash \phi$ by Lemma~\ref{two runs} and the assumption $r\Vdash\phi$. Hence, $r^+\Vdash\psi$ by Definition~\ref{sat}. Therefore, $r'\Vdash\psi$ by the same Lemma~\ref{two runs} and the assumption $\psi\in\Phi(\Sigma,B)$.
\end{proof}

\section{Completeness}\label{completeness section}

In this section we prove the completeness of our logical system with respect to the formal semantics defined in Section~\ref{semantics section}. 

In general, to prove a completeness theorem for a logical system, for any statement not provable in this system, one needs to describe how to construct a model in which this statement is false. 
In our case, for each formula $\phi$ not provable in our logical system, we construct a protocol (``Kripke model") and a run (``epistemic world") of this protocol on which formula $\phi$ is not satisfied. This protocol will be obtained by {\em aggregating} simpler {\em canonical} protocols. Each canonical protocol synchronizes information known to different observers. For example, if an observer $a$ knows that an observer $b$ knows $\psi$, then one of the canonical protocols guarantees that observer $b$ indeed knows $\psi$. 

The construction of such canonical protocols is based on the network flow protocol~\cite[p.708]{clrc09}. Information flow has many properties similar to that of network flow. In fact, network flow is sometimes used to communicate information. For example, the hydraulic brake system in modern cars uses the flow of the brake fluid to communicate a braking signal from the brake pedal to the wheels. In a more general setting, one can consider a closed system of water pipes with several faucets and several sinks. If one of the faucets is pumping water into the system (somebody knows formula $\delta$), then at least one of the sinks must be leaking the water (forcing formula $\delta$ to be true). We will use such pipe systems to communicate information between different edges of the graph.

In this section we first informally discuss network flow protocols in more details. Next, we define ``canonical" protocols that formalize network flow protocol in the form needed for our proof of completeness. Finally, to finish the proof of completeness, we aggregate multiple canonical protocols into a single one.



\subsection{Network Flow Protocol}

Consider an example of a network of six pipes depicted in Figure~\ref{flow1 figure}. Assume that this network has two sink faucets located at edges $d$ and $f$. Furthermore, let us assume that
\begin{enumerate}
\item water can leak from the network only through faucets on edges $d$ and $f$,
\item water does not have to leak even if the faucet is open, and
\item all pipes can (but do not have to) add water into the system by pumping it in the middle of the pipes. 
\end{enumerate}
Throughout this section, atomic propositions $p$ and $q$ denote the statements ``faucet on the edge $d$ is open" and ``faucet on the edge $f$ is open", respectively. 

\begin{figure}[ht]
\begin{center}
\vspace{3mm}
\scalebox{.6}{\includegraphics{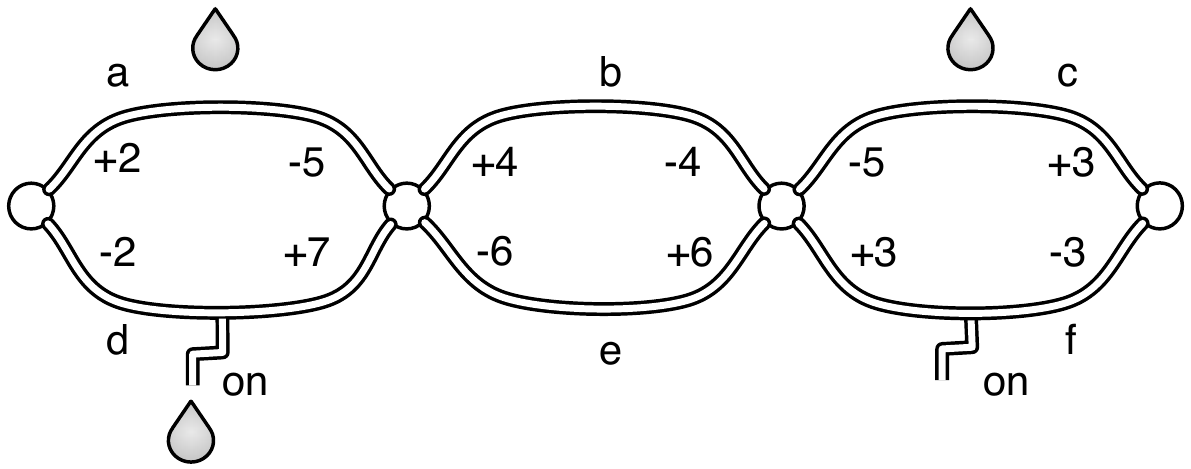}}
\vspace{0mm}
\caption{Run $r_1$ of a network flow protocol. }\label{flow1 figure}
\vspace{0cm}
\end{center}
\vspace{0cm}
\end{figure}

We show the flow in the network by assigning a real number $f^e_u$ to each end $u$ of each pipe $e$ in the network. The {\em positive} number denotes the speed (volume per time unit) with which water is {\em coming into} the pipe through this end and {\em negative} number shows the speed with which water is {\em leaving} the pipe through that end. 

So far, we assume that no water can be added at a vertex. Thus, the sum of all values at each vertex is zero. Any such valid assignment of the flow values to the ends of all pipes defines a run of the network flow protocol.

An example of a run $r_1$ is also shown on Figure~\ref{flow1 figure}. On this run pipes $a$ and $c$ add water into the system, both sink faucets are open, but only edge $d$ leaks water. Note that an external observer of pipe $a$ would see that the sum of flow values on edge $a$ is negative. This means that water is added into the system. Thus, the observer would be able to conclude that at least one of the sink faucets is open: $r_1\Vdash \Box_a(p\vee q)$. However, this observer will not be able to deduce exactly which faucet is open: $r_1\Vdash \neg\Box_a p\wedge \neg\Box_a q$. Also, an external observer of pipe $d$ will see that the sum of the two flow values at the ends of this pipe is positive and, thus, faucet on the pipe $d$ is leaking. Hence, $r_1\Vdash\Box_d p$ and so $r_1\Vdash\Box_d (p\vee q)$.

\begin{figure}[ht]
\begin{center}
\vspace{3mm}
\scalebox{.6}{\includegraphics{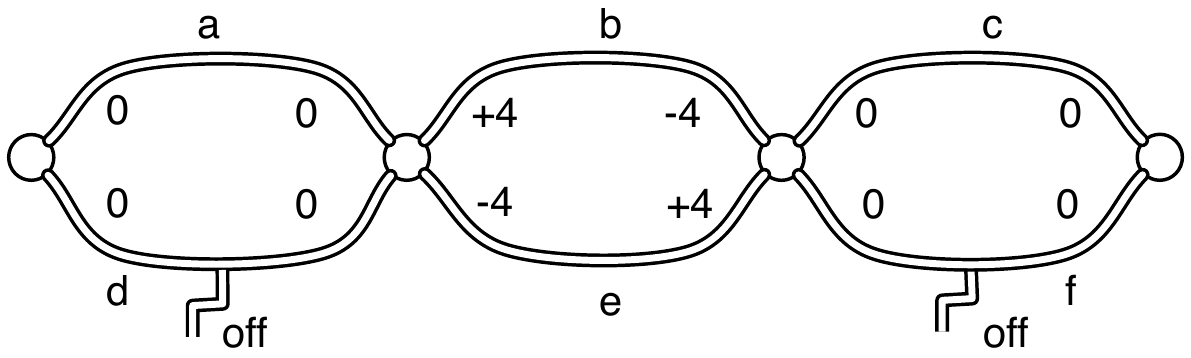}}
\vspace{0mm}
\caption{Run $r_2$ of a network flow protocol. }\label{flow2 figure}
\vspace{0cm}
\end{center}
\vspace{0cm}
\end{figure}

We now argue that $r_1\Vdash\neg\Box_b(p\vee q)$. Indeed, any external observer of pipe $b$ will not be able to distinguish run $r_1$ from run $r_2$ depicted in Figure~\ref{flow2 figure} because they have the same flow values at both ends of pipe $b$. Run $r_2$ has a circular flow through pipes $b$ and $e$, with both faucets being closed. Since $r_2\nVdash p\vee q$ and the observer of pipe $b$ can not distinguish between runs $r_1$ and $r_2$, it follows that $r_1\Vdash\neg\Box_b(p\vee q)$. 
Similarly, another run could be constructed to show that $r_1\Vdash\neg\Box_e (p\vee q)$.

Before continuing with the next example, let us introduce a notion of a {\em bridge} edge of a graph, which is related but not identical to the earlier introduced notion of a gateway edge between two sets of edges.

\begin{definition}\label{bridge}
An edge $b$ is a bridge in a connected graph $(V,E)$, if graph $(V,E\setminus\{b\})$ is not connected. 
\end{definition}
For any given graph, by $\cal B$ we mean the set of all bridges of this graph. For example, for the graph depicted in Figure~\ref{example3 figure}, set $\mathcal{B}$ is $\{m,m',m''\}$.

The main difference between a gateway and a bridge is that a gateway between sets is defined assuming two given sets. Bridge is a specific type of an edge. It's definition does not depend on the choice of any specific sets. Furthermore, a gateway does not have to be a bridge. For example, for any edges $e$ and $f$, of an arbitrary graph, edge $e$ is a gateway between set $\{e\}$ and set $\{f\}$ even if edge $e$ is not a bridge.

The graph in Figure~\ref{flow2 figure} has no bridges. As we show next, the epistemic properties of the network flow protocol are different for edges that are bridges and edges that are not bridges. Let $r_3$ be the run of the network flow protocol depicted in Figure~\ref{flow3 figure}, where pipe $b$ is a bridge. Note that although no additional water is pumped into pipe $b$, an external observer of pipe $b$ would be able to conclude that the faucet at edge $d$ is open because such an observer  would notice a right-to-left water flow on pipe $b$. In other words, $r_3\Vdash\Box_b p$.

\begin{figure}[ht]
\begin{center}
\vspace{3mm}
\scalebox{.6}{\includegraphics{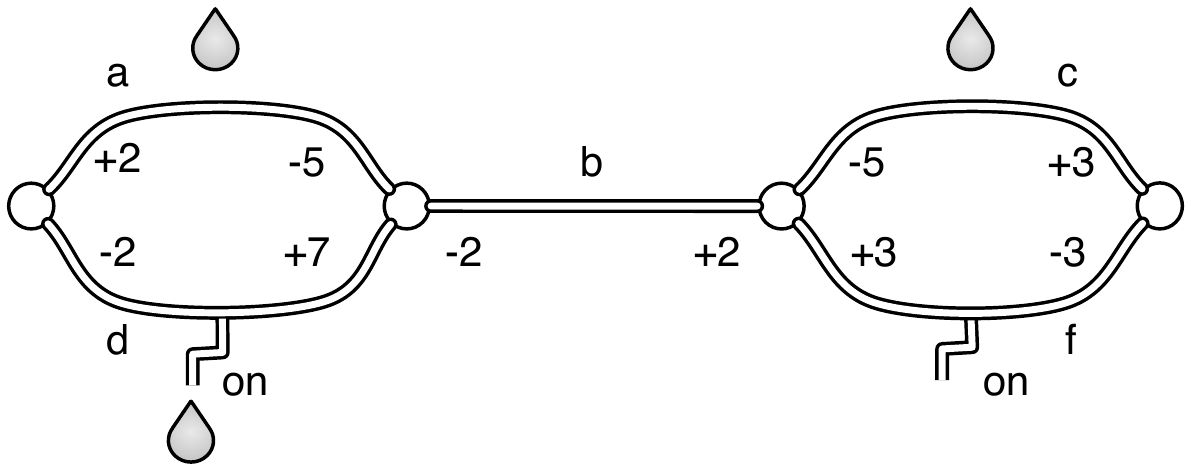}}
\vspace{0mm}
\caption{Run $r_3$ of a network flow protocol.}\label{flow3 figure}
\vspace{0cm}
\end{center}
\vspace{0cm}
\end{figure}

These examples show that in order for an observer of a non-bridge edge to be able to deduce disjunction $p\vee q$, this edge must be pumping water into the system. In the case over a bridge, however, it is sufficient to have a non-zero flow of the bridge in either of the two directions. This distinction between bridges and non-bridges under the network flow protocol will lead to two different corresponding cases in the definition of our canonical protocol (see Definition~\ref{values}).

The network flow protocol, as described above, has a peculiar property. Namely, since water could be pumped into the system only through edges, an external observer of bridge $b$ under run $r_3$ will not only be able to deduce that $p$ is true, but also to conclude that either an external observer of pipe $c$ or an external observer of pipe $f$ must know that $p\vee q$ is true: $r_3\Vdash \Box_b(\Box_c (p\vee q)\vee \Box_f (p\vee q))$. Indeed, an external observer of pipe $b$ would conclude that water is pumped into the system either at pipe $c$ or at pipe $f$ and, thus, either $\Box_c (p\vee q)$ or $\Box_f (p\vee q)$. To prove the completeness theorem for our logical system, we need a slightly more general class of flow protocols for which this property is not necessarily true. Namely, we allow additional water to be pumped into the system not only at pipes, but also at the vertices. The sink faucets, however, are still located only in the middle of the pipes. Under the modified network flow protocol, the statement $r_3\Vdash \Box_b(\Box_c (p\vee q)\vee \Box_f (p\vee q))$ is no longer true because an external observer of pipe $b$ can not distinguish run $r_3$ from run $r_4$ of the modified protocol depicted in Figure~\ref{flow4 figure} and because $r_4\Vdash\neg\Box_c(p\vee q)$ and $r_4\Vdash\neg\Box_f(p\vee q)$.

\begin{figure}[ht]
\begin{center}
\vspace{3mm}
\scalebox{.6}{\includegraphics{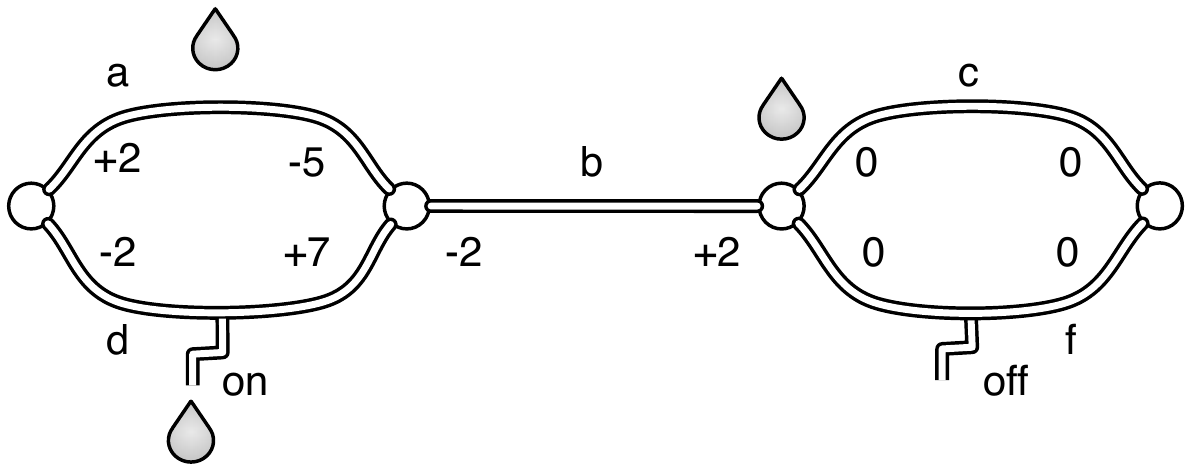}}
\vspace{0mm}
\caption{Run $r_4$ of a network flow protocol. }\label{flow4 figure}
\vspace{0cm}
\end{center}
\vspace{0cm}
\end{figure}

\subsection{Canonical Protocols}\label{canonical protocol section}

In this section we define canonical protocols based on the network flow construction informally discussed above. The canonical protocols are used later in the proof of completeness. Under a canonical protocol, the value of each edge $e$ contains a maximal consistent subset $X^e$ of $\Phi(\Sigma,\{e\})$. Informally, set $X^e$ consists of all epistemic facts about an external observer of edge $e$ that are true on a given run. Of course, on the same run, sets $X^e$ for different edges $e$ must be correlated. For example, if set $X^e$ contains formula $\Box_e\Box_h\psi$, then set $X^h$ must contain formula $\Box_h\psi$. In general, if $\Box_e\delta\in X^e$, then formula $\delta$ should be, in some sense, ``true" on this run. We use network flow to enforce such correlations between sets $X^e$ for different edges $e$ on the same run. 

A single canonical protocol is used to only enforce such a correlation for a single formula $\delta$. Thus, each formula $\delta$ produces a different canonical protocol. In Section~\ref{aggregation section}, we aggregate these canonical protocols into a single protocol. Note that in propositional logic any formula can be written in Disjunctive Normal Form. 
Any modal formula $\delta$ can be shown to be equivalent to $\bigwedge_{i\le n}\bigvee_{h\in E}\delta^i_h$, where $\delta^i_h\in \Phi(\Sigma,\{h\})$ for each $i\le n$ and each $h\in E$. Also note that in the presence of Distributivity axiom and Necessitation inference rule, formula $\Box_e\bigwedge_{i\le n}\bigvee_{h\in E}\delta^i_h$ is provably equivalent to $\bigwedge_{i\le n}\Box_e\bigvee_{h\in E}\delta^i_h$. Because of this, in what follows we enforce our correlation between different sets $X^e$ only for formulas $\delta$ of the form $\bigvee_{h\in E}\delta_h$, where $\delta_h\in \Phi(\Sigma,\{h\})$ for each $h\in E$. 

\begin{definition}\label{Delta}
For any signature $\Sigma=(V,E,\{P_e\}_{e\in E})$, let $\Delta(\Sigma)$ be the set of all formulas of the form $\bigvee_{e\in E}\delta_e$, where $\delta_e\in \Phi(\Sigma,\{e\})$ for each $e\in E$.
\end{definition}

The correlation that we intend to enforce is: for all $e\in E$, 
if $\Box_e\bigvee_{h\in E}\delta_h\in X^e$, then there exist $h\in E$ such that $\delta_h\in X^h$.
Instead of defining a single protocol ${\cal P}^\delta$ under which this correlation is enforced for each $e\in E$,
we define a family of protocols $\{{\cal P}_F^\delta\}_{F\subseteq E}$. For each subset $F\subseteq E$, under protocol ${\cal P}_F^\delta$ the correlation is enforced only for edges in $F$.  

The enforcement of the desired correlation under protocol ${\cal P}^\delta_F$ is achieved by using network the flow construction described in the previous section. Informally, each edge of the graph is viewed as a pipe. In addition to set $X^e$, the value of each edge $e$ also includes flow values over this edge. As before, sink faucets are placed in the middle of each edge. However, the sink faucet at edge $h$ is open only if $\delta_h\in X^h$. If $\Box_e\delta\in X^e$ and edge $e$ is {\em not} a bridge, then $e$ is required to ``pump" water into the system. The network flow protocol guarantees that if water is pumped into the system, then it must leak through at least one of the sinks. This implies that if $\Box_e\delta\in X^e$ (``water is pumped in"), then $\delta_h\in X^h$ (``sink is leaking") for at least one disjunct $\delta_h$  in formula $\delta$. For the same reason, if $\Box_e\delta\in X^e$ and $e$ is a bridge, then $e$ is required to have a non-zero flow (in either direction). 

We now define a canonical protocol ${\cal P}_F^\delta$ over a signature $\Sigma=(V,E,\{P_e\}_{e\in E})$ for each subset $F\subseteq E$ and each $\delta\in\Delta(\Sigma)$, where $\delta$ is of the form $\bigvee_{e\in E}\delta_e$ and $\delta_e\in \Phi(\Sigma,\{e\})$ for each $e\in E$.

\begin{definition}\label{values}
A value $w_e$ of an edge $e\in Edge(u,u')$ under protocol ${\cal P}_F^\delta$ is a tuple $\langle X, \{f_{v}\}_{v\in Inc(e)}\rangle$ that has the following properties:  

\begin{enumerate}
\item Properties common to all edges.
    \begin{enumerate}
    \item $X$ is a maximal consistent subset of $\Phi(\Sigma,\{e\})$,
    \item $f_u$ and $f_{u'}$ are real numbers, 
    \item $f_{u}+f_{u'}>0$ if and only if $\delta_e\in X$.
    \end{enumerate}
\item Properties of bridge edges. For each $e\in \cal B$,
    \begin{enumerate}
    \item if $\delta_e\notin X$, then $f_{u}+f_{u'}=0$,
    \item if $f_{u}<0$, then $\Box_e\bigvee_{h\in C^u_{\miniminus e}}\delta_h\in X$,
    \item if $e\in F$,  $\Box_e\delta\in X$, and $\delta_e\notin X$, then $f_{u}<0$ or $f_{u'}<0$.
    \end{enumerate}
\item Properties of non-bridge edges. For each $e\in E\setminus \cal B$,
    \begin{enumerate}
    \item if  $f_{u}+f_{u'}<0$, then $\Box_e\delta\in X$,
    \item if $e\in F$,  $\Box_e\delta\in X$, and $\delta_e\notin X$, then $f_{u}+f_{u'}<0$.\label{values line 5}
    \end{enumerate}
\end{enumerate}
\end{definition}

\paragraph{Valuation.} Let $\pi$ be a function such that, for each $e\in E$ and $p\in P_e$, set $p^\pi$ contains all values $\langle X, \{f_{v} \}_{v\in Inc(e)}\rangle$ under protocol $\mathcal{P}^\delta_F$,  where $p\in X$.   

We now specify local a condition $L_u$ at a vertex $u$ under  protocol ${\cal P}^\delta_F$. 
Under the network flow protocol, we allow any vertex $u$ to pump additional water into the system and disallow it to leak water out of the system. This is formally captured by the local condition $\sum_{e\in Inc(u)} f^e_{u} \ge  0$. At the same time, recall that we use the network flow to enforce property: if $\Box_e\delta\in X^e$, where $\delta=\bigvee_{h\in E}\delta_h$, then $\delta_h\in X^h$ for at least one $h\in E$. Note that if $\delta_h\in X^h$ for at least one $h\in E$, then the property is already true and no additional enforcement is necessary. Because of this, if $\delta_h\in X^h$ for at least one edge $h$ adjacent to vertex $u$, then we allow the sum $\sum_{e\in Inc(u)} f^e_{u}$ to be negative. This relaxation of the local condition will be useful later.

\paragraph{Local Conditions.}\label{local conditions} Consider any tuple of values 
$\langle X^{e }, \{f^{e}_{v} \}_{v\in Inc(e)} \rangle_{e\in Inc(u)}$
under protocol ${\cal P}^\delta_F$.
This tuple belongs to $L_u$ when the following condition is satisfied: if $\delta_e\notin X^e$ for each $e\in Inc(u)$, then $\sum_{e\in Inc(u)} f^e_{u} \ge  0$. 

This concludes the specification of the family of protocols ${\cal P}^\delta_F$. The following corollary directly follows from the above definitions. 

\begin{corollary}\label{scale}
For any run $\langle X^e, \{f^e_{u} \}_{u\in Inc(e)}\rangle_{e\in E}$ of a protocol ${\cal P}^\delta_F$ and any real number $\lambda>0$, tuple $\langle X^e, \{\lambda f^e_{u} \}_{u\in Inc(e)}\rangle_{e\in E}$ is a run of protocol ${\cal P}^\delta_F$.\qed
\end{corollary}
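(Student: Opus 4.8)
The plan is to verify directly that the scaled tuple $\langle X^e, \{\lambda f^e_u\}_{u\in Inc(e)}\rangle_{e\in E}$ satisfies every clause of Definition~\ref{values} edge by edge and the local conditions vertex by vertex, relying on the single observation that multiplication by a strictly positive real $\lambda$ preserves the sign of any real number and commutes with finite sums. Since the set components $X^e$ are left untouched by the scaling, every clause of Definition~\ref{values} that constrains $X^e$ alone --- in particular maximal consistency in item~1(a) --- is immediate, as is item~1(b).

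First I would handle the clauses relating the flow values to the sets. Item~1(c) follows because $\lambda f^e_u + \lambda f^e_{u'} = \lambda(f^e_u + f^e_{u'})$, and since $\lambda > 0$ this quantity is positive exactly when $f^e_u + f^e_{u'}$ is, which by hypothesis is equivalent to $\delta_e \in X^e$. For a bridge edge $e \in \mathcal{B}$: item~2(a) holds since $\delta_e \notin X^e$ gives $f^e_u + f^e_{u'} = 0$, hence $\lambda(f^e_u + f^e_{u'}) = 0$; item~2(b) holds since $\lambda f^e_u < 0$ forces $f^e_u < 0$, which yields $\Box_e\bigvee_{h\in C^u_{\miniminus e}}\delta_h \in X^e$; and item~2(c) holds because its hypotheses, unchanged by scaling, give $f^e_u < 0$ or $f^e_{u'} < 0$, hence $\lambda f^e_u < 0$ or $\lambda f^e_{u'} < 0$. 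For a non-bridge edge the arguments for items~3(a) and 3(b) are identical, using that $\lambda(f^e_u + f^e_{u'}) < 0$ iff $f^e_u + f^e_{u'} < 0$.

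Finally I would check the local condition at each vertex $u \in V$: if $\delta_e \notin X^e$ for every $e \in Inc(u)$, then $\sum_{e\in Inc(u)} \lambda f^e_u = \lambda \sum_{e\in Inc(u)} f^e_u \ge 0$, since the original tuple is a run and $\lambda > 0$. This exhausts all the conditions, so the scaled tuple is again a run of ${\cal P}^\delta_F$.

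There is essentially no obstacle here; the proof is a routine sign-preservation check. The only point worth keeping in mind is that strict positivity of $\lambda$ is used in both directions of the sign equivalences --- in particular $\lambda = 0$ would violate item~1(c) whenever $\delta_e \in X^e$ --- so the hypothesis $\lambda > 0$ cannot be weakened to $\lambda \ge 0$.
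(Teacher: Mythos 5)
Your proposal is correct and follows exactly the route the paper intends: the paper states Corollary~\ref{scale} as an immediate consequence of Definition~\ref{values} and the local conditions (offering no written proof beyond that remark), and your clause-by-clause sign-preservation check is precisely the routine verification being elided. Nothing is missing, and your observation that strict positivity of $\lambda$ is needed for the biconditional in item~1(c) is accurate.
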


\begin{lemma}\label{gateway zero}
Let $\langle X^e, \{f^e_{u} \}_{u\in Inc(e)}\rangle_{e\in E}$ be any run of a protocol ${\cal P}^\delta_F$.  If $h = (v, v')\in F\cap \cal B$ and $\delta_h\notin X^h$, then $f^h_v=0$ if and only if $\Box_h\delta\notin X^h$.
\end{lemma}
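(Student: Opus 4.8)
The plan is to read the biconditional off directly from the bridge-edge conditions of Definition~\ref{values}, using only one small piece of modal reasoning at the end. Writing $h=(v,v')$, the first thing I would record is that since $\delta_h\notin X^h$, clause~2(a) of Definition~\ref{values} forces $f^h_v+f^h_{v'}=0$; consequently $f^h_v=0$ if and only if $f^h_{v'}=0$, and whenever $f^h_v\ne 0$ exactly one of $f^h_v,f^h_{v'}$ is strictly negative. With this observation in hand I would prove the two contrapositive implications separately.

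For ``$\Box_h\delta\in X^h\Rightarrow f^h_v\ne 0$'': I would invoke clause~2(c), which applies precisely because $h\in F\cap\mathcal{B}$, $\Box_h\delta\in X^h$, and $\delta_h\notin X^h$, to conclude $f^h_v<0$ or $f^h_{v'}<0$; the opening observation then gives $f^h_v\ne 0$ in both cases.

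For ``$f^h_v\ne 0\Rightarrow\Box_h\delta\in X^h$'': by the opening observation one end carries a strictly negative flow, say $f^h_v<0$, the case $f^h_{v'}<0$ being symmetric. Clause~2(b) then puts $\Box_h\bigvee_{g\in C^v_{\miniminus h}}\delta_g$ into $X^h$. Since $h$ is a bridge, $(V,E\setminus\{h\})$ has exactly the two connected components $C^v_{\miniminus h}$ and $C^{v'}_{\miniminus h}$, so the edge set $E$ is the disjoint union of $\{h\}$ with the edge sets of these two components; in particular $\bigvee_{g\in C^v_{\miniminus h}}\delta_g\to\delta$ is a propositional tautology. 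Applying the Necessitation rule and then the Distributivity axiom yields $\vdash\Box_h\bigvee_{g\in C^v_{\miniminus h}}\delta_g\to\Box_h\delta$, so $X^h\vdash\Box_h\delta$ by Modus Ponens, and hence $\Box_h\delta\in X^h$ because $\Box_h\delta\in\Phi(\Sigma,\{h\})$ and $X^h$ is a maximal consistent subset of $\Phi(\Sigma,\{h\})$.

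The whole argument is essentially bookkeeping against Definition~\ref{values}, and I do not anticipate a genuine obstacle. The one step that needs a moment's care is the last one: it is the partition of $E\setminus\{h\}$ into the two sides of the bridge that licenses upgrading $\Box_h$ of the partial disjunction over $C^v_{\miniminus h}$ to $\Box_h\delta$.
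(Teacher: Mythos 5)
Your proposal is correct and follows essentially the same route as the paper's own proof: both directions are read off from clauses 2(a), 2(b), 2(c) of Definition~\ref{values}, with the final step upgrading $\Box_h\bigvee_{g\in C^v_{\miniminus h}}\delta_g$ to $\Box_h\delta$ via Necessitation, Distributivity, and maximality of $X^h$. The only cosmetic difference is that the sub-disjunction-implies-$\delta$ step is a tautology simply because every $\delta_g$ with $g\in C^v_{\miniminus h}$ is a disjunct of $\delta=\bigvee_{e\in E}\delta_e$; the two-component partition induced by the bridge is not actually needed for that step.
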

\begin{proof}
$(\Rightarrow):$ We prove by contrapositive. Suppose that $\Box_h\delta\in X^h$. Then by Definition~\ref{values} part 2(c), $f^h_v<0$ or $f^h_{v'}<0$. Hence, $f^h_v\neq 0$ or $f^h_{v'}\neq 0$. Note that $f^h_{v'}\neq 0$, by Definition~\ref{values} part 2(a), implies that $f^h_v\neq 0$. Therefore, in both cases, $f^h_v\neq 0$.

\noindent$(\Leftarrow):$ 
Assume that $f^h_v\ne 0$. By Definition~\ref{values} part 2(a), either $f^h_v<0$ or $f^h_{v'}<0$. Suppose, without loss of generality, that $f^h_v<0$. Then, by Definition~\ref{values} part 2(b), 
\begin{equation}\label{anyname}
\Box_h\bigvee_{e\in C^v_{\miniminus h}}\delta_e\in X^h.
\end{equation}
Note that $\bigvee_{e\in C^v_{\miniminus h}}\delta_e\to\delta$ is a propositional tautology. Thus, by Necessitation rule, 
$$\vdash\Box_h\left(\bigvee_{e\in C^v_{\miniminus h}}\delta_e\to\delta\right).$$
Hence, by Distributivity axiom and Modus Ponens rule,
$$\vdash\Box_h\left(\bigvee_{e\in C^v_{\miniminus h}}\delta_e\right)\to\Box_h\delta.$$
Thus, $X^h\vdash\Box_h\delta$ from statement~(\ref{anyname}) and Modus Ponens inference rule. Therefore, $\Box_h\delta\in X^h$ due to the maximality of set $X^h$.
\end{proof}

\subsection{Properties of Canonical Protocols}

In this section we prove several technical properties of the canonical protocols that are used  in the proof of completeness. To build the intuition, as we proceed, we compare these properties with those of our informal network flow model.

\begin{lemma}\label{sub protocol}
For any $\delta\in \Delta(\Sigma)$, if $F'\subseteq F$, then each run of protocol ${\cal P}^\delta_F$ is also a run of protocol ${\cal P}^\delta_{F'}$.
\end{lemma}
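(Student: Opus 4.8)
The statement to prove is Lemma~\ref{sub protocol}: if $F' \subseteq F$, then every run of ${\cal P}^\delta_F$ is also a run of ${\cal P}^\delta_{F'}$. The natural strategy is to observe that the only places where the parameter $F$ enters Definition~\ref{values} are in clauses 2(c) and 3(b), and in both of these clauses $F$ appears only in the \emph{hypothesis} of a conditional (``if $e \in F$, \dots then \dots''). Thus shrinking $F$ to $F'$ can only weaken these conditions: any value that satisfies the $F$-version of clause 2(c) (resp.\ 3(b)) automatically satisfies the $F'$-version, since $e \in F'$ implies $e \in F$ and then the original conclusion still holds. All the other clauses of Definition~\ref{values} — 1(a)--(c), 2(a), 2(b), 3(a) — do not mention $F$ at all, so they are unaffected. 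The local conditions $L_u$ as specified in the ``Local Conditions'' paragraph also do not depend on $F$. Hence the set of \emph{values} for each edge under ${\cal P}^\delta_F$ is a subset of the set of values under ${\cal P}^\delta_{F'}$, the local conditions coincide, and therefore every run of ${\cal P}^\delta_F$ is a run of ${\cal P}^\delta_{F'}$.

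\textbf{Key steps, in order.} First, recall the definition of a run: a tuple $\langle w_e \rangle_{e \in E} \in \prod_{e \in E} W_e$ with $\langle w_e \rangle_{e \in Inc(v)} \in L_v$ for every $v \in V$. So it suffices to show (i) every value $w_e$ admissible under ${\cal P}^\delta_F$ is admissible under ${\cal P}^\delta_{F'}$, and (ii) the local conditions $L_v$ are identical under the two protocols. For (ii), inspect the ``Local Conditions'' paragraph: the membership condition for $L_u$ refers only to the $\delta_e$, the $X^e$, and the flow values $f^e_u$ — there is no occurrence of $F$ — so $L_v$ is literally the same set under ${\cal P}^\delta_F$ and ${\cal P}^\delta_{F'}$. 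For (i), go clause by clause through Definition~\ref{values}: clauses 1(a), 1(b), 1(c), 2(a), 2(b), 3(a) are $F$-free and hence carry over verbatim; for clause 2(c), if $e \in \cal B$ and $e \in F'$ and $\Box_e\delta \in X$ and $\delta_e \notin X$, then since $F' \subseteq F$ we have $e \in F$, so the clause 2(c) for ${\cal P}^\delta_F$ applies and gives $f_u < 0$ or $f_{u'} < 0$; similarly for clause 3(b) when $e \in E \setminus \cal B$ and $e \in F'$. This establishes (i).

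\textbf{Main obstacle.} Honestly, there is no real obstacle here — this is a routine monotonicity observation, and the ``proof'' is essentially the remark that $F$ occurs only positively-in-hypothesis in the definition. The only thing to be careful about is to check \emph{every} clause of Definition~\ref{values} and the local-condition paragraph so as not to miss a hidden dependence on $F$; in particular one should explicitly note that clauses 2(a), 2(b), and 3(a) — which also constrain the flow in ways that could conceivably interact with $F$ — in fact do not mention $F$, so there is nothing to check there beyond observing their syntactic form. Once that inspection is complete, the conclusion that each run of ${\cal P}^\delta_F$ is a run of ${\cal P}^\delta_{F'}$ is immediate from the definition of a run.
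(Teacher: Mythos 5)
Your proposal is correct and follows essentially the same route as the paper: the paper's own proof simply notes that the only dependence on $F$ in Definition~\ref{values} is in parts 2(c) and 3(b), where $F$ occurs only in the hypothesis ``$e\in F$'', so shrinking $F$ to $F'\subseteq F$ only weakens the requirements, while all other clauses and the local conditions are $F$-free. Your clause-by-clause verification is just a more explicit spelling-out of that observation.
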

\begin{proof}
The statement of the lemma immediately follows from the definition of the canonical protocols ${\cal P}^\delta_F$. Indeed, the difference between protocol ${\cal P}^\delta_F$ and ${\cal P}^\delta_{F'}$ is only in parts 2(c) and 3(b) of Definition~\ref{values}.
\end{proof}

The following theorem formalizes our intuition described earlier that if there is an inflow of water into the system, then there must be at least one open sink for the water to leak.

\begin{theorem}\label{XYZ2}
For any $h\in E$ and any run $\langle X^e, \{f^e_{u} \}_{u\in Inc(e)} \rangle_{e\in E}$ of protocol ${\cal P}^\delta_{\{h\}}$, if $\Box_h\delta\in X^h$, then there is an edge $h'\in E$ such that $\delta_{h'}\in X^{h'}$.
\end{theorem}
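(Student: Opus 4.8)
The plan is to argue by contradiction using the network-flow structure of the canonical protocol. Suppose $\Box_h\delta\in X^h$ but $\delta_{h'}\notin X^{h'}$ for every $h'\in E$. Since $\delta_{h'}\notin X^{h'}$ for all edges, property 1(c) of Definition~\ref{values} gives $f^e_u + f^e_{u'}=0$ for every edge $e=Edge(u,u')$; that is, no sink leaks any water. Moreover, since $\delta_e\notin X^e$ for every $e\in Inc(v)$ at each vertex $v$, the local condition at every vertex forces $\sum_{e\in Inc(v)} f^e_v\ge 0$. Summing this inequality over all vertices $v\in V$, each edge $e=Edge(u,u')$ contributes $f^e_u+f^e_{u'}=0$, so the total is $0$. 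Since the sum of nonnegative quantities is $0$, we must in fact have $\sum_{e\in Inc(v)} f^e_v = 0$ for every vertex $v$.

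Next I would use the assumption $\Box_h\delta\in X^h$ together with $\delta_h\notin X^h$ to extract a contradiction, splitting into the two cases of whether $h$ is a bridge. If $h\in E\setminus\mathcal{B}$, then since $h\in F=\{h\}$, part 3(b) of Definition~\ref{values} gives $f^h_u+f^h_{u'}<0$, contradicting $f^h_u+f^h_{u'}=0$ from the previous paragraph. If $h\in\mathcal{B}$ with $h=Edge(v,v')$, then Lemma~\ref{gateway zero} (applicable since $h\in F\cap\mathcal{B}$ and $\delta_h\notin X^h$) together with $\Box_h\delta\in X^h$ yields $f^h_v\neq 0$; by part 2(a) of Definition~\ref{values}, without loss of generality $f^h_v<0$ (using $f^h_v+f^h_{v'}=0$). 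Now I need to derive a contradiction from having a strictly negative flow across the bridge $h$ while all vertex-balances are zero: intuitively, water is flowing ``out'' of the component $C^v_{\miniminus h}$ through the bridge $h$, yet no sink in that component leaks and no vertex-balance is negative, so the water has nowhere to come from.

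The main obstacle is this last bridge case: making precise the conservation argument on the connected component $C^v_{\miniminus h}$ of the graph $(V,E\setminus\{h\})$. I would sum the vertex-balance equations $\sum_{e\in Inc(x)} f^e_x = 0$ over all vertices $x$ in $C^v_{\miniminus h}$. Every edge with both endpoints in $C^v_{\miniminus h}$ contributes $0$ (its two flow values sum to zero), and the only edge with exactly one endpoint in $C^v_{\miniminus h}$ is the bridge $h$ itself (since $h$ is the unique edge connecting $C^v_{\miniminus h}$ to the rest of the graph), contributing $f^h_v$. Hence $f^h_v = 0$, contradicting $f^h_v<0$. (One must check there is no loop at a vertex of $C^v_{\miniminus h}$ causing trouble — a loop $e=Edge(x,x)$ contributes $f^e_x + f^e_x$ to the sum for $x$, but part 1(c) forces this to be $0$ as well, so loops are harmless.) This completes the proof; I expect the bookkeeping in this component-summation step, and the careful invocation of Lemma~\ref{gateway zero}, to be where the real work lies, while the non-bridge case is essentially immediate.
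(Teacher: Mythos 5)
Your overall route is the same as the paper's: argue by contradiction, use the local conditions to get $\sum_{e\in Inc(v)} f^e_v\ge 0$ at every vertex, and then split on whether $h$ is a bridge, handling the bridge case by a conservation argument on the component $C^{v}_{\miniminus h}$ (the paper does exactly this, using part 3(b) in the non-bridge case and part 2(c) plus the component sum in the bridge case). However, your opening step is wrong as stated, and the rest of your argument leans on it. Condition 1(c) of Definition~\ref{values} is the biconditional ``$f_u+f_{u'}>0$ iff $\delta_e\in X$'', so from $\delta_e\notin X^e$ you only get $f^e_u+f^e_{u'}\le 0$, not $=0$. Equality for $\delta_e\notin X^e$ is guaranteed only for \emph{bridges}, by 2(a); a non-bridge edge is allowed to ``pump'' ($f^e_u+f^e_{u'}<0$) whenever $\Box_e\delta\in X^e$ — indeed this is precisely what happens at the edge $h$ itself in your non-bridge case, so the claim ``1(c) gives $f^e_u+f^e_{u'}=0$ for every edge'' cannot be a consequence of 1(c) alone.

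The gap is repairable without changing your plan, but the order of inference must be reversed: from 1(c) you get every edge sum $\le 0$, from the local conditions every vertex sum $\ge 0$, and since the two totals coincide, all edge sums and all vertex balances are in fact $0$; only then do your two cases go through (3(b) gives a strictly negative sum at the non-bridge $h$, contradiction; for a bridge, your component summation gives $f^h_v=0$, contradicting $f^h_v<0$ obtained via Lemma~\ref{gateway zero}, or directly via 2(c), together with 2(a)). Alternatively, you can avoid the global equality altogether, as the paper does: in the non-bridge case the total over all vertices is $\ge 0$ while the edge $h$ contributes strictly negatively, so some other edge must have a strictly positive sum and hence an open sink by 1(c); in the bridge case the same argument applied to $C^{v}_{\miniminus h}$, whose vertex balances sum to at least $-f^h_v>0$, produces a positive-sum edge inside the component. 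So: right skeleton, but the justification of the crucial ``all edge sums vanish'' step needs to be rebuilt from the combined inequalities rather than read off from 1(c).
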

\begin{proof}
Suppose that there is no $h'\in E$ such that $\delta_{h'}\in X^{h'}$. 
Due to the local conditions of protocol ${\cal P}^\delta_{\{h\}}$,
\begin{equation}\label{saturday}
\sum_{e\in Inc(v)}f^e_v\ge 0, \hspace{5mm} \mbox{for each $v\in V$.}
\end{equation}
We consider the following two cases separately:

\noindent{\em Case I}:  $h\notin \cal B$. 
The sum of flow values over edges can be rearranged to the sum of flow values over vertices. 
Thus, due to inequality ~(\ref{saturday}),
\begin{equation}\label{ge zero}
\sum_{e\in Edge(u,u')}(f^e_u+f^e_{u'})=\sum_{v\in V}\sum_{e\in Inc(v)} f^e_v\ge 0.
\end{equation}
The assumption that there is no $h'\in E$ such that $\delta_{h'}\in X^{h'}$, together with the assumptions $h\notin \cal B$ and $\Box_h\delta\in X^h$ by part 3(b) of Definition~\ref{values}, implies that $f^h_v+f^h_{v'}<0$, where $v$ and $v'$ are the two ends of the edge $h$. Then, by inequality~(\ref{ge zero}), there must exist $h'\in Edge(u_1,u_2)$ such that $f^{h'}_{u_1}+f^{h'}_{u_2}>0$. Therefore, $\delta_{h'}\in X^{h'}$ by part 1(c) of Definition~\ref{values}, which is a contradiction.

\noindent{\em Case II}: $h\in \cal B$. By part 2(c) of Definition~\ref{values}, there is an end $u_0$ of edge $h$ such that $f^h_{u_0}<0$, see Figure~\ref{sink exists figure}. The sum of the flow values over edges in component $C^{u_0}_{\miniminus h}$ can be rearranged to the sum of the flow values over vertices. 
\begin{figure}[ht]
\begin{center}
\vspace{3mm}
\scalebox{.5}{\includegraphics{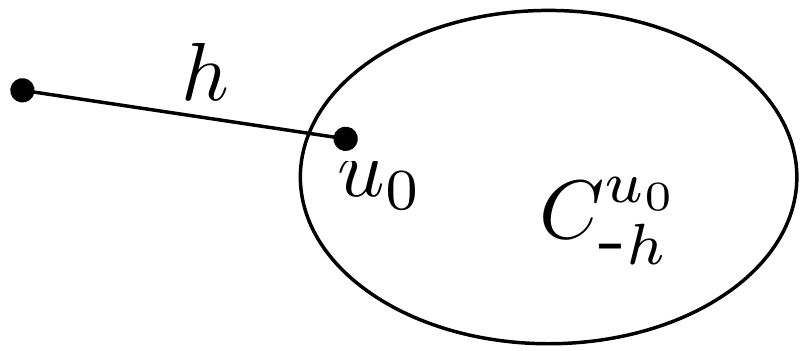}}
\vspace{0mm}
\caption{Towards proof of Theorem~\ref{XYZ2}, Case II.}\label{sink exists figure}
\vspace{0cm}
\end{center}
\vspace{0cm}
\end{figure}
Hence, by inequality~(\ref{saturday}),
\begin{equation*}
\sum_{e\in Edge(u,u')\cap C^{u_0}_{\miniminus h}}(f^e_u+f^e_{u'})=\left(\sum_{v\in C^{u_0}_{\miniminus h}}\sum_{e\in Inc(v)} f^e_v\right) - f^h_{u_0} \ge 0 - f^h_{u_0} > 0.
\end{equation*}
Thus, there must exist $h'\in Edge(u_1,u_2)\in C^{u_0}_{\miniminus h}$ such that $f^{h'}_{u_1}+f^{h'}_{u_2}>0$. Therefore, $\delta_{h'}\in X^{h'}$ by part 1(c) of Definition~\ref{values}, which is a contradiction.
\end{proof}

Note that in the network flow model the following property holds: if $v_1$ is one of the vertices of an edge $e_0$ and the water flows through edge $e_0$ towards vertex $v_1$, then there must exist a sink edge $e_k$ and a path $e_0,v_1,e_1,v_2,e_2,\dots,v_k,e_k$ such that there is a water flow along this path in the direction from edge $e_0$ to edge $e_k$. In our formal setting this property is captured by the following definition and lemma. 

\begin{definition}\label{gamma}
For any maximal consistent set of formulas $M$, let $\Gamma_M$ be the set of all paths $e_0,v_1,e_1,v_2,e_2,\dots,v_k,e_k$, where $k>0$, such that
\begin{enumerate}
\item $\Box_{e_0}\bigvee_{h\in C^{v_1}_{\miniminus e_0}}\delta_h\in M$,
\item $\delta_{e_i}\notin M$, for each $0\le i<k$, 
\item if $e_i\in \cal B$, then $\Box_{e_i}\left(\bigvee_{h\in C^{v_{i+1}}_{\miniminus e_i}}\delta_h\right)\in M$, for each $0\le i<k$,

\item $\delta_{e_k}\in M$.
\end{enumerate}
\end{definition}



\begin{lemma}\label{path exists}
For any edge $e\in Edge(u,u')$, if $\Box_{e}\bigvee_{h\in C^{u}_{\miniminus e}}\delta_h\in M$ and $\delta_e\notin M$, then there is a path in set $\Gamma_M$ that starts with edge $e$ and continues through vertex $u$. 
\end{lemma}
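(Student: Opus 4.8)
The plan is to construct the desired path greedily, one edge at a time, maintaining an invariant that matches the conditions defining $\Gamma_M$. I would start with $e_0 = e$ and $v_1 = u$, so that condition 1 of Definition~\ref{gamma} holds by hypothesis. The greedy step is the following: suppose I have built a prefix $e_0, v_1, e_1, \dots, v_i, e_i$ satisfying conditions 2 and 3 for all indices $< i$ (and with $\Box_{e_{i-1}}\bigvee_{h\in C^{v_i}_{\miniminus e_{i-1}}}\delta_h \in M$ when $i\ge 1$), but with $\delta_{e_i}\notin M$ — so I have not yet reached a terminal sink edge. I then need to find a next vertex $v_{i+1}$ (an end of $e_i$ distinct from the previously used vertices) and a next edge $e_{i+1}$ incident to $v_{i+1}$, distinct from $e_0,\dots,e_i$, such that $\Box_{e_i}\bigvee_{h\in C^{v_{i+1}}_{\miniminus e_i}}\delta_h \in M$ and so that condition 3 is preserved. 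Since $M$ is maximal consistent and $G$ is finite, some care is needed to see that this process must terminate with condition 4 satisfied ($\delta_{e_k}\in M$) rather than cycling or getting stuck.

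The key observation driving the greedy step is a disjunction-splitting argument using the Gateway axiom, analogous to Lemma~\ref{second vee lemma} (or to Example~\ref{example 3}). From $\Box_{e_i}\bigvee_{h\in C^{v_{i+1}}_{\miniminus e_i}}\delta_h \in M$ together with $\delta_{e_i}\notin M$, I want to peel off the disjunct $\delta_{e_i}$ and push the remaining box inward: writing $C^{v_{i+1}}_{\miniminus e_i}$ as a union over the components obtained by further removing $e_i$ from consideration at $v_{i+1}$, the remaining edges split according to which edge incident to $v_{i+1}$ they are "behind." Since $e_i$ is a gateway between $\{e_i\}$ and each such sub-collection (every path from $e_i$ into that sub-component must pass through $e_i$), Lemma~\ref{second vee lemma}, or repeated application of Lemma~\ref{vee lemma}, lets me derive $\delta_{e_i} \vee \bigvee_{e' \in Inc(v_{i+1})\setminus\{e_i\}} \Box_{e_i}\bigvee_{h\in C^{v_{i+1}}_{\text{beyond }e'}}\delta_h \in M$. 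Because $\delta_{e_i}\notin M$ and $M$ is maximal consistent, some disjunct $\Box_{e_i}\bigvee_{h}\delta_h$ is in $M$; that identifies the next edge $e_{i+1} = e'$ and the relevant component as $C^{v_{i+2}}_{\miniminus e_{i+1}}$ for the appropriate new end $v_{i+2}$. When $e_{i+1}$ happens to be a bridge, the disjunct $\Box_{e_{i+1}}\bigvee_{h\in C^{v_{i+2}}_{\miniminus e_{i+1}}}\delta_h \in M$ is exactly what condition 3 demands; when it is not a bridge, condition 3 is vacuous at that index. Finally, if ever $\delta_{e_{i+1}}\in M$ we stop with $k = i+1$ and condition 4 holds; otherwise we iterate.

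The main obstacle is termination together with distinctness: a path in $\Gamma_M$ must use distinct edges and distinct vertices, so I must argue the greedy walk never revisits an edge or an interior vertex and hence must halt. The natural argument is that the component sizes strictly shrink: at each step the newly chosen edge $e_{i+1}$ lies in a proper sub-component of the one just "entered," so the set $C^{v_{i+1}}_{\miniminus e_i}$ over which the disjunction ranges strictly decreases (in the subgraph-inclusion order) as $i$ grows. This forces the process to terminate after finitely many steps, and since the terminal disjunction $\bigvee_h \delta_h$ would otherwise be over an empty set of edges (contradicting that $\Box_{e_k}\bot \notin M$, as $M$ is consistent and $\Box_{e_k}$-closed under Truth), the walk must in fact stop because it reached an edge $e_k$ with $\delta_{e_k}\in M$. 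I would need to be careful to set up the decreasing measure precisely — presumably the number of edges in the current component not equal to any already-used edge — and to check that the edge produced by the disjunction-splitting step is genuinely new and incident to a genuinely new vertex, which follows because it lies strictly beyond the frontier crossed so far. This bookkeeping is the delicate part; the logical derivation via Gateway is routine once the combinatorial picture is fixed.
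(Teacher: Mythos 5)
Your greedy step contains a genuine gap at exactly the place where this lemma is hard: the disjunction-splitting at the vertex $v_{i+1}$. From $\Box_{e_i}\bigvee_{h\in C^{v_{i+1}}_{\miniminus e_i}}\delta_h\in M$ you want to conclude that, besides $\delta_{e_i}$, some \emph{single} disjunct of the form $\Box_{e_i}\bigvee_{h\in(\text{region beyond }e')}\delta_h$ lies in $M$ for a specific edge $e'$ incident to $v_{i+1}$. Lemma~\ref{second vee lemma} (and Lemma~\ref{vee lemma}) only license such a split when the pieces being separated lie on opposite sides of a genuine gateway; the gateway $\{e_i\}$-versus-$B$ instances you invoke are trivially true but only let you box the \emph{entire} remaining disjunction, not break it apart edge-by-edge at $v_{i+1}$. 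Whenever $v_{i+1}$ lies on a cycle inside $C^{v_{i+1}}_{\miniminus e_i}$ (equivalently, two incident edges of $v_{i+1}$ lead into the same $2$-edge-connected block), the regions ``beyond'' different incident edges are not separated by any gateway, the formula $\delta_{e_i}\vee\bigvee_{e'}\Box_{e_i}(\cdots)$ is not even semantically valid (knowledge of a disjunction over a connected region does not localize to one branch -- this is the same phenomenon as $r_1\Vdash\Box_a(p\vee q)$ with $r_1\Vdash\neg\Box_a p\wedge\neg\Box_a q$ in the flow example), and so your invariant $\Box_{e_i}\bigvee_{h\in C^{v_{i+1}}_{\miniminus e_i}}\delta_h\in M$ cannot be pushed to a chosen next edge. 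Your termination measure also breaks in this case: for a non-bridge $e_i$ the component $C^{v_{i+1}}_{\miniminus e_i}$ is the whole graph minus one edge, so nothing strictly decreases, and the distinctness of vertices along the walk is not secured. (There is also a smaller slip: you derive disjuncts boxed by $\Box_{e_i}$ but then read off condition 3 of Definition~\ref{gamma}, which needs $\Box_{e_{i+1}}$; that transfer is only available across a bridge.)

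This is precisely why Definition~\ref{gamma} imposes the box condition only at bridge edges, and why the paper's proof is not a step-by-step construction. The paper works globally: it defines the set $\Omega$ of \emph{all} paths from $e$ through $u$ satisfying the bridge conditions, lets $C_0$ be the union of their edges, and observes that every leftover component $C_i$ of $C^u_{\miniminus e}$ is cut off from $C_0$ by a bridge $g_i\in C_0$ with $\Box_{g_i}\bigl(\bigvee_{h\in C_i}\delta_h\bigr)\notin M$. Repeated use of Lemma~\ref{vee lemma} (with Positive Introspection to re-box) across these bridges yields
$$
\Box_e\Bigl(\bigvee_{h\in C^u_{\miniminus e}}\delta_h\Bigr)\to\Bigl(\bigvee_{h\in C_0}\delta_h\Bigr)\vee\bigvee_{i}\Box_{g_i}\Bigl(\bigvee_{h\in C_i}\delta_h\Bigr),
$$
so maximality of $M$ forces $\delta_h\in M$ for some $h\in C_0$; one then takes a path of $\Omega$ through $h$ and truncates it at the first sink edge. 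Inside a $2$-edge-connected block the path is chosen combinatorially, not by logic -- that is the idea missing from your argument, and without it the greedy construction cannot be completed.
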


\begin{proof}
Let $\Omega$ be the set of all such paths $e_0,v_1,e_1,v_2,e_2,\dots,v_k,e_k$ that $e_0=e$, $v_1=u$, $\Box_{e_0}\delta\in M$, and for each $0\le i<k$, if $e_i\in \cal B$, then $\Box_{e_i}\left(\bigvee_{h\in C^{v_{i+1}}_{\miniminus e_i}}\delta_h\right)\in M$. 

Let $C_0$ be the set of all edges that belong to at least one path in $\Omega$. Let $C_1,\dots,C_n$ be the connected components of the graph obtained from component $C_{\miniminus e}^u$ by removing all edges in $C_0$. By the definition of set $\Omega$, for each $0<i\le n$ there is an edge $g_i$ in $C_0\cap \cal B$, such that 
\begin{equation}\label{bridge eq}
\Box_{g_i}\left(\bigvee_{h\in C_i}\delta_h\right)\notin M.
\end{equation}
Note that edge $g_i$ is the gateway between edges in $C_0\cup C_1\cup \dots \cup C_{i-1}\cup C_{i+1}\cup\dots \cup C_n$ and $C_{i}$. See Figure~\ref{path exists figure}.

\begin{figure}[ht]
\begin{center}
\vspace{3mm}
\scalebox{.5}{\includegraphics{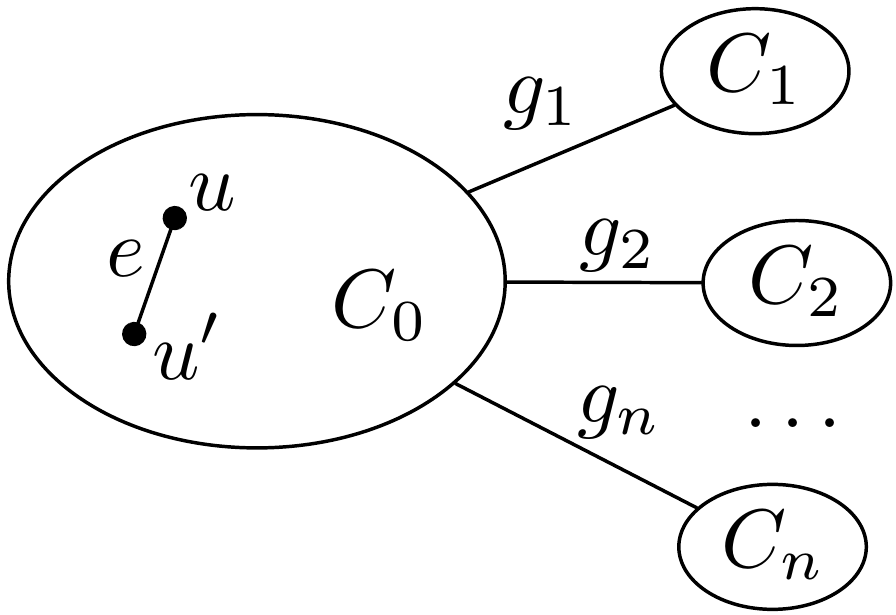}}
\vspace{0mm}
\caption{Components and corresponding bridges.}\label{path exists figure}
\vspace{0cm}
\end{center}
\vspace{0cm}
\end{figure}

The following formula is a propositional tautology:
$$
\left(\bigvee_{h\in C_{\miniminus e}^u}\delta_h\right) \to \left(\bigvee_{h\in C_0}\delta_h\right) \vee \left(\bigvee_{i=1}^n\bigvee_{h\in C_i}\delta_h\right).
$$
Thus, by Necessitation inference rule,
$$
\vdash \Box_e\left (\left(\bigvee_{h\in C_{\miniminus e}^u}\delta_h\right) \to \left(\bigvee_{h\in C_0}\delta_h\right) \vee \left(\bigvee_{i=1}^n\bigvee_{h\in C_i}\delta_h\right)\right).
$$
By Distributivity axiom,
$$
\vdash \Box_e\left(\bigvee_{h\in C_{\miniminus e}^u}\delta_h\right) \to \Box_e\left(\left(\bigvee_{h\in C_0}\delta_h\right) \vee \left(\bigvee_{i=1}^n\bigvee_{h\in C_i}\delta_h\right)\right).
$$
By Lemma~\ref{vee lemma} and laws of propositional logic,
$$
\vdash \Box_e\left(\bigvee_{h\in C_{\miniminus e}^u}\delta_h\right) \to  \left(\left(\bigvee_{h\in C_0}\delta_h\right) \vee \left(\bigvee_{i=2}^n\bigvee_{h\in C_i}\delta_h\right)\right)\vee \Box_{g_1}\left(\bigvee_{h\in C_1}\delta_h\right).
$$
By Necessitation rule,
$$
\vdash \Box_e\left(\Box_e\left(\bigvee_{h\in C_{\miniminus e}^u}\delta_h\right) \to  \left(\left(\bigvee_{h\in C_0}\delta_h\right) \vee \left(\bigvee_{i=2}^n\bigvee_{h\in C_i}\delta_h\right)\right)\vee \Box_{g_1}\left(\bigvee_{h\in C_1}\delta_h\right)\right).
$$
By Distributivity axiom,
$$
\vdash \Box_e\Box_e\left(\bigvee_{h\in C_{\miniminus e}^u}\delta_h\right) \to  \Box_e\left(\left(\left(\bigvee_{h\in C_0}\delta_h\right) \vee \left(\bigvee_{i=2}^n\bigvee_{h\in C_i}\delta_h\right)\right)\vee \Box_{g_1}\left(\bigvee_{h\in C_1}\delta_h\right)\right).
$$
By Positive Introspection axiom,
$$
\vdash \Box_e\left(\bigvee_{h\in C_{\miniminus e}^u}\delta_h\right) \to  \Box_e\left(\left(\left(\bigvee_{h\in C_0}\delta_h\right) \vee \left(\bigvee_{i=2}^n\bigvee_{h\in C_i}\delta_h\right)\right)\vee \Box_{g_1}\left(\bigvee_{h\in C_1}\delta_h\right)\right).
$$
By Lemma~\ref{vee lemma} and the laws of propositional logic,
$$
\vdash \Box_e\left(\bigvee_{h\in C_{\miniminus e}^u}\delta_h\right) \to  \left(\left(\bigvee_{h\in C_0}\delta_h\right) \vee \left(\bigvee_{i=3}^n\bigvee_{h\in C_i}\delta_h\right)\vee \left(\bigvee_{i=1}^2\Box_{g_i}\left(\bigvee_{h\in C_i}\delta_h\right)\right)\right).
$$
By repeating the previous steps $n-2$ more times,
$$
\vdash \Box_e\left(\bigvee_{h\in C_{\miniminus e}^u}\delta_h\right) \to  \left(\left(\bigvee_{h\in C_0}\delta_h\right) \vee \left(\bigvee_{i=1}^n\Box_{g_i}\left(\bigvee_{h\in C_i}\delta_h\right)\right)\right).
$$
Since, $\Box_e\left(\bigvee_{h\in C_{\miniminus e}^u}\delta_h\right)\in M$ and set $M$ is a maximal consistent set of formulas,
$$
\left(\bigvee_{h\in C_0}\delta_h\right) \vee \left(\bigvee_{i=1}^n\Box_{g_i}\left(\bigvee_{h\in C_i}\delta_h\right)\right)\in M.
$$
Due to (\ref{bridge eq}) and the maximality of set $M$, there must exist an edge $h\in C_0$ such that $\delta_h\in M$. By the definition of $C_0$, there is a path $e,v_1,e_1,v_2,e_2,\dots,v_k,e_k$ in $\Omega$ containing $h$. Let $e_m$ be the first edge along this path such that $\delta_{e_m}\in M$. Note that $e_m\neq e$ because $\delta_e\notin M$ by the assumption of the claim. Then, $e,v_1,e_1,v_2,e_2,\dots,v_m,e_m$ is the required path in $\Gamma$. 
\end{proof}

Another property that holds for the network flow is: if water is pumped into an edge $e_0$, then there must exist a sink edge $e_k$ and a path $e_0,v_1,e_1,\dots,v_k,e_k$ such that there is a water flow along this path in the direction from edge $e_0$ to edge $e_k$. We capture this property in the canonical protocol case by the following lemma. 

\begin{lemma}\label{path exists 2}
For any edge $e\in E$, and any $\delta\in\Delta(\Sigma)$, if $\Box_e\delta\in M$ and $\delta_e\notin M$, then there is a path in set $\Gamma_M$ that starts with edge $e$. 
\end{lemma}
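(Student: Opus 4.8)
The plan is to reduce the statement to Lemma~\ref{path exists}: that lemma already produces a path in $\Gamma_M$ starting with a given edge $e$, provided $\delta_e\notin M$ (which is assumed) and $\Box_{e}\bigvee_{h\in C^{w}_{\miniminus e}}\delta_h\in M$ for one of the endpoints $w$ of $e$. Writing $\delta=\bigvee_{h\in E}\delta_h$, I would first dispose of the trivial case $E=\{e\}$: there $\delta=\delta_e$, so $\Box_e\delta\in M$ and the Truth axiom give $\delta_e\in M$, contradicting $\delta_e\notin M$. Assuming now $E\setminus\{e\}\neq\emptyset$, it remains to find an endpoint $w$ with $\Box_{e}\bigvee_{h\in C^{w}_{\miniminus e}}\delta_h\in M$, and I would split into the cases $e\notin\mathcal{B}$ and $e\in\mathcal{B}$.

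If $e\notin\mathcal{B}$, then $(V,E\setminus\{e\})$ is connected, so for an endpoint $w$ of $e$ the component $C^{w}_{\miniminus e}$ contains exactly the edges in $E\setminus\{e\}$, and $\delta$ is propositionally equivalent to $\delta_e\vee\bigvee_{h\in C^{w}_{\miniminus e}}\delta_h$. Since any path starting with $e$ contains $e$, edge $e$ is a gateway between $\{e\}$ and $C^{w}_{\miniminus e}$; moreover $\delta_e\in\Phi(\Sigma,\{e\})$ and $\bigvee_{h\in C^{w}_{\miniminus e}}\delta_h\in\Phi(\Sigma,C^{w}_{\miniminus e})$, since $\Phi(\Sigma,T)$ contains each $\Phi(\Sigma,\{h\})$ for $h\in T$ and is closed under disjunction. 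Hence Lemma~\ref{vee lemma} yields $\vdash\Box_e\delta\to\bigl(\delta_e\vee\Box_e\bigvee_{h\in C^{w}_{\miniminus e}}\delta_h\bigr)$, and from $\Box_e\delta\in M$, $\delta_e\notin M$ and maximality of $M$ we get $\Box_e\bigvee_{h\in C^{w}_{\miniminus e}}\delta_h\in M$. Lemma~\ref{path exists} finishes this case.

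If $e\in\mathcal{B}$, with endpoints $u$ and $u'$, then removing $e$ yields the two components $C^{u}_{\miniminus e}$ and $C^{u'}_{\miniminus e}$ whose edge sets partition $E\setminus\{e\}$, and $e$ is a gateway between these two edge sets, since a path joining the two sides cannot avoid the bridge $e$. Set $\phi=\delta_e$, $\psi=\bigvee_{h\in C^{u}_{\miniminus e}}\delta_h$, $\chi=\bigvee_{h\in C^{u'}_{\miniminus e}}\delta_h$, which lie in $\Phi(\Sigma,\{e\})$, $\Phi(\Sigma,C^{u}_{\miniminus e})$, and $\Phi(\Sigma,C^{u'}_{\miniminus e})$ respectively. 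Since $\delta$ is propositionally equivalent to $\phi\vee\psi\vee\chi$, we have $\Box_e(\phi\vee\psi\vee\chi)\in M$ (by Necessitation and Distributivity applied to the equivalence), and Lemma~\ref{second vee lemma} then gives $\phi\vee\Box_e\psi\vee\Box_e\chi\in M$. Because $\delta_e=\phi\notin M$, maximality of $M$ forces $\Box_e\psi\in M$ or $\Box_e\chi\in M$, i.e. $\Box_{e}\bigvee_{h\in C^{w}_{\miniminus e}}\delta_h\in M$ for the corresponding endpoint $w$, and Lemma~\ref{path exists} again produces the desired path.

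I expect the only non-routine points to be the two gateway verifications, namely that $e$ is a gateway between $\{e\}$ and the remaining edges in the first case and between the two bridge components in the second, together with checking that $\psi$ and $\chi$ belong to the restricted languages $\Phi(\Sigma,\cdot)$, since these are exactly the side conditions required to invoke Lemmas~\ref{vee lemma} and \ref{second vee lemma}. Stripping the disjunct $\delta_e$ off $\Box_e\delta$ is where the hypothesis $\delta_e\notin M$ is consumed; everything else is Necessitation, the Distributivity axiom, and propositional reasoning inside the maximal consistent set $M$.
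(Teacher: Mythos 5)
Your proof is correct and follows essentially the same route as the paper's: split on whether $e$ is a bridge, use Lemma~\ref{vee lemma} (non-bridge case) or Lemma~\ref{second vee lemma} (bridge case) to strip the disjunct $\delta_e$ off $\Box_e\delta$ inside the maximal consistent set $M$, and then invoke Lemma~\ref{path exists} with $\Box_e\bigvee_{h\in C^{w}_{\miniminus e}}\delta_h\in M$ for a suitable endpoint $w$. Your explicit disposal of the degenerate case $E=\{e\}$ is a small extra precaution the paper omits, but it does not change the argument.
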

\begin{proof}
Let $e\in Edge(u,u')$. There are two cases:

\noindent{\em Case I:} $e\in E\setminus \cal B$.  Note that $e$ is a gateway between sets $\{e\}$ and $E\setminus\{e\}$. Then, by Lemma~\ref{vee lemma}, 
\begin{equation}\label{tuesday}
\vdash \Box_e\left(\delta_e\vee \bigvee_{h\in C^u_{\miniminus e}}\delta_h\right)\to\left(\delta_e\vee\Box_e\bigvee_{h\in C^u_{\miniminus e}}\delta_h\right)
\end{equation}
At the same time, component $C^u_{\miniminus e}$ contains all edges of the graph except for edge $e$ due to the assumption $e\in E\setminus \cal B$. Thus,
$$
\delta\to\delta_e\vee \bigvee_{h\in C^u_{\miniminus e}}\delta_h
$$
is a propositional tautology. Hence, by Necessitation inference rule,
$$
\vdash \Box_e\left(\delta\to\delta_e\vee \bigvee_{h\in C^u_{\miniminus e}}\delta_h\right).
$$
By Distributivity axiom and Modus Ponens inference rule,
$$
\vdash \Box_e\delta\to\Box_e\left(\delta_e\vee \bigvee_{h\in C^u_{\miniminus e}}\delta_h\right).
$$
Using statement~(\ref{tuesday}) and the laws of propositional logic,
$$
\vdash \Box_e\delta\to \delta_e\vee\Box_e\bigvee_{h\in C^u_{\miniminus e}}\delta_h.
$$
Recall that $\Box_e\delta\in M$ and $\delta_e\notin M$. Thus, $\Box_e\bigvee_{h\in C^u_{\miniminus e}}\delta_h\in M$, due to the maximality and the consistency of set $M$. Then, the required follows from Lemma~\ref{path exists}.

\noindent{\em Case II:} $e\in \cal B$. Thus, edge $e$ is a gateway between edges of the component $C^u_{\miniminus e}$ and edges of the component $C^{u'}_{\miniminus e}$. Thus, by Lemma~\ref{second vee lemma}, 
\begin{equation}\label{tuesday3}
\vdash \Box_e\left(\delta_e\vee \bigvee_{h\in C^u_{\miniminus e}}\delta_h\vee \bigvee_{h\in C^{u'}_{\miniminus e}}\delta_h\right) 
\to\left(\delta_e\vee\Box_e\bigvee_{h\in C^u_{\miniminus e}}\delta_h \vee\Box_e\bigvee_{h\in C^{u'}_{\miniminus e}}\delta_h\right).
\end{equation}
At the same time, notice that the formula
$$
\delta\to\delta_e\vee \bigvee_{h\in C^u_{\miniminus e}}\delta_h
\vee \bigvee_{h\in C^{u'}_{\miniminus e}}\delta_h
$$
is a propositional tautology. Thus, by Necessitation inference rule,
$$
\vdash \Box_e\left(\delta\to\delta_e\vee \bigvee_{h\in C^u_{\miniminus e}}\delta_h\vee \bigvee_{h\in C^{u'}_{\miniminus e}}\delta_h\right).
$$
By Distributivity axiom and Modus Ponens inference rule,
$$
\vdash \Box_e\delta\to\Box_e\left(\delta_e\vee \bigvee_{h\in C^u_{\miniminus e}}\delta_h\vee \bigvee_{h\in C^{u'}_{\miniminus e}}\delta_h\right).
$$
Using statement~(\ref{tuesday3}) and the laws of propositional logic,
$$
\vdash \Box_e\delta\to \delta_e\vee\Box_e\bigvee_{h\in C^u_{\miniminus e}}\delta_h\vee\Box_e\bigvee_{h\in C^{u'}_{\miniminus e}}\delta_h.
$$
Recall that $\Box_e\delta\in M$ and $\delta_e\notin M$. Thus, $\Box_e\bigvee_{h\in C^u_{\miniminus e}}\delta_h\in M$ or $\Box_e\bigvee_{h\in C^{u'}_{\miniminus e}}\delta_h\in M$, due to the maximality and the consistency of set $M$. In either case, the required follows from Lemma~\ref{path exists}.
\end{proof}

In general, the completeness of a modal logic is often proven through a construction that converts a maximal consistent set of formulas into a world of a ``canonical" model for this set of formulas. In our case, the canonical model is represented by protocol $\mathcal{P}^\delta_E$. Instead of a Kripke world, we construct a special run of this protocol. The construction is done recursively for an arbitrary $\mathcal{P}^\delta_F$ in the theorem below. Informally, in term of the network flow model, the theorem states that for any maximal consistent set of formulas $X$ there is a network flow on the graph that satisfies this set of formulas.

\begin{theorem}\label{run exists}
For every $\delta\in\Delta(\Sigma)$ every $F\subseteq E$ and every maximal consistent set $M$ there is a run $r=\langle X^e, \{f^e_{u} \}_{u\in Inc(e)}\rangle_{e\in E}$ of protocol ${\cal P}^\delta_F$ such that  for each $e\in E$, we have $X^e =M \cap \Phi(\Sigma,\{e\})$.
\end{theorem}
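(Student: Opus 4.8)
The plan is to prove the theorem by induction on $|F|$. The base case $F=\varnothing$ is easiest: set $X^e = M\cap\Phi(\Sigma,\{e\})$ for each $e\in E$ (these are maximal consistent subsets of $\Phi(\Sigma,\{e\})$ by standard arguments) and choose the flow values $f^e_u$ to witness the remaining constraints. The only constraints in Definition~\ref{values} that survive when $F=\varnothing$ are parts 1(a)--(c), 2(a)--(b), and 3(a). The natural choice is to put $f^e_u = f^e_{u'} = \tfrac12$ whenever $\delta_e\in X^e$ (so that $f^e_u+f^e_{u'}>0$, satisfying 1(c)) and $f^e_u = f^e_{u'} = 0$ otherwise. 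With this choice 2(a) holds trivially, 2(b) holds vacuously since no $f^h_v$ is negative, and 3(a) holds vacuously as well. The local condition $\sum_{e\in Inc(v)} f^e_v \ge 0$ (required only when no incident edge $h$ has $\delta_h\in X^h$) holds since all flow values are nonnegative.

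For the induction step, suppose the result holds for all $F'$ with $|F'|<|F|$, and fix $F$ with $h\in F$; let $F' = F\setminus\{h\}$. By the induction hypothesis there is a run $r' = \langle X^e, \{f^e_u\}\rangle_{e\in E}$ of ${\cal P}^\delta_{F'}$ with $X^e = M\cap\Phi(\Sigma,\{e\})$. By Lemma~\ref{sub protocol} this $r'$ is also a run of ${\cal P}^\delta_{F'}$, and the only constraints it might fail for ${\cal P}^\delta_F$ are the ones involving $h$: part 2(c) if $h\in\cal B$, or part 3(b) if $h\notin\cal B$. If $\Box_h\delta\notin X^h$ or $\delta_h\in X^h$, these constraints are vacuous and $r'$ is already a run of ${\cal P}^\delta_F$, so assume $\Box_h\delta\in X^h$ and $\delta_h\notin X^h$. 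Now invoke Lemma~\ref{path exists 2} with $M$: there is a path $e_0,v_1,e_1,\dots,v_k,e_k$ in $\Gamma_M$ with $e_0 = h$. The key idea is to \emph{add a unit of water flow along this path}: for each $i$, increase the inflow at the $v_{i+1}$-end of $e_i$ and correspondingly set the outflow, so that $\delta_{e_k}\in M = $ (the sink at $e_k$) absorbs it, and every intermediate vertex $v_{i+1}$ has its net flow unchanged (inflow from $e_i$ balanced by outflow into $e_{i+1}$). More precisely, one adds $+1$ to $f^{e_0}_{v_1}$ (this is the ``pumping in at $h$'' needed to make 2(c) or 3(b) true), and for $1\le i<k$ adjusts $f^{e_i}_{v_i}$ and $f^{e_i}_{v_{i+1}}$ to route the unit through, and at $e_k$ the sink (open because $\delta_{e_k}\in X^{e_k}$, so 1(c) permits $f^{e_k}_{v_k}+f^{e_k}_{v_{k+1}}>0$) absorbs it; one may need to first scale $r'$ by a small $\lambda>0$ using Corollary~\ref{scale} so that the added unit does not violate any strict-inequality constraint in the wrong direction along the path.

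The main obstacle is verifying that this flow modification preserves \emph{all} constraints of Definition~\ref{values} and the local conditions along the path — in particular the bridge conditions 2(b) (``if $f^h_v<0$ then $\Box_h(\bigvee_{e\in C^v_{-h}}\delta_e)\in X^h$''): one must check that whenever a flow value $f^{e_i}_{v_i}$ becomes negative after the modification, the corresponding $\Box$-formula is in $M$, and this is exactly what clause~3 of Definition~\ref{gamma} guarantees for bridges $e_i$ on the path. One must also handle the sign bookkeeping at $e_0 = h$: the added $+1$ inflow at the $v_1$-end makes $f^h_{v_1}$ larger, but 2(c)/3(b) require a \emph{negative} flow somewhere; this is resolved because clause~1 of Definition~\ref{gamma} gives $\Box_{e_0}(\bigvee_{h'\in C^{v_1}_{-e_0}}\delta_{h'})\in M$, and the convention is that the water flows \emph{out of} $h$ toward $v_1$ into the rest of the component, i.e.\ one adds a negative value at the $v_1$-end of $h$ (water leaving $h$) and the path carries it to the sink — so the sign arithmetic must be set up consistently with the direction conventions in Definition~\ref{values} part 1(b) and the informal discussion. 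Finally, local conditions at each $v_{i+1}$ are preserved because the net addition to $\sum_{e\in Inc(v_{i+1})} f^e_{v_{i+1}}$ is zero (the unit enters from $e_i$ and leaves into $e_{i+1}$), and clause~2 of Definition~\ref{gamma} ($\delta_{e_i}\notin M$ for $i<k$) ensures the stricter form of the local condition still applies consistently; the endpoint $v_{k+1}$ gains net inflow, which only helps since the local condition is a $\ge 0$ inequality.
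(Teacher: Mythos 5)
Your overall strategy is the same as the paper's: induction on $|F|$, the identical base case with indicator flows, and, in the induction step, augmenting the flow along a path from $\Gamma_M$ supplied by Lemma~\ref{path exists 2}, with magnitude control (your rescaling via Corollary~\ref{scale} plays the role of the paper's choice of $\lambda$ larger than all existing flow values). However, there is a genuine gap in how you treat the first edge $e_0=h$ of the augmenting path: the modification there must differ according to whether $h$ is a bridge, and your uniform prescription cannot satisfy both cases. If $h\in\mathcal{B}$ and $\delta_h\notin X^h$, condition 2(a) of Definition~\ref{values} forces $f^h_{v_0}+f^h_{v_1}=0$ to be preserved, so you may not simply ``add a negative value at the $v_1$-end of $h$'': you must route the water \emph{through} $h$, adding $+\lambda$ at $v_0$ and $-\lambda$ at $v_1$, which keeps the sum zero while making $\widehat{f}^h_{v_1}<0$ as required by 2(c) (with 2(b) at $v_1$ then covered by clause 1 of Definition~\ref{gamma}). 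If instead $h\notin\mathcal{B}$, condition 3(b) demands that $f^h_{v_0}+f^h_{v_1}$ become strictly negative, so you must \emph{not} compensate at $v_0$; a pass-through modification would leave the sum at its old value, which may be $0$. Your text first adds $+1$ at the $v_1$-end (the wrong sign for both 2(c) and 3(b)), then flags that ``the sign arithmetic must be set up consistently'' without resolving it, and never separates these two cases; that separation is exactly the content of the paper's two-case definition of $\widehat{f}^e_u$, and without it one of 2(a) or 3(b) is violated.

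Two smaller points. Your remark that ``the endpoint $v_{k+1}$ gains net inflow'' does not match a consistent construction: in the correct augmentation the unit enters $e_k$ at $v_k$ and is absorbed by the sink in the middle of that pipe, so $v_{k+1}$ is untouched; the only vertex whose sum changes is $v_0$, and only in the bridge case, where it increases (harmless). Also, besides 2(b) you still need to re-verify 2(c) for intermediate bridge edges $e_i$ with $i<k$ that lie in $F$ (this is where the magnitude control is actually used: the amount subtracted at $v_{i+1}$ must exceed $|f^{e_i}_{v_{i+1}}|$), as well as 1(c), 2(a), 3(a), 3(b) at $e_0$ and $e_k$; these checks are routine, but they are precisely where the bridge/non-bridge dichotomy above is indispensable.
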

\begin{proof} 
We prove the theorem by induction on the size of set $F$. 

If $F=\varnothing$, for each $e\in E$ and each $u\in Inc(e)$, let 
$$
f^e_u =
\begin{cases}
1, &\mbox{if $\delta_e\in X^e$},\\
0, &\mbox{otherwise}. 
\end{cases}
$$

\begin{claim}
Tuple $\langle X^e, \{f^e_{u} \}_{u\in Inc(e)}\rangle_{e\in E}$ is a run of protocol ${\cal P}^\delta_{\varnothing}$.
\end{claim}
\begin{proof}
The claim immediately follows from Definition~\ref{values} and the definition of local conditions of protocol ${\cal P}^\delta_\varnothing$ on page~\pageref{local conditions}.
\end{proof}

Next, assume that $F = F' \cup \{h\}$. By the induction hypothesis, there is a run $r=\langle X^e, \{f^e_{u} \}_{u\in Inc(e)}\rangle_{e\in E}$ of protocol ${\cal P}^\delta_{F'}$ such that $X^e =M \cap \Phi(\Sigma,\{e\})$ for each $e\in E$. If $\Box_h\delta\notin X^h$ or $\delta_h\in X^h$, then, by Definition~\ref{values}, run $r$ is a run of protocol ${\cal P}^\delta_{F}$. Suppose now that $\Box_h\delta\in X^h$ and $\delta_h\notin X^h$. Let $\lambda$ be any positive real number such that
$$
\lambda > |f^e_u|
$$
for each $e\in E$ and each $u\in Inc(e)$. By the assumption $\Box_h\delta\in X^h$ and Lemma~\ref{path exists 2}, there is a path $e_0,v_1,e_1,v_2,e_2,\dots,v_k,e_k$ in $\Gamma_M$ such that $e_0=h$. Let $v_0$ be the end of edge $h$ different from $v_1$ and let $v_{k+1}$ be the end of edge $e_k$ different from $v_k$. We next define a tuple $\widehat{r}=\langle X^e, \{\widehat{f}^e_{u} \}_{u\in Inc(e)}\rangle_{e\in E}$, for which we consider two cases, see Figures~\ref{lambda bridge figure} and \ref{lambda nobridge figure}:

\begin{figure}[ht]
\begin{center}
\vspace{3mm}
\scalebox{.5}{\includegraphics{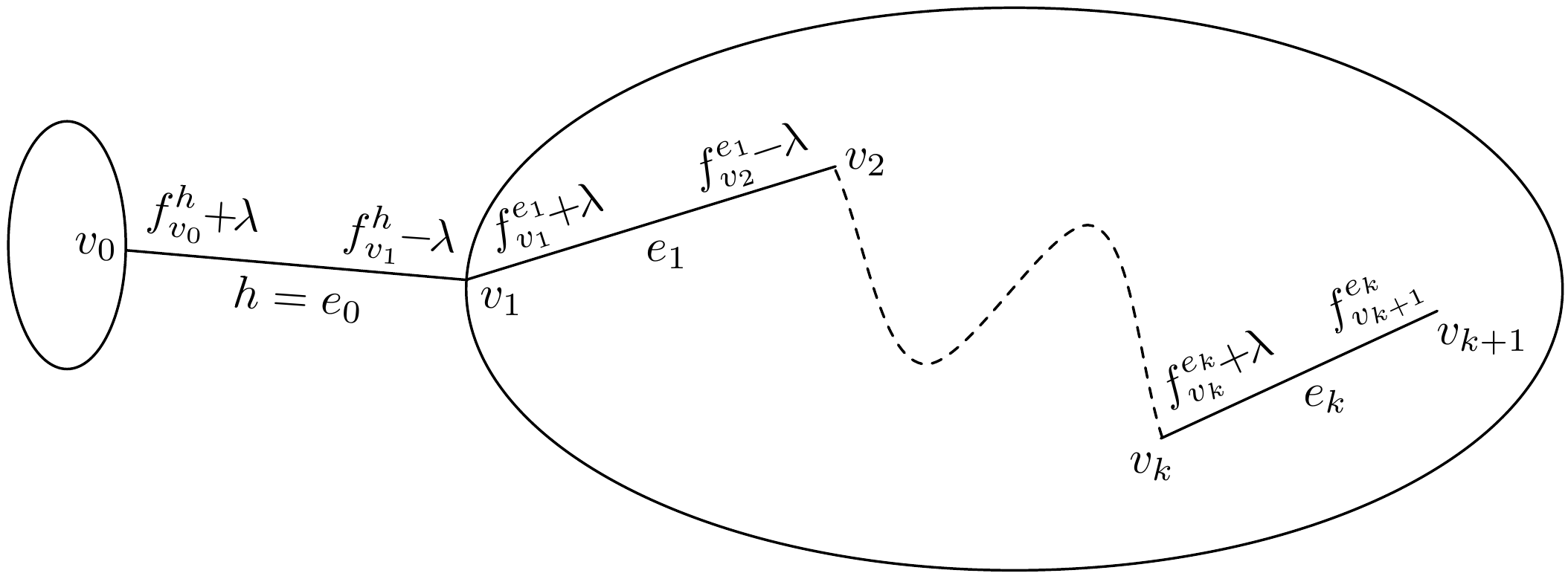}}
\vspace{0mm}
\caption{Definition of $\widehat{f}^e_u$ if $h\in \cal B$.}\label{lambda bridge figure}
\vspace{0cm}
\end{center}
\vspace{0cm}
\end{figure}

\noindent{Case I}: If $h\in \cal B$, then for each $e\in E$ and each $u\in Inc(e)$,
$$
\widehat{f}^e_u =
\begin{cases}
f^{e}_{u}+\lambda, & \mbox{where $e=e_i$, $u=v_i$, and $0\le i\le k$,}\\
f^{e}_{u}-\lambda, & \mbox{where $e=e_i$, $u=v_{i+1}$, and $0\le i<k$,}\\
f^e_u, & \mbox{otherwise.}
\end{cases}
$$

\begin{figure}[ht]
\begin{center}
\vspace{3mm}
\scalebox{.5}{\includegraphics{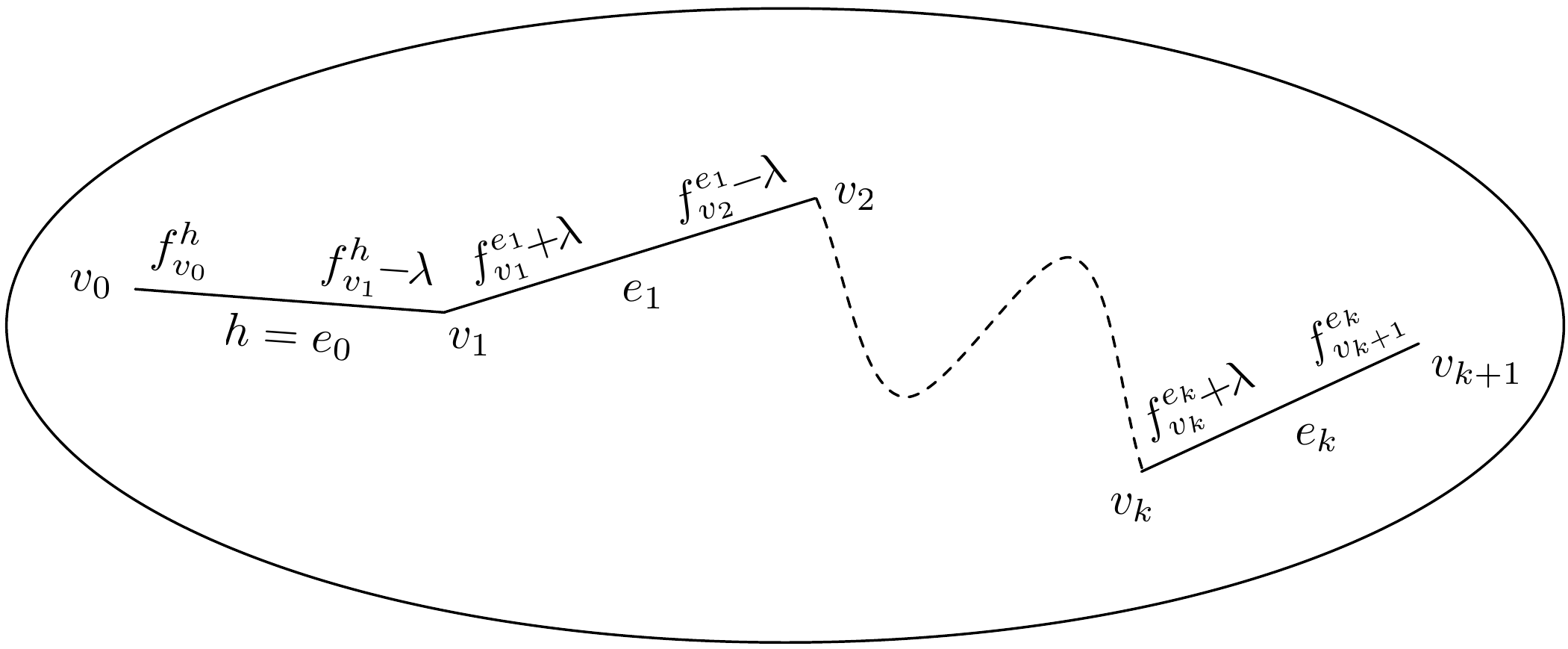}}
\vspace{0mm}
\caption{Definition of $\widehat{f}^e_u$ if $h\in E\setminus  \cal B$.}\label{lambda nobridge figure}
\vspace{0cm}
\end{center}
\vspace{0cm}
\end{figure}

\noindent{Case II}: If $h\in E\setminus \cal B$,  then for each $e\in E$ and each $u\in Inc(e)$,
$$
\widehat{f}^e_u =
\begin{cases}
f^{e}_{u}+\lambda, & \mbox{where $e=e_i$, $u=v_i$, and $0< i\le k$,}\\
f^{e}_{u}-\lambda, & \mbox{where $e=e_i$, $u=v_{i+1}$, and $0\le i<k$,}\\
f^e_u, & \mbox{otherwise.}
\end{cases}
$$
This defines tuple $\widehat{r}$.

\begin{claim}\label{hat removal}
$\widehat{f}^e_u+\widehat{f}^e_{u'}=f^e_u+f^e_{u'}$, for each $e\in Edge(u,u')\in E\setminus\{e_0,e_k\}$.
\end{claim}
\begin{proof}
If $e=e_i$ for some $0<i<k$, then 
$$\widehat{f}^e_u+\widehat{f}^e_{u'}=\widehat{f}^e_{v_i}+\widehat{f}^e_{v_{i+1}}=
{f}^e_{v_i}+\lambda+{f}^e_{v_{i+1}}-\lambda={f}^e_{v_i}+{f}^e_{v_{i+1}}
=f^e_u+f^e_{u'}.$$
Otherwise, $\widehat{f}^e_u={f}^e_{u}$ and $\widehat{f}^e_{u'}={f}^e_{u'}$. Thus, $\widehat{f}^e_u+\widehat{f}^e_{u'}=f^e_u+f^e_{u'}$.
\end{proof}

\begin{claim}
Tuple $\widehat{r}$ is a run of protocol ${\cal P}^\delta_{F}$.
\end{claim}
\begin{proof}
We need to verify that the tuple $\widehat{r}$ satisfies the conditions of Definition~\ref{values} and the local conditions of the run ${\cal P}^\delta_{F}$ on page~\pageref{local conditions}. Below by $v_{k+1}$  we denote the end of edge $e_k$ different from vertex $v_k$. We start with conditions of Definition~\ref{values}.
\begin{itemize}
\item[{\sf 1(c)}] Due to Claim~\ref{hat removal} and the assumption that $r$ is a run of protocol ${\cal P}^\delta_{F'}$, we only need to verify condition 1(c) for edges $e_0$ and $e_k$. 

We first verify this condition for edge $e_0$. Note that $e_0=h$. Thus, $\delta_{e_0}\notin X^{e_0}$ due to our assumption. Hence, $f^{e_0}_{v_0}+f^{e_0}_{v_1}\le 0$, because run $r$ satisfies condition 1(c) of Definition~\ref{values}.

If $e_0\in \cal B$, then  
$$
\widehat{f}^{e_0}_{v_0}+\widehat{f}^{e_0}_{v_1}=
{f}^{e_0}_{v_0}+\lambda+{f}^{e_0}_{v_1}-\lambda={f}^{e_0}_{v_0}+{f}^{e_0}_{v_1}\le 0.
$$
If $e_0\notin \cal B$, then, since $\lambda>0$, 
$$
\widehat{f}^{e_0}_{v_0}+\widehat{f}^{e_0}_{v_1}=
{f}^{e_0}_{v_0}+{f}^{e_0}_{v_1}-\lambda<{f}^{e_0}_{v_0}+{f}^{e_0}_{v_1}\le 0.
$$
In either case, we have $\delta_{e_0}\notin X^{e_0}$ and $\widehat{f}^{e_0}_{u}+\widehat{f}^{e_0}_{u'}\le 0$. Thus, condition 1(c) is satisfied.

Next, we verify this condition for the edge $e_k$.  Note that $\delta_{e_k}\in X^{e_k}$, by Definition~\ref{gamma}. Thus, we only need to show that $\widehat{f}^{e_k}_{v_k}+\widehat{f}^{e_k}_{v_{k+1}}>0$. Indeed, ${f}^{e_k}_{v_k}+{f}^{e_k}_{v_{k+1}}>0$ because run $r$ satisfies condition 1(c). Thus, since $\lambda > 0$,
$$
\widehat{f}^{e_k}_{v_k}+\widehat{f}^{e_k}_{v_{k+1}}={f}^{e_k}_{v_k}+ \lambda +{f}^{e_k}_{v_{k+1}} > {f}^{e_k}_{v_k}+{f}^{e_k}_{v_{k+1}} > 0.
$$
\item[{\sf 2(a)}] Due to Claim~\ref{hat removal} and the assumption that $r$ is a run of protocol ${\cal P}^\delta_{F'}$, we again only need to verify condition 2(a) for edges $e_0$ and $e_k$. 

We first verify this condition for edge $e_0$. Note that $\delta_{e_0}\notin X^{e_0}$ by condition 2 of Definition~\ref{gamma}. Since run $r$ satisfies the condition 2(c) of Definition~\ref{values}, we have ${f}^{e_0}_{v_0}+{f}^{e_0}_{v_1}=0$. Hence,
$$
\widehat{f}^{e_0}_{v_0}+\widehat{f}^{e_0}_{v_1}=
{f}^{e_0}_{v_0}+\lambda+{f}^{e_0}_{v_1}-\lambda={f}^{e_0}_{v_0}+{f}^{e_0}_{v_1}= 0.
$$

For edge $e_k$ this condition is vacuously true because $\delta_{e_k}\in X^{e_k}$ due to condition 4 of Definition~\ref{gamma}.

\item[{\sf 2(b)}] By the definition of $\widehat{r}$, for each edge $b \in {\cal B} \setminus \{e_0,\dots,e_k\}$, and each vertex $u\in Inc(b)$, we have $\widehat{f}^b_u=f^b_u$. Thus, $\widehat{r}$ on any such edge satisfies condition 2(b) of Definition~\ref{values} because run $r$ does. 

We next show that condition 2(b) is satisfied for each $e_i$ such that $e_i\in \cal B$ and $0\le i\le k$. Indeed, consider any $u\in Inc(e_i)$ and suppose that $\widehat{f}^{e_i}_u<0$. 

If $u=v_i$, then, since $\lambda>0$,
$$
f^{e_i}_u=f^{e_i}_{v_i}=\widehat{f}^{e_i}_{v_i}-\lambda<\widehat{f}^{e_i}_u<0.
$$
Thus, $\Box_{e_i}\bigvee_{e\in C^u_{\miniminus e_i}}\delta_e\in X^{e_i}$ because run $r$ satisfies condition 2(b) of Definition~\ref{values}.

If $u=v_{i+1}$ and $i<k$, then condition 2(b) is satisfied due to condition 3 of Definition~\ref{gamma}.

Finally, if $i=k$ and $u=v_{k+1}$, then $\widehat{f}^{e_i}_u={f}^{e_i}_u$ by the definition of $\widehat{r}$. Thus, condition 2(b) is satisfied by run $\widehat{r}$ because it is satisfied by run $r$.
\item[{\sf 2(c)}] By the definition of $\widehat{r}$, for each edge $b \in {\cal B} \setminus \{e_0,\dots,e_k\}$, and each vertex $u\in Inc(b)$, we have $\widehat{f}^b_u=f^b_u$. Thus, $\widehat{r}$ on any such edge satisfies condition 2(c) of Definition~\ref{values} because run $r$ does. 

We will next show that condition 2(c) is satisfied for each $e_i$ such that $e_i\in \cal B$ and $0\le i< k$. Indeed, note that $\lambda>|f^{e_i}_{v_{i+1}}|$ due to the choice of $\lambda$. Thus
$$
\widehat{f}^{e_i}_{v_{i+1}}={f}^{e_i}_{v_{i+1}}-\lambda<0. 
$$
Finally, note that when $i=k$, we have $\delta_{e_k}\in X^{e_k}$. Therefore, condition 2(c) is vacuously true. 

\item[{\sf 3(a)}] Due to Claim~\ref{hat removal} and the assumption that $r$ is a run of protocol ${\cal P}^\delta_{F'}$, we again only need to verify condition 3(a) for edges $e_0$ and $e_k$. 

Note that $\Box_{h}\delta\in X^{h}$ by our assumption. Recall that $e_0=h$. Thus, $\Box_{e_0}\delta\in X^{e_0}$. Therefore, condition 3(a) is satisfied for edge $e_0$.

By condition 4 of Definition~\ref{gamma}, $\delta_{e_k}\in X^{e_k}$. Thus, as we have shown in the case 1(c) above, $\widehat{f}^{e_k}_{v_k}+\widehat{f}^{e_k}_{v_{k+1}}>0$. Therefore, condition 3(a) is vacuously true for edge $e_k$.

\item[{\sf 3(b)}] Due to Claim~\ref{hat removal} and the assumption that $r$ is a run of protocol ${\cal P}^\delta_{F'}$, we again only need to verify condition 3(b) for edges $e_0$ and $e_k$. 

Note that $\delta_h\notin X^{h}$ by our assumption. Recall that $e_0=h$. Thus, $\delta_{e_0}\notin X^{e_0}$. Since $r$ is a run of protocol ${\cal P}^{\delta}_{F'}$, by condition 1(c) of Definition~\ref{values}, we have ${f}^{e_0}_{v_0}+{f}^{e_0}_{v_1}\le 0$. Hence, due to $\lambda>0$,
$$
\widehat{f}^{e_0}_{v_0}+\widehat{f}^{e_0}_{v_1}={f}^{e_0}_{v_0}+{f}^{e_0}_{v_1}-\lambda \le 0 -\lambda < 0.
$$
Therefore, condition 3(b) is satisfied for edge $e_0$.
By condition 4 of Definition~\ref{gamma}, $\delta_{e_k}\in X^{e_k}$. Thus, condition 3(b) is vacuously true for edge $e_k$.
\end{itemize}
To show that local conditions (see page~\pageref{local conditions}) are satisfied at any vertex $u\in V$, it is sufficient to show that 
$$
\sum_{e\in Inc(u)}\widehat{f}^{e}_{u}
\ge \sum_{e\in Inc(u)}{f}^{e}_u.
$$

Consider first the case when $u=v_0$ and $e_0\in \cal B$. Since it has been assumed (see page~\pageref{simple path}) that vertices along any path do not repeat and because $\lambda>0$,
\begin{eqnarray*}
\sum_{e\in Inc(v_0)}\widehat{f}^{e}_{v_0} = 
\widehat{f}^{e_0}_{v_0} + \sum_{e\in Inc(v_0)\setminus\{e_0\}}\widehat{f}^{e}_{v_0}=
{f}^{e_0}_{v_0} +\lambda +
\sum_{e\in Inc(v_0)\setminus\{e_0\}}{f}^{e}_{v_0} \\
=\sum_{e\in Inc(v_0)}{f}^{e}_{v_0} +\lambda > \sum_{e\in Inc(v_0)}{f}^{e}_{v_0}.
\end{eqnarray*}

Next, consider the case when vertex $u=v_i$ for some $0<i\le k$. Then,
\begin{eqnarray*}
\sum_{e\in Inc(v_i)}\widehat{f}^{e}_{v_i} = \widehat{f}^{e_{i-1}}_{v_i}+\widehat{f}^{e_i}_{v_i}+ \sum_{e\in Inc(v_i)\setminus\{e_{i-1},e_i\}}\widehat{f}^{e}_{v_i}\\=
{f}^{e_{i-1}}_{v_i}-\lambda +{f}^{e_i}_{v_i}+\lambda +
\sum_{e\in Inc(v_i)\setminus\{e_{i-1},e_i\}}{f}^{e}_{v_i} = 
\sum_{e\in Inc(v_i)}{f}^{e}_{v_i}.
\end{eqnarray*}

Otherwise, the sum $\sum_{e\in Inc(u)}\widehat{f}^{e}_{u}$
and the sum $\sum_{e\in Inc(u)}{f}^{e}_u$ are equal because they consist of equal terms.
\end{proof}
This concludes the proof of Theorem~\ref{run exists}.
\end{proof}

 The previous theorem constructs a run (``epistemic world") that matches a maximal consistent set $M$ on all edges. The next theorem enhances the claim of the previous theorem by adding an additional condition on the run being constructed. Namely, if $h$ is a given edge of the graph and $r$ is a given run of the protocol, then the desired run $\widehat{r}$ can be constructed not only to match set $M$ on all edges, but also to satisfy the equation $\widehat{r}=_h r$. The theorem assumes, of course, that run $r$ itself matches set $M$ on edge $h$. In terms of the network flow model, the theorem states that if there is a network flow that satisfies local properties $M\cap \Phi(\Sigma,\{h\})$ at a given edge $h$, then this network flow can be modified to match properties in $M$ globally (on all edges of the graph). The proof of the theorem below explains how the water can be re-routed through the graph to achieve the desired outcome.

\begin{theorem}\label{run exists 2}
For each $h\in E$, each run $r=\langle X^e, \{f^e_{u} \}_{u\in Inc(e)}\rangle_{e\in E}$ of protocol ${\cal P}^\delta_E$, and each maximal consistent set $M$ such that $X^h=M\cap \Phi(\Sigma,\{h\})$, there is a run 
$$\widehat{r}=\langle \widehat{X}^e, \{\widehat{f}^e_{u} \}_{u\in Inc(e)} \rangle_{e\in E}$$ of protocol ${\cal P}^\delta_E$ such that
\begin{enumerate}
\item $\widehat{X}^e=M\cap \Phi(\Sigma,\{e\})$ for each $e\in E$,
\item $\widehat{r}=_h r$.
\end{enumerate}
\end{theorem}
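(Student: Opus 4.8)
The plan is to reduce the statement to a flow-theoretic assertion and then build the required flow by grafting a reference run onto the fixed value of edge $h$, repairing the at most two vertices where the graft can break a local condition. Concretely, put $\widehat X^e:=M\cap\Phi(\Sigma,\{e\})$ for every $e\in E$; then item~1 holds by construction, and since $X^h=M\cap\Phi(\Sigma,\{h\})=\widehat X^h$, item~2 reduces to the single demand $\widehat f^h_u=f^h_u$ at the two ends $u$ of $h$. Thus it remains to choose real numbers $\widehat f^e_u$ with $\widehat f^h_u=f^h_u$ making $\widehat r=\langle\widehat X^e,\{\widehat f^e_u\}\rangle$ a run of $\mathcal P^\delta_E$. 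By Theorem~\ref{run exists} applied to $F=E$ there is a run $s=\langle\widehat X^e,\{g^e_u\}\rangle$ of $\mathcal P^\delta_E$ whose first coordinates already coincide with $\widehat X^e$ everywhere; in particular $s$ and $r$ carry the same maximal consistent set on edge $h$, so comparing part~1(c) with parts~3(a)--3(b) (if $h\notin\mathcal B$) or with part~2(a) and Lemma~\ref{gateway zero} (if $h\in\mathcal B$) shows that $f^h_v+f^h_{v'}$ and $g^h_v+g^h_{v'}$ always have the same sign and vanish together. Using Corollary~\ref{scale} I would rescale $s$ by a positive $\lambda$ so that $\lambda(g^h_v+g^h_{v'})=f^h_v+f^h_{v'}$, and, whenever the signs of $g^h_v$ and $f^h_v$ permit, so that in fact $\lambda g^h_u=f^h_u$ at both ends.

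Now set $\widehat f^e_u:=\lambda g^e_u$ for $e\ne h$ and $\widehat f^h_u:=f^h_u$. Every part of Definition~\ref{values} that mentions only edges other than $h$ holds because the rescaled $s$ is a run, and every part concerning edge $h$ holds because $r$ is a run and $\widehat X^h=X^h$; so the only conditions that can fail are the local conditions at the two ends $v,v'$ of $h$. Writing $c:=f^h_v-\lambda g^h_v$, the choice of $\lambda$ gives $f^h_{v'}-\lambda g^h_{v'}=-c$, so the graft changes the vertex sum at $v$ by $+c$ and at $v'$ by $-c$; after possibly renaming the two ends I may assume $c\ge0$, so the local condition at $v$ still holds, and if $c=0$ we are finished. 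Otherwise the local condition at $v'$ fails only when $\delta_e\notin M$ for every $e\in Inc(v')$ (in particular $\delta_h\notin M$) and the vertex sum at $v'$ has slipped below $0$ but above $-c$.

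To repair $v'$ I would add $c$ units of flow that originate at the vertex $v'$ and travel away from it along a path; as in the construction inside the proof of Theorem~\ref{run exists}, such an addition raises the vertex sum at $v'$ by exactly $c$ while leaving every other vertex sum and every edge-sum untouched, so no further part of Definition~\ref{values} is disturbed. If $h\notin\mathcal B$, the $c$ units are sent along a simple path from $v'$ to $v$ inside $G-h$ (such a path exists, and, because $h$ joins its own endpoints, it contains no bridge, so the new flow values it creates never activate parts 2(b)--2(c)); the $c$ units arriving at $v$ cancel the $+c$ caused there by grafting, and both local conditions are restored. If $h\in\mathcal B$, no circulation is available, but then one shows that $c>0$ forces $f^h_{v'}<0$ (using Lemma~\ref{gateway zero} and part~2(c) for $r$), whence part~2(b) for $r$ gives $\Box_h\big(\bigvee_{e\in C^{v'}_{\miniminus h}}\delta_e\big)\in M$; Lemma~\ref{path exists} then yields a path in $\Gamma_M$ issuing from $v'$ into $C^{v'}_{\miniminus h}$ and ending at an edge $e_k$ with $\delta_{e_k}\in M$, and I would route the $c$ units along it, re-verifying the parts of Definition~\ref{values} one by one exactly as in Case~I of Theorem~\ref{run exists}.

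I expect the bridge case of the repair step to be the main obstacle: pinning the value of a bridge $h$ can force a flow imbalance that no circulation can absorb, so the excess has to reach a genuine sink on the correct side of $h$; establishing that such a sink exists (this is where parts~1(c) and 2(a)--2(c) together with Lemma~\ref{gateway zero} come in) and that a $\Gamma_M$-path reaches it without ever touching edge $h$, and then carrying out the clause-by-clause re-verification of Definition~\ref{values} after the re-routing, is the technical heart of the proof.
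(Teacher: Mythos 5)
Your overall strategy is essentially the paper's: take the reference run supplied by Theorem~\ref{run exists}, rescale it via Corollary~\ref{scale}, pin the value of $h$, and repair the resulting imbalance either by a circulation through a cycle containing $h$ (when $h\notin\cal B$, matching the paper's Case~II) or by routing extra flow along a $\Gamma_M$-path obtained from Lemma~\ref{gateway zero}, condition 2(b), and Lemma~\ref{path exists} (when $h\in\cal B$, matching the paper's Case~III); your sign analysis showing that $c>0$ forces $f^h_{v'}<0$ after the ``match when signs permit'' rescaling is correct, and pinning $h$ outright (instead of the paper's $\mu(\ell\pm\lambda)$ formulas in Subcase~IIIc) is a legitimate simplification. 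The non-bridge repair is sound as written, since the cycle through $h$ contains no bridges and through-flow leaves all edge sums and all intermediate vertex sums unchanged.

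There is, however, a concrete flaw in the bridge-case repair: you route \emph{exactly} $c$ units along the $\Gamma_M$-path and claim the clause-by-clause check goes ``exactly as in Case~I of Theorem~\ref{run exists}.'' That verification relies on the routed amount being \emph{larger than every existing flow magnitude} (the paper chooses $\lambda>|f^e_u|$, resp.\ $\mu(\ell^{e_i}_{v_{i+1}}-\lambda)<0$, precisely so that the downstream end of every bridge on the path becomes strictly negative, which is what secures condition 2(c) of Definition~\ref{values}). With exactly $c$ units this can fail: a bridge $e_i$ on the path with $\delta_{e_i}\notin M$ and $\Box_{e_i}\delta\in X^{e_i}$ has nonzero antisymmetric values $(x,-x)$ by Lemma~\ref{gateway zero} and condition 2(a), and if it happens that $x=-c$ the modified values are $(0,0)$, violating 2(c). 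Relatedly, your blanket claim that the repair leaves ``every edge-sum untouched'' is false for the terminal sink edge $e_k$ (its sum increases, which is harmless only because $\delta_{e_k}\in M$ and 1(c) asks for a positive sum). The gap is easily closed: since the local condition at $v'$ is an inequality, route any amount $t\ge c$ with $t$ also exceeding all $|\lambda g^e_u|$; then the downstream ends along the path are strictly negative and the verification does proceed as in the paper. As stated, though, the ``technical heart'' you identify is not carried out correctly in this degenerate case.
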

\begin{proof}
By Theorem~\ref{run exists}, there is a run $r'=\langle Y^e, \{\ell^e_{u} \}_{u\in Inc(e)} \rangle_{e\in E}$ of protocol ${\cal P}^\delta_E$ such that $Y^e=M\cap \Phi(\Sigma,\{e\})$ for each $e\in E$. We will show how this run can be modified to obtain the desired run $\widehat{r}$, by considering several possible cases.

\noindent{\bf  Case I}: if $\delta_h\in M$, then define $\widehat r$ to be the tuple $\langle Y^e, \{\widehat{f}^e_{u} \}_{u\in Inc(e)} \rangle_{e\in E}$, where
$$
\widehat{f}^e_u=
\begin{cases}
f^h_{u}, &\mbox{ if $e=h$,}\\
\ell^e_u, & \mbox{ otherwise.} 
\end{cases}
$$

\begin{claim}
$\widehat r$ is a run of protocol ${\cal P}^\delta_E$ and $\widehat{r}=_h r$.
\end{claim}
\begin{proof}
We need to verify that tuple $\widehat{r}$ satisfies conditions of Definition~\ref{values} and the local conditions of protocol ${\cal P}^\delta_{E}$ on page~\pageref{local conditions}. 

We start with the conditions of Definition~\ref{values} for an arbitrary edge $e\in E$.
If $e=h$, then $\widehat{r}=_e r$, and thus tuple $\widehat{r}$ satisfies the conditions of Definition~\ref{values} on edge $e$ because run $r$ does. Similarly, if $e\neq h$, then $\widehat{r}=_e r'$, and thus tuple $\widehat{r}$ satisfies the conditions of Definition~\ref{values} on edge $e$ because run $r'$ does.

We now show that tuple $\widehat{r}$ vacuously satisfies local conditions of protocol ${\cal P}^\delta_{E}$ at any vertex $v\in V$. If $v\notin Inc(h)$, then $\widehat{r}=_e r'$ for each $e\in Inc(v)$. Thus, tuple $\widehat{r}$ satisfies local conditions of protocol ${\cal P}^\delta_{E}$ because run $r'$ does. If $v\in Inc(h)$, then tuple $\widehat{r}$ vacuously satisfies local conditions of protocol ${\cal P}^\delta_{E}$ because $\delta_h\in M$.

The condition $\widehat{r}=_h r$ is satisfied because (i) $Y^h=M\cap \Phi(\Sigma,\{h\})=X^h$ and (ii)
$\widehat{f}^h_u=f^h_u$ for each $u\in Inc(h)$.
\end{proof}

\noindent{\bf  Case II}: if $\delta_h\notin M$ and $h\in E\setminus \cal B$. Let $h\in Edge(v_0,v_1)$. Since $h\notin \cal B$, there is a circular path 
$h=e_0,v_1,e_1,v_2,\dots,v_{k-1},e_{k-1},v_k,e_k=h$. By Definition~\ref{bridge}, $e_i\notin \cal B$ for each $0\le i< k$. We will now further split this case into two subcases:

\vspace{2mm}
\noindent{\bf Subcase IIa}: If $\Box_h\delta\notin M$, then define  $\widehat r$ to be tuple $\langle Y^e, \{\widehat{f}^e_{u} \}_{u\in Inc(e)} \rangle_{e\in E}$, see Figure~\ref{subcase-II-a figure}, where
$$
\widehat{f}^e_u=
\begin{cases}
\ell^e_{u}+f^h_{v_0}-\ell^h_{v_0}, &\mbox{ if $e=e_i$, $u=v_i$, and $0\le i<k$,}\\
\ell^e_{u}+f^h_{v_1}-\ell^h_{v_1}, &\mbox{ if $e=e_i$, $u=v_{i+1}$, and $0\le i<k$,}\\
\ell^e_u, & \mbox{ otherwise.} 
\end{cases}
$$

\begin{figure}[ht]
\begin{center}
\vspace{3mm}
\scalebox{.5}{\includegraphics{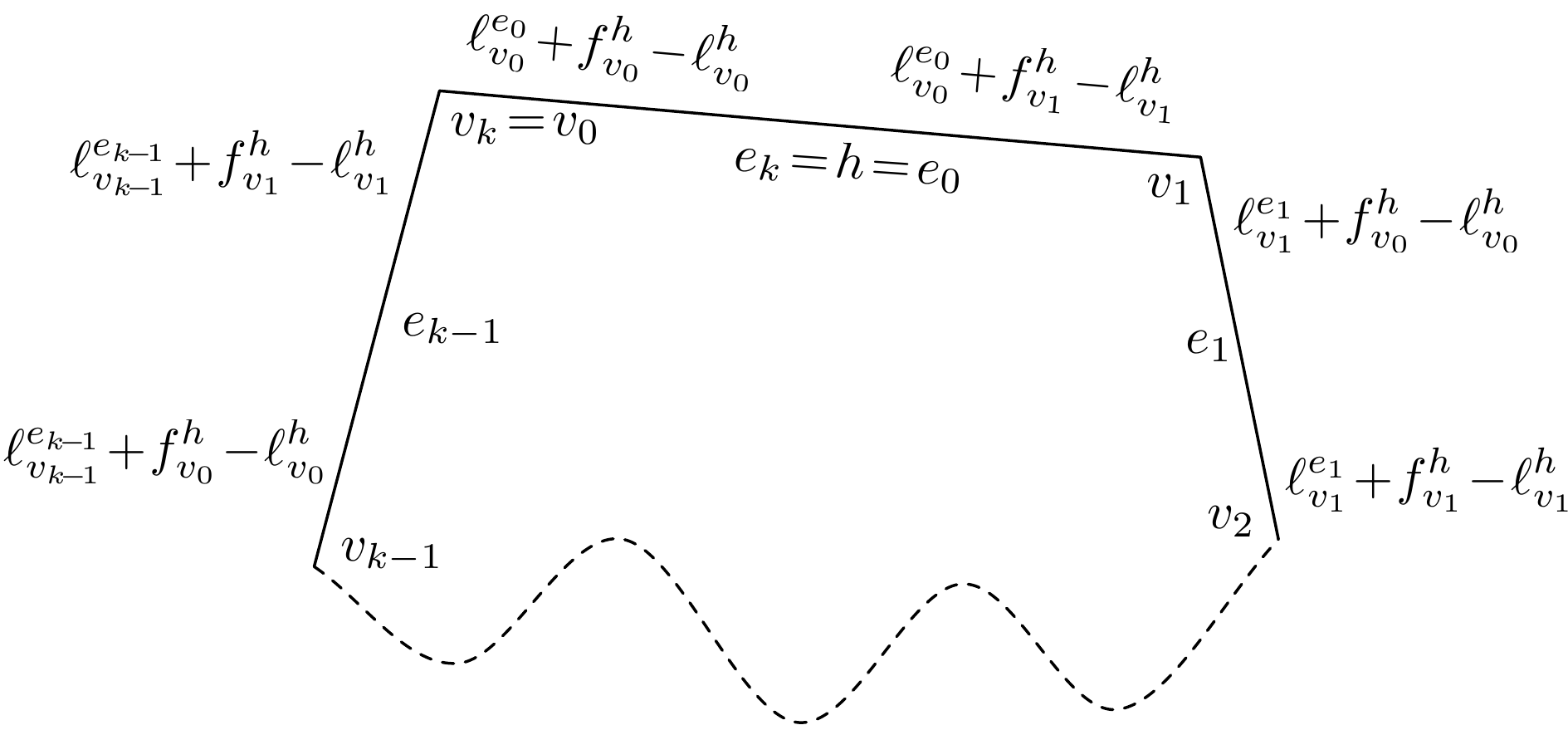}}
\vspace{0mm}
\caption{Subcase IIa.}\label{subcase-II-a figure}
\vspace{0cm}
\end{center}
\vspace{0cm}
\end{figure}

\begin{claim}\label{two equalities}
$\widehat{f}^{e_i}_{v_i}+\widehat{f}^{e_i}_{v_{i+1}}=\ell^{e_i}_{v_i}+\ell^{e_i}_{v_{i+1}}$ and
$\widehat{f}^{e_{i}}_{v_{i+1}}+\widehat{f}^{e_{i+1}}_{v_{i+1}}=\ell^{e_{i}}_{v_{i+1}}+\ell^{e_{i+1}}_{v_{i+1}}$, for each $0\le i<k$.
\end{claim}
\begin{proof}
By condition 1(c) of Definition~\ref{values}, the assumption $\delta_h\notin M$ implies that $f^h_{v_0}+f^h_{v_1}\le 0$ and $\ell^h_{v_0}+\ell^h_{v_1}\le 0$. By condition 3(a) of the same definition, the assumption $\Box_h\delta\notin M$ implies that $f^h_{v_0}+f^h_{v_1}\ge 0$ and $\ell^h_{v_0}+\ell^h_{v_1}\ge 0$. Thus, $f^h_{v_0}+f^h_{v_1}= 0$ and $\ell^h_{v_0}+\ell^h_{v_1}= 0$. Therefore,
\begin{eqnarray*}
\widehat{f}^{e_i}_{v_i}+\widehat{f}^{e_i}_{v_{i+1}}=\ell^{e_i}_{v_i}+f^h_{v_0}-\ell^h_{v_0}+\ell^{e_i}_{v_{i+1}}+f^h_{v_1}-\ell^h_{v_1}=
\ell^{e_i}_{v_i}+\ell^e_{v_{i+1}}+\\(f^h_{v_0}+f^h_{v_1})-(\ell^h_{v_0}+\ell^h_{v_1})=
\ell^{e_i}_{v_i}+\ell^e_{v_{i+1}}+0 - 0
=\ell^{e_i}_{v_i}+\ell^{e_i}_{v_{i+1}},
\end{eqnarray*}
and
\begin{eqnarray*}
\widehat{f}^{e_{i}}_{v_{i+1}}+\widehat{f}^{e_{i+1}}_{v_{i+1}}=
\ell^{e_i}_{v_{i+1}}+f^h_{v_1}-\ell^h_{v_1}+
\ell^{e_{i+1}}_{v_{i+1}}+f^h_{v_0}-\ell^h_{v_0}=
\ell^{e_i}_{v_{i+1}}+\ell^{e_{i+1}}_{v_{i+1}}+\\
(f^h_{v_0}+f^h_{v_1})-(\ell^h_{v_0}+\ell^h_{v_1})=
\ell^{e_i}_{v_{i+1}}+\ell^{e_{i+1}}_{v_{i+1}} + 0 - 0=
\ell^{e_i}_{v_{i+1}}+\ell^{e_{i+1}}_{v_{i+1}}.
\end{eqnarray*}
\end{proof}

\begin{claim}
 $\widehat r$ is a run of protocol ${\cal P}^\delta_E$  and $\widehat{r}=_h r$.
\end{claim}
\begin{proof}
We need to verify that the tuple $\widehat{r}$ satisfies the conditions of Definition~\ref{values} and the local conditions of protocol ${\cal P}^\delta_{E}$ on page~\pageref{local conditions}. 

We start with the conditions of Definition~\ref{values} for an arbitrary edge $e\in E$.
If $e=e_i$ for some $0\le i< k$, then, due to the path being circular, $e\notin \cal B$. Thus, all applicable conditions from Definition~\ref{values} are satisfied for tuple $\widehat{r}$ because they are satisfied for run $r'$ and due to the equality 
$\widehat{f}^{e_i}_{v_i}+\widehat{f}^{e_i}_{v_{i+1}}=\ell^{e_i}_{v_i}+\ell^{e_i}_{v_{i+1}}$
established in Claim~\ref{two equalities}.  If $e\neq e_i$ for all $0\le i< k$, then the required is true because $\widehat{r}=_e r'$.

We now show that tuple $\widehat{r}$ satisfies local conditions of protocol ${\cal P}^\delta_{E}$ at any vertex $v\in V$. If $v=v_{i+1}$ for some $0\le i< k$, then $\widehat{f}^{e_{i}}_{v_{i+1}}+\widehat{f}^{e_{i+1}}_{v_{i+1}}=\ell^{e_{i}}_{v_{i+1}}+\ell^{e_{i+1}}_{v_{i+1}}$ by Claim~\ref{two equalities}. Thus, $\sum_{e\in Inc(v_{i+1})}\widehat{f}^{e}_{v_{i+1}}=\sum_{e\in Inc(v_{i+1})}\ell^{e}_{v_{i+1}}$. If $v\neq v_{i+1}$ for all $0\le i< k$, then  $\widehat{r}=_e r'$ for all $e\in Inc(v)$. In either of these two cases, tuple $\widehat{r}$ satisfies the local conditions  of protocol ${\cal P}^\delta_{E}$ at vertex $v\in V$ because run $r'$ satisfies these conditions.

Condition $\widehat{r}=_h r$ is satisfied because (i) $Y^h=M\cap \Phi(\Sigma,\{h\})=X^h$, (ii)
$\widehat{f}^h_{v_0}=\ell^h_{v_0}+f^h_{v_0}-\ell^h_{v_0}=f^h_{v_0}$, and (iii)
$\widehat{f}^h_{v_1}=\ell^h_{v_1}+f^h_{v_1}-\ell^h_{v_1}=f^h_{v_1}$.
\end{proof}

\noindent{\bf Subcase IIb}: If $\Box_h\delta\in M$, then $f^h_{v_0}+f^h_{v_1}<0$ and $\ell^h_{v_0}+\ell^h_{v_1}<0$ due to condition 3(b) of Definition~\ref{values}. Let 
$\lambda=(f^h_{v_0}+f^h_{v_1})/(\ell^h_{v_0}+\ell^h_{v_1})$. Note that $\lambda>0$. Define $\widehat r$ to be the tuple $\langle Y^e, \{\widehat{f}^e_{u} \}_{u\in Inc(e)} \rangle_{e\in E}$, see Figure~\ref{subcase-II-b figure}, where
$$
\widehat{f}^e_u=
\begin{cases}
\lambda(\ell^e_u-\ell^h_{v_0})+f^h_{v_0}, &\mbox{ if $e=e_i$, $u=v_i$, and $0\le i<k$,}\\
\lambda(\ell^e_u+\ell^h_{v_0})-f^h_{v_0}, &\mbox{ if $e=e_i$, $u=v_{i+1}$, and $0\le i<k$,}\\
\lambda\ell^e_u, & \mbox{ otherwise.} 
\end{cases}
$$

\begin{figure}[ht]
\begin{center}
\vspace{3mm}
\scalebox{.5}{\includegraphics{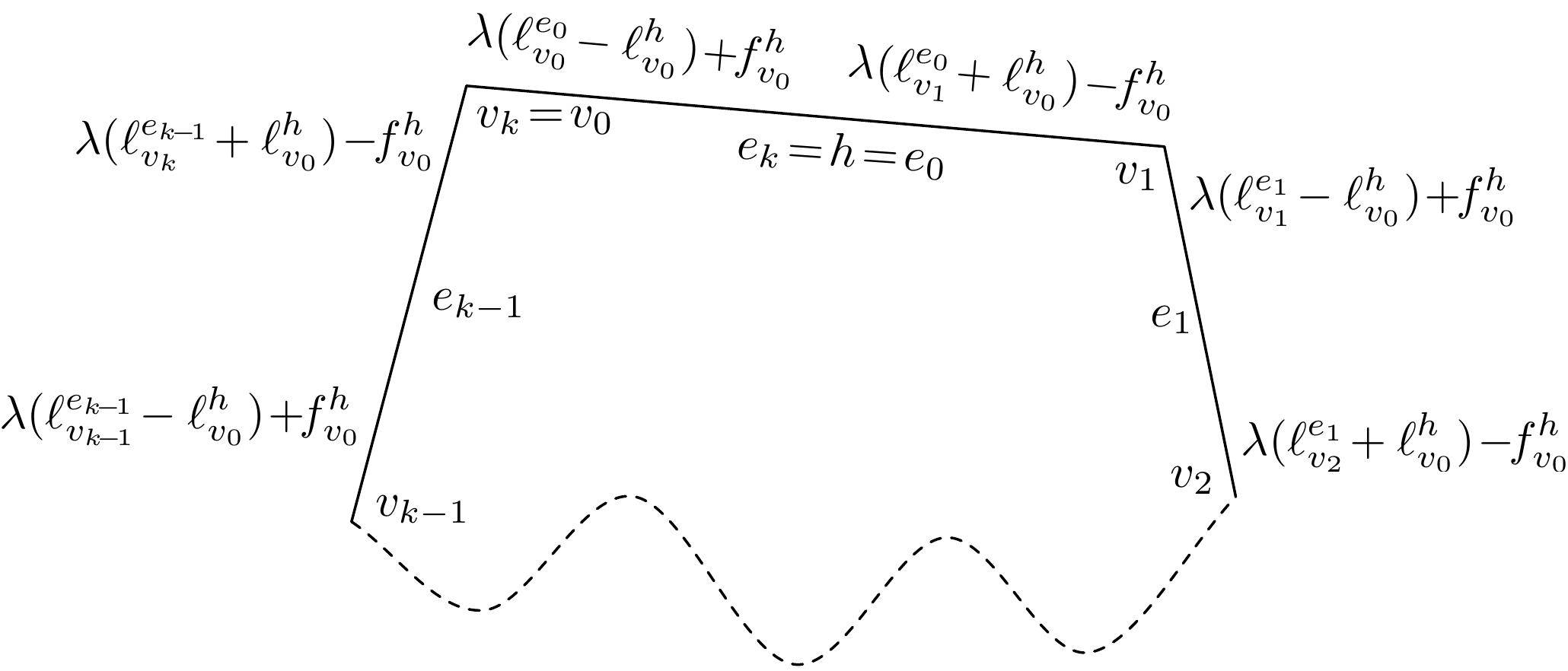}}
\vspace{0mm}
\caption{Subcase IIb.}\label{subcase-II-b figure}
\vspace{0cm}
\end{center}
\vspace{0cm}
\end{figure}

\begin{claim}\label{two equalities 2}
$\widehat{f}^{e_i}_{v_i}+\widehat{f}^{e_i}_{v_{i+1}}=\lambda(\ell^{e_i}_{v_i}+\ell^{e_i}_{v_{i+1}})$ and
$\widehat{f}^{e_{i}}_{v_{i+1}}+\widehat{f}^{e_{i+1}}_{v_{i+1}}=\lambda(\ell^{e_{i}}_{v_{i+1}}+\ell^{e_{i+1}}_{v_{i+1}})$, for each $0\le i<k$.
\end{claim}
\begin{proof}
\begin{eqnarray*}
\widehat{f}^{e_i}_{v_i} + \widehat{f}^{e_i}_{v_{i+1}} = 
\lambda(\ell^{e_i}_{v_i}-\ell^h_{v_0})+f^h_{v_0} + \lambda(\ell^{e_i}_{v_{i+1}}+\ell^h_{v_0})-f^h_{v_0}
= \lambda(\ell^{e_i}_{v_i}+\ell^{e_i}_{v_{i+1}}).
\end{eqnarray*}
Similarly,
$$
\widehat{f}^{e_{i}}_{v_{i+1}}+\widehat{f}^{e_{i+1}}_{v_{i+1}}=
\lambda(\ell^{e_i}_{v_{i+1}}+\ell^h_{v_0})-f^h_{v_0} + \lambda(\ell^{e_{i+1}}_{v_{i+1}}-\ell^h_{v_0})+f^h_{v_0}=\lambda(\ell^{e_i}_{v_{i+1}}+\ell^{e_{i+1}}_{v_{i+1}}).
$$
\end{proof}

\begin{claim}
$\widehat r$ is a run of protocol ${\cal P}^\delta_E$  and $\widehat{r}=_h r$.
\end{claim}
\begin{proof}
We need to verify that tuple $\widehat{r}$ satisfies the conditions of Definition~\ref{values} and the local conditions of protocol ${\cal P}^\delta_{E}$ on page~\pageref{local conditions}. 

We start with the conditions of Definition~\ref{values} for an arbitrary edge $e\in E$.
If $e=e_i$ for some $0\le i< k$, then $e\notin \cal B$ since the path is circular. Thus, all applicable conditions from Definition~\ref{values} are satisfied for tuple $\widehat{r}$ because they are satisfied for run $r'$ and due to $\lambda>0$ and the equality 
$\widehat{f}^{e_i}_{v_i}+\widehat{f}^{e_i}_{v_{i+1}}=\lambda(\ell^{e_i}_{v_i}+\ell^{e_i}_{v_{i+1}})$
established in Claim~\ref{two equalities 2}.  If $e\neq e_i$ for all $0\le i< k$, then the required is true because run $r'$ satisfies the conditions from Definition~\ref{values} and $\widehat{f}^e_{u}=_e \lambda \ell^e_u$ for each $u\in Inc(e)$, where $\lambda>0$. 

We now show that tuple $\widehat{r}$ satisfies the local conditions of protocol ${\cal P}^\delta_{E}$ at any vertex $v\in V$. If $v=v_{i+1}$ for some $0\le i< k$, then $\widehat{f}^{e_{i}}_{v_{i+1}}+\widehat{f}^{e_{i+1}}_{v_{i+1}}=\lambda(\ell^{e_{i}}_{v_{i+1}}+\ell^{e_{i+1}}_{v_{i+1}})$ by Claim~\ref{two equalities 2}. Thus, $\sum_{e\in Inc(v_{i+1})}\widehat{f}^{e}_{v_{i+1}}=\lambda\sum_{e\in Inc(v_{i+1})}\ell^{e}_{v_{i+1}}$. If $v\neq v_{i+1}$ for all $0\le i< k$, then  $\widehat{f}^e_v=\lambda \ell^e_v$ for all $e\in Inc(v)$. In either of these two cases, tuple $\widehat{r}$ satisfies the local conditions  of protocol ${\cal P}^\delta_{E}$ at vertex $v\in V$ because run $r'$ satisfies these conditions and $\lambda>0$.

The condition $\widehat{r}=_h r$ is satisfied because $Y^h=M\cap \Phi(\Sigma,\{h\})=X^h$,
\begin{eqnarray*}
\widehat{f}^h_{v_0}=\lambda(\ell^h_{v_0}-\ell^h_{v_0})+f^h_{v_0}=0+f^h_{v_0}=f^h_{v_0},
\end{eqnarray*}
and
\begin{eqnarray*}
\widehat{f}^h_{v_1}=\lambda(\ell^h_{v_1}+\ell^h_{v_0})-f^h_{v_0}
=\dfrac{f^h_{v_0}+f^h_{v_1}}{\ell^h_{v_0}+\ell^h_{v_1}}(\ell^h_{v_1}+\ell^h_{v_0})-f^h_{v_0}
=f^h_{v_0}+f^h_{v_1}-f^h_{v_0}=f^h_{v_1}.
\end{eqnarray*}
\end{proof}

\noindent{\bf Case III}: If $\delta_h\notin M$ and $h\in \cal B$. Let $h\in Edge(v_0,v_1)$. There are three subcases:

\vspace{2mm}
\noindent{\bf Subcase IIIa}:
If $f^h_{v_1}\cdot\ell^h_{v_1}=0$, then $f^h_{v_1}=0$ or $\ell^h_{v_1}=0$. Hence, by Lemma~\ref{gateway zero}, $\Box_h\delta\notin X^h$. Thus, again by Lemma~\ref{gateway zero},  $f^h_{v_1}=0$, $f^h_{v_0}=0$, $\ell^h_{v_0}=0$, and $\ell^h_{v_1}=0$. Furthermore, $Y^h=M\cap \Phi(\Sigma,\{h\})=X^h$. Hence, $r=_h r'$. Let $\widehat{r}=r'$.

\vspace{2mm}
\noindent{\bf Subcase IIIb}: If 
$f^h_{v_1}\cdot\ell^h_{v_1}>0$, then define $\widehat r$ to be tuple $$\langle Y^e, \{(f^h_{v_1}/\ell^h_{v_1})\ell^e_u \}_{u\in Inc(e)} \rangle_{e\in E}.$$ 
By Corollary~\ref{scale} and the fact that $r'$ is a run of protocol ${\cal P}^\delta_E$, tuple $\widehat r$ is a run of protocol ${\cal P}^\delta_E$. Since $Y^h=M\cap \Phi(\Sigma,\{h\})=X^h$, to show that $\widehat{r}=_h r$, it is sufficient to show that $(f^h_{v_1}/\ell^h_{v_1})\ell^h_{v_1}=f^h_{v_1}$ and $(f^h_{v_1}/\ell^h_{v_1})\ell^h_{v_0}=f^h_{v_0}$. The former is an algebraic identity, the later follows from the equalities $f^h_{v_0}+f^h_{v_1}=0$ and $\ell^h_{v_0}+\ell^h_{v_1}=0$, which, in turn, follows from condition 2(a) of Definition~\ref{values}.

\vspace{2mm}
\noindent{\bf Subcase IIIc}: If $f^h_{v_1}\cdot\ell^h_{v_1}<0$, then $f^h_{v_1}\neq 0$. By Definition~\ref{values}, part 2(a), it follows that either $f^h_{v_1}<0$ or $f^h_{v_0}<0$. We consider the former case, the later one is similar. If $f^h_{v_1}<0$, then $\Box_h\bigvee_{e\in C_{\miniminus h}^{v_1}}\delta_e\in X^h$ by Definition~\ref{values}, part 2(b). Hence, $\Box_h\bigvee_{e\in C_{\miniminus h}^{v_1}}\delta_e\in M$. Thus, $\Box_h\bigvee_{e\in C_{\miniminus h}^{v_1}}\delta_e\in Y^h$. By Lemma~\ref{path exists}, there is a path $e_0,v_1,e_1,v_2,\dots,v_{k},e_k$ in $\Gamma_M$ such that $h=e_0$. Let $\lambda$ be any positive real number such that
$$
\lambda > |\ell^e_u|
$$
for each $e\in E$ and each $u\in Inc(e)$. Also, let $\mu=f^h_{v_0}/(\ell^h_{v_0}+\lambda)$. Recall that $f^h_{v_1}<0$. Thus, $f^h_{v_0}>0$ by condition 2(a) of Definition~\ref{values}. Additionally, note that $\lambda>|\ell^h_{v_0}|$. Thus, $\mu>0$.

Define $\widehat r$ to be tuple $\langle Y^e, \{\widehat{f}^e_{u} \}_{u\in Inc(e)} \rangle_{e\in E}$, see Figure~\ref{subcase-III-c figure}, where
\begin{equation}\label{f hat IIIc}
\widehat{f}^e_u=
\begin{cases}
\mu(\ell^e_{u}+\lambda), &\mbox{ if $e=e_i$, $u=v_i$, and $0\le i\le k$,}\\
\mu(\ell^e_{u}-\lambda), &\mbox{ if $e=e_i$, $u=v_{i+1}$, and $0\le i<k$,}\\
\mu\ell^e_u, & \mbox{ otherwise.} 
\end{cases}
\end{equation}

\begin{figure}[ht]
\begin{center}
\vspace{3mm}
\scalebox{.5}{\includegraphics{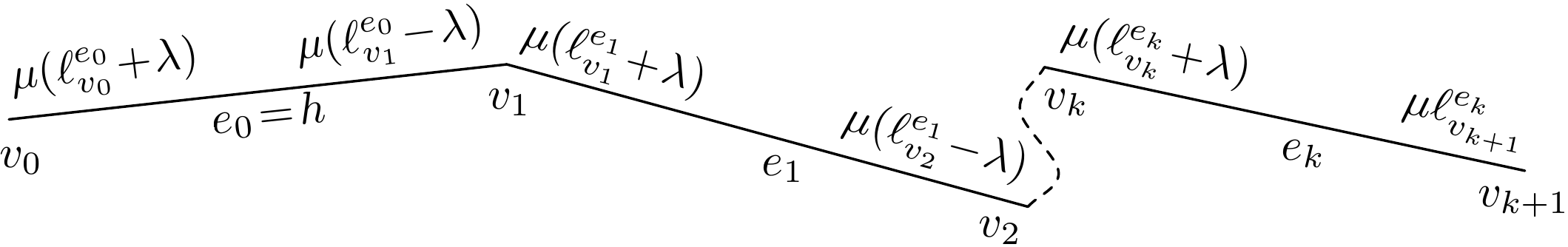}}
\vspace{0mm}
\caption{Subcase IIIc, the last vertex of the path, not named in the text, is denoted by $v_{k+1}$ on this figure.}\label{subcase-III-c figure}
\vspace{0cm}
\end{center}
\vspace{0cm}
\end{figure}

\begin{claim}\label{two equalities 3a}
$\widehat{f}^{e}_{u}+\widehat{f}^{e}_{u'}=\mu(\ell^{e}_{u}+\ell^{e}_{u'})$, for each edge $e\in Edge(u,u')\in E\setminus\{e_k\}$.
\end{claim}
\begin{proof}
If $e=e_i$ for some $0\le i< k$, then 
$$\widehat{f}^{e_i}_{v_i}+\widehat{f}^{e_i}_{v_{i+1}}=
\mu(\ell^{e_i}_{v_i}+\lambda) + \mu(\ell^{e_i}_{v_{i+1}}-\lambda)=
\mu(\ell^{e_i}_{v_i}+\ell^{e_i}_{v_{i+1}}).$$
If $e\neq e_i$ for all $0\le i\le k$, then
$\widehat{f}^{e}_{u}+\widehat{f}^{e}_{u}=\mu\ell^{e}_{u}+\mu\ell^{e}_{u'}=\mu(\ell^{e}_{u}+\ell^{e}_{u'})$.
\end{proof}

\begin{claim}\label{two equalities 3b}
$\sum_{e\in Inc(u)}\widehat{f}^e_u \ge \mu\sum_{e\in Inc(u)}\ell^e_u$ for each vertex $u\in V$.
\end{claim}
\begin{proof}
If $u\neq v_i$ for all $0\le i\le k$, then
$$
\sum_{e\in Inc(u)}\widehat{f}^e_u = \sum_{e\in Inc(u)}\mu\ell^e_u=\mu\sum_{e\in Inc(u)}\ell^e_u.
$$
If $u= v_{i+1}$ for some $0\le i< k$, then
\begin{eqnarray*}
\sum_{e\in Inc(u)}\widehat{f}^e_u & = &\widehat{f}^{e_i}_{v_{i+1}} + \widehat{f}^{e_{i+1}}_{v_{i+1}} + \sum_{e\in Inc(v_{i+1})\setminus\{e_{i},e_{i+1}\}}\widehat{f}^e_{v_{i+1}}\\
&=&
\mu(\ell^{e_i}_{v_{i+1}}-\lambda)+
\mu(\ell^{e_{i+1}}_{v_{i+1}}+\lambda) + \sum_{e\in Inc(v_{i+1})\setminus\{e_{i},e_{i+1}\}}\mu\ell^e_{v_{i+1}} \\
&=&
\mu\left(\ell^{e_i}_{v_{i+1}}+\ell^{e_{i+1}}_{v_{i+1}}+\sum_{e\in Inc(v_{i+1})\setminus\{e_{i},e_{i+1}\}}\ell^e_{v_{i+1}}\right)\\
&=&\mu\sum_{e\in Inc(v_{i+1})}\ell^e_{v_{i+1}}.
\end{eqnarray*}
Finally, if $u=v_0$, then, since $\lambda>0$ and $\mu>0$,
\begin{eqnarray*}
\sum_{e\in Inc(u)}\widehat{f}^e_u &=& \widehat{f}^{e_0}_{v_{0}} + \sum_{e\in Inc(v_0)\setminus\{e_{0}\}
}\widehat{f}^e_{v_{0}}\\
&=&
\mu(\ell^{e_0}_{v_{0}}+\lambda)+
\sum_{e\in Inc(v_0)\setminus\{e_{0}\}}\mu\ell^e_{v_0}\\ 
&=&
\mu\left(\ell^{e_0}_{v_0}+\sum_{e\in Inc(v_0)\setminus\{e_0\}}\ell^e_{v_0}\right)+\mu\lambda\\
&=&
\mu\sum_{e\in Inc(v_0)}\ell^e_{v_0}+\mu\lambda\\
&>&
\mu\sum_{e\in Inc(v_0)}\ell^e_{v_0}.
\end{eqnarray*}
The last inequality is true because $\lambda>0$ and $\mu>0$.
\end{proof}

\begin{claim}
 $\widehat r$ is a run of protocol ${\cal P}^\delta_E$ and $\widehat{r}=_h r$.
\end{claim}
\begin{proof}
We need to verify that tuple $\widehat{r}$ satisfies the conditions of Definition~\ref{values} and the local conditions of protocol ${\cal P}^\delta_{E}$ on page~\pageref{local conditions}. Below by $v_{k+1}$  we denote the end of edge $e_k$ different from vertex $v_k$. We start with conditions of Definition~\ref{values}.
\begin{itemize}
\item[{\sf 1(c)}] Due to Claim~\ref{two equalities 3a} and the assumption that $r'$ is a run of protocol ${\cal P}^\delta_{E}$, we only need to verify condition 1(c) for edge $e_k$. 
Note that $\delta_{e_k}\in X^{e_k}$, by Definition~\ref{gamma}. Thus, we only need to show that $\widehat{f}^{e_k}_{v_k}+\widehat{f}^{e_k}_{v_{k+1}}>0$. Indeed, $\ell^{e_k}_{v_k}+\ell^{e_k}_{v_{k+1}}>0$ because run $r'$ satisfies condition 1(c). Since $\lambda>0$ and $\mu > 0$,
$$
\widehat{f}^{e_k}_{v_k}+\widehat{f}^{e_k}_{v_{k+1}}=
\mu(\ell^{e_k}_{v_k}+ \lambda) +\mu\ell^{e_k}_{v_{k+1}} = 
\mu(\ell^{e_k}_{v_k}+\ell^{e_k}_{v_{k+1}}) + \mu\lambda > \mu(\ell^{e_k}_{v_k}+\ell^{e_k}_{v_{k+1}}) > 0.
$$
\item[{\sf 2(a)}] Due to Claim~\ref{two equalities 3a} and the assumption that $r'$ is a run of protocol ${\cal P}^\delta_{E}$, we again only need to verify condition 2(a) for edge $e_k$, which is vacuously true because $\delta_{e_k}\in X^{e_k}$ due to condition 4 of Definition~\ref{gamma}.

\item[{\sf 2(b)}] By the definition of $\widehat{r}$, for each edge $b \in {\cal B} \setminus \{e_0,\dots,e_k\}$, and each vertex $u\in Inc(b)$, we have $\widehat{f}^b_u=\mu\ell^b_u$. Thus, $\widehat{r}$ on any such edge satisfies condition 2(b) of Definition~\ref{values} because run $r'$ does and $\mu>0$. 

We next show that condition 2(b) is satisfied for each $e_i$ such that $e_i\in \cal B$ and $0\le i\le k$. Indeed, consider any $u\in Inc(e_i)$ and suppose that $\widehat{f}^{e_i}_u<0$. 

If $u=v_i$, then, since $\lambda>0$ and $\mu>0$, from equation~(\ref{f hat IIIc}), we have
$$
\ell^{e_i}_u=\ell^{e_i}_{v_i}=\dfrac{\widehat{f}^{e_i}_{v_i}}{\mu}-\lambda<\dfrac{\widehat{f}^{e_i}_{v_i}}{\mu}<0.
$$
Thus, $\Box_{e_i}\bigvee_{e\in C^u_{\miniminus e_i}}\delta_e\in X^{e_i}$ because run $r'$ satisfies condition 2(b) of Definition~\ref{values}.

If $u=v_{i+1}$ and $i<k$, then condition 2(b) is satisfied due to condition 3 of Definition~\ref{gamma}.

Finally, if $i=k$ and $u=v_{k+1}$, then $\widehat{f}^{e_i}_u=\mu\ell^{e_i}_u$ by the definition of $\widehat{r}$. Thus, condition 2(b) is satisfied by run $\widehat{r}$ because it is satisfied by run $r'$ and since $\mu>0$.

\item[{\sf 2(c)}] By the definition of $\widehat{r}$, for each edge $b \in {\cal B} \setminus \{e_0,\dots,e_k\}$, and each vertex $u\in Inc(b)$, we have $\widehat{f}^b_u=\mu\ell^b_u$. Thus, $\widehat{r}$ on any such edge satisfies condition 2(c) of Definition~\ref{values} because run $r'$ does and $\mu>0$. 

We will next show that condition 2(c) is satisfied for each $e_i$ such that $e_i\in \cal B$ and $0\le i< k$. Indeed, note that $\lambda>|\ell^{e_i}_{v_{i+1}}|$ due to the choice of $\lambda$. Thus
$$
\widehat{f}^{e_i}_{v_{i+1}}=\mu(\ell^{e_i}_{v_{i+1}}-\lambda) < 0.
$$
Finally, note that when $i=k$, we have $\delta_{e_k}\in X^{e_k}$. Therefore, condition 2(c) is vacuously true. 

\item[{\sf 3(a)}] Due to Claim~\ref{two equalities 3a} and the assumption that $r'$ is a run of protocol ${\cal P}^\delta_{E}$, we again only need to verify condition 3(a) for edge $e_k$. 
By condition 4 of Definition~\ref{gamma}, $\delta_{e_k}\in X^{e_k}$. Thus, as we have shown in the case 1(c) above, $\widehat{f}^{e_k}_{v_k}+\widehat{f}^{e_k}_{v_{k+1}}>0$. Therefore, condition 3(a) is vacuously true for edge $e_k$.

\item[{\sf 3(b)}] Due to Claim~\ref{two equalities 3a} and the assumption that $r'$ is a run of protocol ${\cal P}^\delta_{E}$, we once more only need to verify condition 3(b) for edge $e_k$. 
By condition 4 of Definition~\ref{gamma}, $\delta_{e_k}\in X^{e_k}$. Thus, condition 3(b) is vacuously true for edge $e_k$.
\end{itemize}
The local conditions (see page~\pageref{local conditions}) are satisfied by tuple $\widehat{r}$ at each vertex $u\in V$ because they are satisfied by run $r'$ and due to Claim~\ref{two equalities 3b} combined with the fact that $\mu>0$.

To show that $\widehat{r}=_h r$, first note that $Y^h=M\cap \Phi(\Sigma,\{h\})=X^h$.  Then, observe that
$$
\widehat{f}^h_{v_0}=\mu(\ell^h_{v_0}+\lambda)=\dfrac{f^h_{v_0}}{\ell^h_{v_0}+\lambda}(\ell^h_{v_0}+\lambda)=f^h_{v_0}.
$$
Finally, note that $f^h_{v_0}=-f^h_{v_1}$ and $\ell^h_{v_0}=-\ell^h_{v_1}$ because runs $r$ and $r'$ satisfy condition 2(a) of Definition~\ref{values}. Thus,
$$
\widehat{f}^h_{v_1}=\mu(\ell^h_{v_1}-\lambda)=\dfrac{f^h_{v_0}}{\ell^h_{v_0}+\lambda}(\ell^h_{v_1}-\lambda)=
\dfrac{-f^h_{v_1}}{-\ell^h_{v_1}+\lambda}(\ell^h_{v_1}-\lambda)=f^h_{v_1}.
$$

\end{proof}
This concludes the proof of Theorem~\ref{run exists 2}.
\end{proof}

\subsection{Aggregated Protocol}\label{aggregation section}

Recall from Section~\ref{canonical protocol section} that canonical protocol $\mathcal{P}^\delta_E$ has formula $\delta$ as a parameter. 
In this section we introduce a construction that aggregates multiple canonical protocols. One can view a run of the aggregated protocol $\mathcal{P}$ as several runs of different canonical protocols for different values of parameter $\delta$ being executed concurrently on different ``levels". Also recall that a value of an edge under a canonical protocol consists of a maximal consistent set of formulas and a pair of real numbers (flow values). Although there is no explicit connection between flow values on different levels for the same edge, we assume that maximal consistent sets are the same on all layers for a given edge of the aggregated protocol, see Definition~\ref{all-values}.

\begin{definition}\label{all-values}
A value $w_e$ of an edge $e\in E$ under the aggregated protocol ${\cal P}$ is a tuple $\langle X, \{f_{v,\delta}\}_{v\in Inc(e),\delta\in\Delta(\Sigma)}\rangle$ such that $\langle X, \{f_{v,\delta}\}_{v\in Inc(e)}\rangle$ is a value of edge $e$ under protocol ${\cal P}^\delta_E$ for each $\delta\in\Delta(\Sigma)$.
\end{definition}

\paragraph{Valuation.} Let $\pi$ be a function such that, for each $e\in E$ and $p\in P_e$, set $p^\pi$ contains all values $\langle X, \{f_{v,\delta}\}_{v\in Inc(e),\delta\in\Delta(\Sigma)}\rangle$,  where $p\in X$.   

\paragraph{Local Conditions.}  A tuple 
$\langle X^e, \{f^e_{v,\delta}\}_{v\in Inc(e),\delta\in\Delta(\Sigma)}\rangle_{e\in Inc(u)}$
satisfies the local conditions of protocol $\cal P$ at vertex $u$ if for each $\delta\in\Delta(\Sigma)$, the tuple
$\langle X^e, \{f^e_{v,\delta}\}_{v\in Inc(e)}\rangle_{e\in Inc(u)}$
satisfies local conditions of protocol ${\cal P}^\delta_E$ at vertex $u$.

This concludes the definition of the aggregated protocol ${\cal P}$.

\begin{theorem}\label{main induction}
If $e\in E$, $\phi\in \Phi(Sig,\{e\})$, and tuple $$r=\langle X^h, \{f^h_{u,\delta} \}_{u\in Inc(h),\delta\in\Delta(\Sigma)}\rangle_{h\in E}$$ is a run of protocol ${\cal P}$, then $r\Vdash\phi$ if and only if $\phi\in X^e$.
\end{theorem}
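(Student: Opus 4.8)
The plan is to prove the statement by induction, using as inductive measure the pair $(\text{modal depth of }\phi,\ \text{number of connectives in }\phi)$ ordered lexicographically; passing to modal depth rather than raw structural complexity is essential, because in the modal case we will replace $\psi$ by a normal form that may be larger but is no deeper. The base and propositional cases are routine. If $\phi$ is $\bot$, then $r\nVdash\bot$ by Definition~\ref{sat} and $\bot\notin X^e$ by consistency of $X^e$. If $\phi$ is an atomic proposition $p\in P_e$, then $r\Vdash p$ iff $w_e\in p^\pi$ iff $p\in X^e$ by the definition of the valuation $\pi$ of the aggregated protocol. If $\phi$ is $\phi_1\to\phi_2$ with $\phi_1,\phi_2\in\Phi(\Sigma,\{e\})$, the equivalence follows from the induction hypothesis applied to $\phi_1,\phi_2$ together with the usual saturation properties of the maximal consistent set $X^e$ inside $\Phi(\Sigma,\{e\})$.

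The crux is $\phi=\Box_e\psi$ with $\psi\in\Phi(\Sigma)$ arbitrary. First I would normalize $\psi$: writing $\psi$ in conjunctive normal form over its atomic and maximal boxed subformulas and grouping the literals of each clause by the edge they concern, one gets $\vdash\psi\leftrightarrow\bigwedge_{i\le n}\delta^i$ with $\delta^i=\bigvee_{h\in E}\delta^i_h\in\Delta(\Sigma)$, each $\delta^i_h\in\Phi(\Sigma,\{h\})$, and each $\delta^i_h$ of modal depth at most that of $\psi$; by Distributivity and Necessitation this upgrades to $\vdash\Box_e\psi\leftrightarrow\bigwedge_i\Box_e\delta^i$. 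By soundness (Section~\ref{soundness section}) we get $r\Vdash\Box_e\psi$ iff $r\Vdash\bigwedge_i\Box_e\delta^i$, and by the saturation of $X^e$ we get $\Box_e\psi\in X^e$ iff each $\Box_e\delta^i\in X^e$. So it suffices to show, for every $\delta=\bigvee_{h\in E}\delta_h\in\Delta(\Sigma)$ with each $\delta_h\in\Phi(\Sigma,\{h\})$ of modal depth strictly below that of $\Box_e\psi$, that $r\Vdash\Box_e\delta$ iff $\Box_e\delta\in X^e$.

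For $\Box_e\delta\in X^e\Rightarrow r\Vdash\Box_e\delta$, take any run $r'$ of $\mathcal P$ with $r'=_e r$; its value on $e$ carries the same $X^e$, so $\Box_e\delta\in X'^e$. The $\delta$-layer of $r'$ is a run of $\mathcal P^\delta_E$ by Definition~\ref{all-values}, hence a run of $\mathcal P^\delta_{\{e\}}$ by Lemma~\ref{sub protocol}, so Theorem~\ref{XYZ2} supplies an edge $h'$ with $\delta_{h'}\in X'^{h'}$; the induction hypothesis (valid since $\delta_{h'}$ has smaller modal depth) gives $r'\Vdash\delta_{h'}$, hence $r'\Vdash\delta$, and since $r'$ was arbitrary, $r\Vdash\Box_e\delta$. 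For the converse, suppose $\Box_e\delta\notin X^e$. Then $X^e\cup\{\neg\delta_h\mid h\in E\}$ is consistent: otherwise $X^e\vdash\delta$, so $X^e\vdash\Box_e\delta$ by Lemma~\ref{XYZ}, forcing $\Box_e\delta\in X^e$, a contradiction. Extend this set by Lindenbaum to a maximal consistent $M\subseteq\Phi(\Sigma)$; then $M\cap\Phi(\Sigma,\{e\})=X^e$ and $\delta_h\notin M$ for every $h$. For each $\delta'\in\Delta(\Sigma)$ apply Theorem~\ref{run exists 2} with distinguished edge $e$, the set $M$, and the $\delta'$-layer of $r$ (whose $e$-set is $X^e=M\cap\Phi(\Sigma,\{e\})$), obtaining a run $\widehat r^{\delta'}$ of $\mathcal P^{\delta'}_E$ that matches $M$ on every edge and agrees with $r$ on edge $e$. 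Stacking these layers yields, by Definition~\ref{all-values}, a run $r'$ of $\mathcal P$ with $r'=_e r$ whose $X$-component on each edge $h$ is $M\cap\Phi(\Sigma,\{h\})$; since $\delta_h\notin M$, the induction hypothesis gives $r'\nVdash\delta_h$ for all $h$, hence $r'\nVdash\delta$, and therefore $r\nVdash\Box_e\delta$.

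The main obstacle I expect is this last ($\Rightarrow$) direction: the preliminary consistency check through Lemma~\ref{XYZ}, and then the bookkeeping of assembling one run of the aggregated protocol from the per-layer runs produced by Theorem~\ref{run exists 2} while simultaneously forcing agreement with $r$ on edge $e$ and the uniform use of a single maximal consistent set $M$ across all layers. A secondary delicate point, flagged above, is choosing modal depth (not structural complexity) as the inductive parameter so that the normal-form reduction of $\psi$ does not break the induction.
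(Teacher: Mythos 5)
Your proposal is correct and follows essentially the same route as the paper's proof: conjunctive normal form over $\Delta(\Sigma)$, Lemma~\ref{XYZ} plus a Lindenbaum extension $M$ in the negative direction, Theorem~\ref{run exists 2} applied layer-by-layer and stacked via Definition~\ref{all-values}, and Lemma~\ref{sub protocol} with Theorem~\ref{XYZ2} in the positive direction. Your two refinements --- reducing $\Box_e\psi$ to single clauses $\Box_e\delta$ with $\delta\in\Delta(\Sigma)$ before splitting into the two directions, and using the lexicographic measure (modal depth, size) --- are only presentational, though the latter does make precise the induction that the paper loosely calls ``on structural complexity,'' since the normal-form components $\psi^i_h$ need not be smaller than $\Box_e\psi$ but are of strictly smaller modal depth.
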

\begin{proof}
We prove the theorem by induction on the structural complexity of formula $\phi$. If $\phi$ is a proposition $p\in P_e$, then the required follows from Definition~\ref{sat} and the definition of valuation function $\pi$ for protocol $\cal P$. The cases when $\phi$ is constant $\bot$ or an implication $\phi_1\to\phi_2$ follow from Definition~\ref{sat} and the maximality and the consistency of set $X^e$ in the standard way. Now let $\phi$ be of the form $\Box_e\psi$.

\noindent $(\Rightarrow):$ 
Suppose that $\bigwedge_{i}\bigvee_{h\in E}\psi^i_h$ is the conjunctive normal form of $\neg\psi$ such that $\psi^i_h\in\Phi(\Sigma,\{h\})$ for each $h\in E$. Thus, the following statement can be proven using just the axioms of the propositional logic in language $\Phi(\Sigma)$
\begin{equation}\label{wedge imp psi}
\vdash \neg\bigwedge_{i}\bigvee_{h\in E}\psi^i_h \to \psi.
\end{equation}

Assume that $\Box_e\psi\notin X^e$. To prove that $r\nVdash\Box_e\psi$, it suffices to show that there is a run $\widehat{r}$ of the canonical protocol ${\cal P}_E$ such that $\widehat{r}=_e r$ and $\widehat{r}\Vdash \bigwedge_{i}\bigvee_{h\in E}\psi^i_h$. 

The assumption $\Box_e\psi\notin X^e$ and the maximality of set $X^e$ imply that $X^e\nvdash \Box_e\psi$. Thus, $X^e\nvdash \psi$ by Lemma~\ref{XYZ}. Hence,
set $X^e\cup \{\neg\psi\}$ is consistent. Let $M$ be any maximal consistent extension of $X^e\cup \{\neg\psi\}$. By Theorem~\ref{run exists 2}, for each $\delta\in\Delta(\Sigma)$ there is a run $\widehat{r}_\delta=\langle \widehat{X}^h, \{\widehat{f}^h_{u,\delta} \}_{u\in Inc(h)}\rangle_{h\in E}$ of the canonical protocol ${\cal P}^\delta_E$ such that $\widehat{r}=_e r$ and $\widehat{X}^h=M\cap \Phi(\Sigma,\{h\})$ for each $h\in E$. Define tuple $\widehat{r}$ to be $\langle \widehat{X}^h, \{\widehat{f}^h_{u,\delta} \}_{u\in Inc(h),\delta\in\Delta(\Sigma)}\rangle_{h\in E}$. By the definition of protocol $\cal P$, tuple $\widehat{r}$ is a run of $\cal P$. 

We next show that $\widehat{r}\Vdash \bigwedge_{i}\bigvee_{h\in E}\psi^i_h$. Suppose the opposite, then there is $i_0$ such that $\widehat{r}\nVdash \bigvee_{h\in E}\psi^{i_0}_h$. Thus, $\widehat{r}\nVdash\psi^{i_0}_h$ for each $h\in E$. Hence, by the induction hypothesis, $\psi^{i_0}_h\notin \widehat{X}^h$ for each $h\in E$. Recall that $\psi^{i_0}_h\in\Phi(\Sigma,\{h\})$ and $\widehat{X}^h$ is a maximal consistent subset of $\Phi(\Sigma,\{h\})$ for each $h\in E$. Thus, $\neg\psi^{i_0}_h\in \widehat{X}^h\subseteq M$ for each $h\in E$. Hence, $\bigwedge_{h\in E}\neg\psi^{i_0}_h\in M$ due to maximality of the set $M$. Then,
$M\vdash \neg \bigvee_{h\in E}\psi^{i_0}_h$. Hence, $M\vdash\neg \bigwedge_{i}\bigvee_{h\in E}\psi^i_h$. Therefore, $M\vdash\psi$, by statement (\ref{wedge imp psi}). The latter contradicts the choice of set $M$ being a maximal consistent extension of set $X^e\cup \{\neg\psi\}$.

\noindent $(\Leftarrow):$ Suppose that $\Box_e\psi\in X^e$. We will show that $r\Vdash\Box_e\psi$. Consider any run $\widehat{r}=\langle \widehat{X}^h, \{\widehat{f}^h_{u,\delta} \}_{u\in Inc(h),\delta\in\Delta(\Sigma)}\rangle_{h\in E}$ of the aggregated protocol ${\cal P}$ such that $\widehat{r}=_e r$. It suffices to prove that $\widehat{r}\Vdash\psi$.

Let  $\bigwedge_{i}\bigvee_{h\in E}\psi^i_h$ be a conjunctive normal form of $\psi$ such that $\psi^i_h\in\Phi(\Sigma,\{h\})$ for each $h\in E$. Then, for each $i$, the following statement can be proven using just the axioms of the propositional logic in language $\Phi(\Sigma)$
\begin{equation*}
\vdash \psi\to\bigvee_{h\in E}\psi^i_h.
\end{equation*}
By Necessitation inference rule 
\begin{equation*}
\vdash \Box_e\left(\psi\to\bigvee_{h\in E}\psi^i_h\right).
\end{equation*}
By Distributivity axiom and Modus Ponens inference rule,
\begin{equation*}
\vdash \Box_e\psi\to\Box_e\bigvee_{h\in E}\psi^i_h.
\end{equation*}
Thus, for each $i$, we have $\Box_e\bigvee_{h\in E}\psi^i_h\in X^e$ due to the assumption $\Box_e\psi\in X^e$ and the maximality of set $X^e$. Note that $\widehat{X}^e=X^e$ due to the assumption $\widehat{r}=_e r$. Hence, $\Box_e\bigvee_{h\in E}\psi^i_h\in \widehat{X}^e$. Let $\widehat{\delta}$ denote the formula $\bigvee_{h\in E}\psi^i_h$. Recall that $\widehat{r}$ is a run of protocol $\cal P$. Hence, by the definition of the aggregated protocol, tuple $\langle \widehat{X}^h, \{\widehat{f}^h_{u,\widehat{\delta}} \}_{u\in Inc(h)}\rangle_{h\in E}$ is a run of protocol ${\cal P}^{\widehat{\delta}}_{E}$, and so, by Lemma~\ref{sub protocol}, it is a run of protocol ${\cal P}^{\widehat{\delta}}_{\{e\}}$.
Then, by Theorem~\ref{XYZ2}, there is an edge $h_0\in E$ such that $\psi^i_{h_0}\in \widehat{X}^{h_0}$. Thus, by the induction hypothesis, $\widehat{r}\Vdash\psi^i_{h_0}$. Hence, $\widehat{r}\Vdash \bigvee_{h\in E}\psi^i_h$ for each $i$. Then, $\widehat{r}\Vdash \bigwedge_{i}\bigvee_{h\in E}\psi^i_h$. Therefore, $\widehat{r}\Vdash\psi$.
\end{proof}

\begin{theorem}[completeness]
For any signature $\Sigma$ and any formula $\phi\in\Phi(\Sigma)$, if $\nvdash\phi$, then there exists a protocol ${\cal P}$ over $\Sigma$ and a run $r$ of ${\cal P}$ such that $r\nVdash\phi$. 
\end{theorem}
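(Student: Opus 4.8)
The plan is to obtain the countermodel by aggregating the canonical runs supplied by Theorem~\ref{run exists} and then reading off truth values with Theorem~\ref{main induction}. Assume $\nvdash\phi$. Then $\{\neg\phi\}$ is consistent, so by a Lindenbaum argument it extends to a maximal consistent set $M\subseteq\Phi(\Sigma)$ with $\neg\phi\in M$. For each edge $e$, the set $M\cap\Phi(\Sigma,\{e\})$ is a maximal consistent subset of $\Phi(\Sigma,\{e\})$, because $\bot\in\Phi(\Sigma,\{e\})$ and $\Phi(\Sigma,\{e\})$ is closed under $\to$ (Definition~\ref{phi(S,A)}).

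First I would construct a run of the aggregated protocol matching $M$ edge-by-edge. For every $\delta\in\Delta(\Sigma)$, Theorem~\ref{run exists} (with $F=E$) yields a run $\widehat r_\delta=\langle X^e,\{f^e_{u,\delta}\}_{u\in Inc(e)}\rangle_{e\in E}$ of ${\cal P}^\delta_E$ with $X^e=M\cap\Phi(\Sigma,\{e\})$ for all $e$. The $X^e$-coordinates coincide across all $\delta$, so the family $\{\widehat r_\delta\}_{\delta\in\Delta(\Sigma)}$ layers into a single tuple $r=\langle X^e,\{f^e_{u,\delta}\}_{u\in Inc(e),\delta\in\Delta(\Sigma)}\rangle_{e\in E}$. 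By Definition~\ref{all-values} each $w_e$ is a value of ${\cal P}$, and the local conditions of ${\cal P}$ hold at every vertex because they hold in each $\delta$-slice (each $\widehat r_\delta$ being a run of ${\cal P}^\delta_E$); hence $r$ is a run of ${\cal P}$.

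Next I would prove the truth lemma $r\Vdash\psi\iff\psi\in M$ for all $\psi\in\Phi(\Sigma)$. Every such $\psi$ is built from $\bot$ and implications whose ``leaves'' are either atoms $p\in P_e$ or formulas $\Box_e\chi$; in both cases the leaf lies in $\Phi(\Sigma,\{e\})$ for some $e$, so Theorem~\ref{main induction} gives $r\Vdash\theta\iff\theta\in X^e=M\cap\Phi(\Sigma,\{e\})\iff\theta\in M$ for every leaf $\theta$. A routine induction over the propositional structure of $\psi$, using the maximality and consistency of $M$ to handle $\bot$ and $\to$, then extends this equivalence to all of $\Phi(\Sigma)$. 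Applying it to $\phi$: since $\neg\phi\in M$ we have $\phi\notin M$, hence $r\nVdash\phi$, which completes the proof.

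I expect the only real work to lie upstream rather than here: this final step is a short bookkeeping argument once Theorem~\ref{run exists} and Theorem~\ref{main induction} are available, its two nontrivial observations being (i) that an arbitrary $\phi$ reduces to a Boolean combination of single-edge formulas already covered by Theorem~\ref{main induction}, and (ii) that the canonical runs for different parameters $\delta$ agree on their maximal-consistent-set coordinates and therefore aggregate without conflict. The heavy lifting --- the flow re-routing constructions --- has been carried out in the proofs of Theorems~\ref{run exists}, \ref{run exists 2}, and~\ref{main induction}.
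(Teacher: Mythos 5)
Your proposal is correct and follows essentially the same route as the paper: construct the aggregated run by applying Theorem~\ref{run exists} (with $F=E$) to a maximal consistent extension $M$ of $\{\neg\phi\}$ for every $\delta\in\Delta(\Sigma)$, observe that the $\delta$-slices share their maximal-consistent-set coordinates and hence aggregate into a run of $\mathcal{P}$, and then read off the falsity of $\phi$ at that run via Theorem~\ref{main induction}. The only, inessential, difference is the final bookkeeping: the paper rewrites $\neg\phi$ in conjunctive normal form with single-edge disjuncts and applies Theorem~\ref{main induction} to those disjuncts, whereas you prove the equivalence $r\Vdash\psi\iff\psi\in M$ by a propositional induction whose base cases (atoms $p\in P_e$ and formulas $\Box_e\chi$, all lying in $\Phi(\Sigma,\{e\})$) are handled by Theorem~\ref{main induction}; both reductions are valid and of equal weight.
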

\begin{proof}
Suppose that $\nvdash\phi$. Let $M$ be a maximal consistent subset of $\Phi(\Sigma)$ containing the formula $\neg\phi$. Assume that $\bigwedge_i\bigvee_{e\in E}\phi^i_e$ is the conjunctive normal form of the formula $\neg\phi$ such that $\phi^i_e\in\Phi(\Sigma, \{e\})$ for each $i$ and each $e\in E$. Since $\neg\phi\in M$, 
for each $i$ there exists $e_i\in E$ such that $\phi^i_{e_i}\in M$.
By Theorem~\ref{run exists}, for each $\delta\in\Delta(\Sigma)$, there exists a run $r^\delta=\langle X^h, \{f^h_{u} \}_{u\in Inc(h)}\rangle_{h\in E}$ of the canonical protocol ${\cal P}^\delta_E$ such that $X^h= M\cap \Phi(\Sigma, \{h\})$ for all $h\in E$. Thus, $\phi^i_{e_i}\in X^{e_i}$ for each $i$. Consider tuple $r=\langle X^h, \{f^h_{u,\delta} \}_{u\in Inc(h),\delta\in\Delta(\Sigma)}\rangle_{h\in E}$. By the definition of the aggregated protocol, tuple $r$ is a run of protocol $\cal P$.
 Hence, $r\vDash\phi^i_{e_i}$ for each $i$, by Theorem~\ref{main induction}. Therefore, $r\Vdash \bigwedge_i\bigvee_{e\in E}\phi^i_e$ and so $r\Vdash\neg\phi$. 
\end{proof}

\section{Conclusion}\label{conclusion section}

In this article we have developed a formal modal logical framework for reasoning about information flow in communication networks with a fixed topological structure. Our main results are the soundness and the completeness of this logical system. At the core of the proof of the completeness is a well-known network flow protocol. A natural possible extension of this work is to develop a similar system for directed graphs that represent networks with one-way communication channels. Another possible extension is a distributed knowledge system with a modality $\Box_{A}$ in which the statement $\Box_A\phi$ is interpreted as ``any agent that eavesdrops on all channels in set $A$ knows that $\phi$ is true".

Another possible direction for the future work is to develop logical frameworks for reasoning about information flow in more specialized settings. An example of such a setting is the influence flow in social networks. The influence in social networks is usually modeled by a relatively simple and very specific form of ``local conditions" such as those in commonly used threshold model~\cite{v96sn,macy91asr,kkt03sigkdd,am14fi,g78ajs,s78}. A logical framework for such a setting is likely to include more powerful version of Gateway axiom. The canonical network construction for the proof of the completeness presented in this article is very unlikely to be adoptable to a much more restricted interpretation of local conditions found in social network.

\bibliographystyle{unsrt}
\bibliography{sp}

\end{document}